\newif\ifdraft \drafttrue
\newif\iffull \fulltrue

\iffull
\documentclass[11pt]{article}
\else
\documentclass{sig-alternate}
\fi

\newenvironment{mathdisplayfull}{\iffull \[ \else $ \fi}{\iffull \]\else $ \ignorespaces\fi}
\newcommand{\shortbreak}{{\iffull \else \\ \fi}}
\newcommand{\shortpagebreak}{{\iffull \else \vfill\eject \fi }}
\newcommand{\longproof}{{\iffull Proof \fi}}
\newcommand{\longquad}{{\iffull \qquad \else \;\;  \fi}}
\newcommand{\fullfrac}[2]{\iffull \frac{#1}{#2} \else #1 / #2 \fi}

\iffull
\newcommand{\SUBSECTION}{\subsection}
\else
\newcommand{\SUBSECTION}{\paragraph*}
\fi

\usepackage{algorithm, algorithmic}
\iffull
\usepackage{amsmath, amssymb, amsthm}
\usepackage{fullpage}
\fi
\usepackage{verbatim}
\usepackage{color}
\usepackage{times}
\definecolor{DarkGreen}{rgb}{0.1,0.5,0.1}
\definecolor{DarkRed}{rgb}{0.5,0.1,0.1}
\definecolor{DarkBlue}{rgb}{0.1,0.1,0.5}
\usepackage[]{hyperref}
\hypersetup{
    unicode=false,          
    pdftoolbar=true,        
    pdfmenubar=true,        
    pdffitwindow=false,      
    pdftitle={},    
    pdfauthor={}
    pdfsubject={},   
    pdfnewwindow=true,      
    pdfkeywords={keywords}, 
    colorlinks=true,       
    linkcolor=DarkRed,          
    citecolor=DarkGreen,        
    filecolor=DarkRed,      
    urlcolor=DarkBlue,          
}

\ifdraft
\newcommand{\mynote}[1]{\marginpar{\tiny\sf #1}}
\else
\newcommand{\mynote}[1]{}
\fi


\newcommand\N{\mathbb{N}}
\newcommand\R{\mathbb{R}}
\newcommand\cA{\mathcal{A}}
\newcommand\cB{\mathcal{B}}
\newcommand\cE{\mathcal{E}}
\newcommand\cF{\mathcal{F}}

\newcommand\cQ{\mathcal{Q}}
\newcommand\cR{\mathcal{R}}
\newcommand\cX{\mathcal{X}}

\newcommand\cU{\mathcal{U}}

\newcommand{\cI}{\mathcal{I}}

\newcommand{\bfQ}{\mathbf{Q}}

\newcommand\bits{\{0,1\}}
\newcommand{\getsr}{\gets_{\mbox{\tiny R}}}
\newcommand\set[1]{\left\{#1\right\}} 
\newcommand{\from}{:}
\newcommand{\dist}[1]{\widetilde{#1}}
\newcommand{\proj}[1]{\Gamma #1}
\newcommand{\projs}[1]{\Gamma_{s} #1}
\renewcommand{\tilde}{\widetilde}

\newcommand{\ex}[1]{\mathbb{E}\left[#1\right]}
\DeclareMathOperator*{\Expectation}{\mathbb{E}}
\newcommand{\Ex}[2]{\Expectation_{#1}\left[#2\right]}

\newcommand{\prob}[1]{\mathrm{Pr}\left[#1\right]}

\newcommand{\san}{\cA}

\newcommand{\univ}{\cX}

\newcommand{\eps}{\varepsilon}

\newcommand{\Lap}{\mathrm{Lap}}
\renewcommand{\hat}{\widehat}

\newcommand{\id}{\mathrm{id}}

\newcommand{\INDSTATE}[1][1]{\STATE\hspace{#1\algorithmicindent}}

\newtheorem{theorem}{Theorem}[section]
\newtheorem{lemma}[theorem]{Lemma}

\newtheorem{claim}[theorem]{Claim}
\newtheorem{remark}[theorem]{Remark}
\newtheorem{corollary}[theorem]{Corollary}

\iffull
\theoremstyle{definition}
\fi
\newtheorem{definition}[theorem]{Definition}

\title{Differential Privacy for the Analyst via \\ Private Equilibrium
  Computation\thanks{A full version of this work appears on
    arXiv~\cite{HRU12}.}}
\iffull
\author{
  Justin Hsu\thanks{
    University of Pennsylvania Department of Computer and Information Sciences.
    Supported in part by NSF grant CNS-1065060.
    Email: \href{mailto:justhsu@cis.upenn.edu}{justhsu@cis.upenn.edu}.}
  \and Aaron Roth\thanks{
    University of Pennsylvania Department of Computer and Information Sciences.
    Supported in party by NSF grant CCF-1101389 and CNS-1065060.
    Email: \href{mailto:aaroth@cis.upenn.edu}{aaroth@cis.upenn.edu}.}
  \and Jonathan Ullman\thanks{
    Harvard University School of Engineering and Applied Sciences.
    Supported by NSF grant CNS-1237235 and a Siebel Scholarship.
    Email: \href{mailto:jullman@seas.harvard.edu}{jullman@seas.harvard.edu}.}}
\else
\numberofauthors{3}
\author{
\alignauthor
Justin Hsu\thanks{Supported in part by NSF grant CNS-1065060.} \\
\affaddr{University of Pennsylvania} \\
\email{justhsu@cis.upenn.edu}
\and
\alignauthor
Aaron Roth\thanks{Supported in part by NSF grant CCF-1101389 and CNS-1065060.} \\
\affaddr{University of Pennsylvania} \\
\email{aaroth@cis.upenn.edu}\\
\alignauthor
Jonathan Ullman\thanks{Harvard University School of Engineering and Applied
Sciences. Supported by NSF grant CNS-1237235 and a Siebel Scholarship.} \\
\affaddr{Harvard University} \\
\email{jullman@seas.harvard.edu}
}
\fi

\begin{document}

\iffull
\else
\conferenceinfo{STOC'13,} {June 1-4, 2013, Palo Alto, California, USA.} 
\CopyrightYear{2013} 
\crdata{978-1-4503-2029-0/13/06} 
\clubpenalty=10000 
\widowpenalty = 10000
\fi

\maketitle

\begin{abstract}
We give new mechanisms for answering exponentially many queries from multiple
analysts on a private database, while protecting differential privacy both for
the individuals in the database and for the analysts. That is, our mechanism's
answer to each query is nearly insensitive to changes in the queries asked by
other analysts.  Our mechanism is the first to offer differential privacy on the
joint distribution over analysts' answers, providing privacy for data analysts
even if the other data analysts collude or register multiple accounts. In some
settings, we are able to achieve nearly optimal error rates (even compared to
mechanisms which do not offer analyst privacy), and we are able to extend our
techniques to handle non-linear queries. Our analysis is based on a novel view
of the private query-release problem as a two-player zero-sum game, which may be
of independent interest.
\end{abstract}

\iffull
\else
\category{F.2.0}{ANALYSIS OF ALGORITHMS AND PROBLEM COMPLEXITY}{General}
\keywords{Differential Privacy; Zero-sum Game; Bregman Projection}
\fi

\section{Introduction}

Consider a tracking network that wants to sell a database of consumer data to
several competing analysts conducting market research.  The administrator of the
tracking network faces many opposing constraints when deciding how to provide
analysts with this data. For legal reasons, the privacy of the individuals
contained in her database must be protected. At the same time, the analysts must
be able to query the database and receive useful answers. Finally, the privacy
of the \emph{queries} made to the database must be protected, since the analysts
are in competition and their queries may be disclosive of proprietary
strategies.

This setting of {\em analyst privacy} was recently introduced in a beautiful
paper of Dwork, Naor, and Vadhan \cite{DNV12}. They showed that differentially
private \emph{stateless} mechanisms --- which answer each query
\emph{independently} of the previous queries --- can only give accurate answers
when the number of queries is at most quadratic in the size of the database.
This result rules out mechanisms that perfectly protect the privacy of the
queries, while accurately answering exponentially many queries --- answers must
depend on the state, and hence on the previous queries. However, it turns out
that mechanisms that offer a differential-privacy-like guarantee with respect to
the queries are possible: Dwork, et al. \cite{DNV12} give such a mechanism, with
the guarantee that the marginal distribution on answers given to each analyst is
differentially private with respect to the set of queries made by all of the
other analysts.  Their mechanism is capable of answering exponentially many
linear queries with error $\tilde{O}(1/n^{1/4})$, where $n$ is the number of
records in the database.  A {\em linear query} is a $(1/n)$-sensitive query of
the form ``What fraction of the individual records in the database satisfy some
property $q$?'', so their mechanism gives non-trivial accuracy.

However, they note that their mechanism has several shortcomings. First, it does
not promise differential privacy on the \emph{joint distribution} over multiple
analysts' answers. Therefore, if multiple analysts collude, or if a single
malicious analyst registers several false accounts with the mechanism, then the
mechanism no longer guarantees query privacy. Second, their mechanism is less
accurate than known, non-analyst private mechanisms --- analyst privacy is
achieved at a cost to accuracy. Finally, the mechanism can only answer linear
queries, rather than general low-sensitivity queries.

In this paper, we address all of these issues. First, we consider mechanisms
which guarantee \emph{one-query-to-many-analyst} privacy: for each analyst $a$,
the joint distribution over answers given to \emph{all other} analysts $a' \neq
a$ is differentially private with respect to the change of a single query asked
by analyst $a$.  This privacy guarantee is incomparable to that of Dwork, et al.
\cite{DNV12}: it is weaker, because we protect the privacy of a single query,
rather than protecting the privacy of all queries asked by analysts $a' \neq a$.
However, it is also stronger, because the privacy of one query from an analyst
$a$ is preserved even if all other analysts $a' \neq a$ collude or register
multiple accounts.  Our first result is a mechanism in this setting, with error
at most $\tilde{O}(1/\sqrt{n})$ for answering exponentially many linear queries.
This error is optimal up to polylogarithmic factors, even when comparing to
mechanisms that only guarantee data privacy.

We then extend our techniques to \emph{one-analyst-to-many-analyst} privacy,
where we require that the mechanism preserve the privacy of analyst when he
changes {\em all} of his queries, even if {\em all other} analysts collude. Our
second result is a mechanism in this setting, with error $\tilde{O}(1/n^{1/3})$.
Although this error rate is worse than what we achieve for
one-query-to-many-analyst privacy (and not necessarily optimal), our mechanism
is still capable of answering exponentially many queries with non-trivial
accuracy guarantees, while satisfying both data and analyst privacy.

These first two mechanisms operate in the \emph{non-interactive} setting, where
the queries from every analyst are given to the mechanism in a single batch.
Our final result is a mechanism in the \emph{online} setting that satisfies
one-query-to-many analyst privacy.  The mechanism accurately answers a (possibly
exponentially long) \emph{fixed} sequence of low-sensitivity queries.  Although
our mechanism operates as queries arrive online, it cannot tolerate
adversarially chosen queries (i.e.~it operates in the same regime as the
\emph{smooth multiplicative weights} algorithm of Hardt and
Rothblum~\cite{HR10}).  For linear queries, our mechanism gives answers with
error at most $\tilde{O}(1/n^{2/5})$. For answering general queries with
sensitivity $1/n$ (the sensitivity of a linear query), the mechanism guarantees
error at most $\tilde{O}(1/n^{1/10})$.

When answering $k$ queries on a database $D \in \univ^n$ consisting of $n$
records from a data universe $\univ$, our offline algorithms run in time
$\tilde{O}(n \cdot (|\univ| + k))$ and our online algorithm runs in time
$\tilde{O}(|\univ| + n)$ per query.  These running times are essentially optimal
for mechanisms that answer more than $\omega(n^2)$ arbitrary linear
queries~\cite{Ullman}, assuming (exponentially hard) one-way functions exist.

\SUBSECTION{Our Techniques}
To prove our results, we take a novel view of private query release as a
two player zero-sum game between a \emph{data player} and a \emph{query
player}. For each element of the data universe $x \in X$, the data player has
an action $a_x$. Intuitively, the data player's mixed strategy will be his
approximation of the true database's distribution.

On the other side, for each query $q \in \cQ$, the query player has two actions:
$a_q$ and $a_{\neg q}$. The two actions for each query allow the query player to
penalize the data player's play, both when the approximate answer to $q$ is too
high, and when it is too low --- the query player tries to play queries for
which the data player's approximation performs poorly. Formally, we define
the cost matrix by
\begin{mathdisplayfull}
  G(a_q,a_x) = q(x) - q(D)
\end{mathdisplayfull}%
and
\begin{mathdisplayfull}
  G(a_{\neg q},a_x) = q(D) - q(x),
\end{mathdisplayfull}%
where $D$ is the private database. The query player wishes to maximize the cost,
whereas the database player wishes to minimize the cost. We show that the value
of this game is $0$, and that any $\rho$-approximate equilibrium strategy for
the database player corresponds to a database that answers every query $q \in
\cQ$ correctly up to additive error $\rho$.  Thus, given any pair of
$\rho$-approximate equilibrium strategies, the strategy for the data player will
constitute a database (a distribution over $\mathcal{X}$) that answers every
query to within error $O(\rho)$.

Different privacy constraints for the private query release problem can be
mapped into privacy constraints for solving two-player zero-sum games. Standard
private linear query release corresponds to privately computing an approximate
equilibrium, where privacy is preserved with respect to changing every cost in
the game matrix by at most $1/n$. Likewise, query release while protecting
\emph{one-query-to-many-analyst} privacy corresponds to computing an approximate
equilibrium strategy, where privacy is with respect to an arbitrary change in
two \emph{rows} of the game matrix --- changing a single query $q$ changes the
payoffs for actions $a_q$ and $a_{\neg q}$. Our main result can be viewed as an
algorithm for privately computing the equilibrium of a zero-sum game while
protecting the privacy of strategies of the players, which may be of independent
interest.

To construct an approximate equilibrium, we use a well-known result: when two
no-regret algorithms are played against each other in a zero-sum game, their
empirical play distributions quickly converge to an approximate equilibrium.
Thus, to compute an equilibrium of $G$, we have the query player and the data
player play against each other using no-regret algorithms, and output the
empirical play distribution of the data player as the hypothesis database. We
face several obstacles along the way.

First, no-regret algorithms maintain a \emph{state} --- a distribution over
actions, which is not privacy preserving.  (In fact, it is computed
deterministically from inputs that may depend on the data or queries.)  Previous
approaches to private query release have addressed this problem by adding noise
to the \emph{inputs} of the no-regret algorithm.

In our approach, we crucially rely on the fact that {\em sampling} actions from
the distributions maintained by the multiplicative weights algorithm is privacy
preserving. Intuitively, privacy will come from the fact that the multiplicative
weights algorithm does not adjust the weight on any action too aggressively,
meaning that when we view the weights as defining a distribution over actions,
changing the losses experienced by the algorithm in various ways will have a
limited effect on the distribution over actions.  We note that this property is
not used in the private multiplicative weights mechanism of Hardt and Rothblum
\cite{HR10}, who use the distribution itself as a hypothesis.  Indeed, without
the constraint of query privacy, \emph{any} no-regret algorithm can be used in
place of multiplicative weights \cite{RR10,GRU12}, which is not the case in our
setting.

Second, sampling from the multiplicative weights algorithm is private only if
the changes in losses are small. Intuitively, we must ensure that changing one
query from one analyst does not affect the losses experienced by the data player
too dramatically, so that samples from the multiplicative weights algorithm will
indeed ensure query privacy.  To enforce this requirement, we force the query
player to play mixed strategies from the set of \emph{smooth} distributions,
which do not place too much weight on any single action.  It is known that
playing any no-regret algorithm, but projecting into the set of smooth
distributions in the appropriate way (via a Bregman projection), will ensure
no-regret with respect to any smooth distribution on actions. For comparison,
no-regret is typically defined with respect to the best single action, which is
a not a smooth distribution. Thus, our regret guarantee is weaker.

The result of this simulation is an approximate equilibrium strategy for the
data player, in the sense that it achieves approximately the value of the game
when played against \emph{all but $s$} strategies of the query player, where
$1/s$ is the maximum probability that the query player may assign to any action.
This corresponds to a synthetic database, which we release to all analysts, that
answers {\em all but $s$} queries accurately. Then, since we choose $s$ to be small,
we can answer the mishandled queries with the sparse vector technique
\cite{DNRRV09,RR10,HR10} adding noise only $\tilde{O}(\sqrt{s}/n)$ to these $s$
queries. The result is a nearly optimal error rate of $\tilde{O}(1/\sqrt{n})$

Our techniques naturally extend to one-analyst-to-many-analyst privacy by making
the actions of the query player correspond to entire workloads of queries, one
for each analyst, where the query player picks analysts that have at least one
query that has high error on the current hypothesis. Like before, a small number
of analysts will have queries that have high error, which we handle with a
separate private query release mechanism for each analyst.

Finally, we use these techniques to convert the private multiplicative weights
algorithm of Hardt and Rothblum~\cite{HR10} into an online algorithm that
preserves one-query-to-many-analyst privacy, and also answers arbitrary
low-sensitivity queries. These last two extensions both give first-of-their-kind
results, but at some degradation in the accuracy parameters: we do not obtain
$O(1/\sqrt{n})$ error rate.  We leave it as an open problem to achieve
$\tilde{O}(1/\sqrt{n})$ error in these settings, or show that the accuracy cost
is necessary.

\SUBSECTION{Related Work}
There is an extremely large body of work on differential privacy \cite{DMNS06}
that we do not attempt to survey.  The study of differential privacy was
initiated by a line of work \cite{DN03,BDMN05,DMNS06} culminating in the
definition by Dwork, Mcsherry, Nissim, and Smith \cite{DMNS06}, who also
introduced the basic technique of answering low-sensitivity queries using the
Laplace mechanism. The Laplace mechanism gives approximate answers to nearly
$O(n^2)$ queries while preserving differential privacy.

A recent line of work \cite{BLR08,DNRRV09,DRV10,RR10,HR10,GHRU11,GRU12, HLM12}
has shown how to accurately answer almost \emph{exponentially many} queries
usefully while preserving differential privacy of the data. Some of this work
\cite{RR10,HR10,GRU12} rely on \emph{no-regret} algorithms --- in particular,
Hardt and Rothblum \cite{HR10} introduced the multiplicative weights technique
to the differential privacy literature, which we use centrally.

However, we make use of multiplicative weights in a different way from prior
work in private query release --- we simulate play of a two-player zero-sum game
using {\em two} copies of the multiplicative weights algorithm, and rely on the
fast convergence of such play to approximate Nash equilibrium \cite{FS}. We also
rely on the fact that Bregman projections onto a convex set $K$ can be used in
conjunction with the multiplicative weights update rule\iffull\footnote{Indeed,
  they can be used in conjunction with any no-regret algorithm in the family of
regularized empirical risk minimizers.} \fi \space to achieve no regret with
respect to the best element in the set $K$ \cite{RakhlinNotes}.  Finally, we use
that \emph{samples} from the multiplicative weights distribution can be viewed
as samples from the exponential mechanism of McSherry and Talwar \cite{MT07},
and hence are privacy preserving.

Our use of Bregman projections into smooth distributions is similar to its use
in \emph{smooth boosting}. Barak, Hardt, and Kale \cite{BHK09} use Bregman
projections in a similar way, and the weight capping used by Dwork, Rothblum,
and Vadhan \cite{DRV10} in their analysis of \emph{boosting for people} can be
viewed as a Bregman projection.

The most closely related paper to ours is the beautiful recent work of Dwork,
Naor, and Vadhan \cite{DNV12}, who introduce the idea of analyst privacy. They
show that any algorithm which can answer $\omega(n^2)$ queries to non-trivial
accuracy must maintain \emph{common state} as it interacts with many data
analysts, and hence potentially violates the privacy of the analysts.
Accordingly, they give a stateful mechanism which promises many-to-one-analyst
privacy, and achieves per-query error $\tilde{O}(1/n^{1/4})$ for linear queries
--- their mechanism promises differential privacy on the marginal distribution
of answers given to any single analyst, even when all other analysts change all
of their queries. However, if multiple analysts collude, or if a single analyst
can falsely register under many ids, then the privacy guarantees degrade
quickly --- privacy is not promised on the \emph{joint distribution} on all
analysts answers. Lifting this limitation, improving the error bounds and
extending analyst privacy to non-linear queries, are all stated as open
questions.

Finally, the varying notions of analyst privacy we use can be interpreted in the
context of two-party differential privacy, introduced by McGregor, et
al.~\cite{mcgregor}. If we consider a single analyst as one party, sending
private queries to a second party consisting of the mechanism and all the other
parties indirectly, the many-to-one-analyst privacy guarantee is equivalent
to privacy of the first party's view. Here, the privacy must be protected even
if the second party changes its inputs arbitrarily, i.e., the other analysts
change their queries arbitrarily.

On other hand, if we consider all but one analyst as the first party, sending
queries to the mechanism and the remaining analyst, one-query-to-many-analyst
privacy is equivalent to the first party's view being private when the single
analyst changes a query. One-analyst-to-many-analyst privacy is similar: the
first party's view must be private when the second party changes all of its
queries.

\section{Preliminaries}

\paragraph*{Differential Privacy and Analyst Differential Privacy}
Let a \emph{database} $D \in \cX^n$ be a collection of $n$ records (rows) \break
$\{x^{(1)}, \dots, x^{(n)}\}$ from a \emph{data universe} $\cX$. Two databases $D,D'
\in \cX^n$ are \emph{adjacent} if they differ only on a single row, which we
denote by $D \sim D'$.

A mechanism $\san: \cX^n \to \cR$ takes a database as input and outputs some
data structure in $\cR$.  We are interested in mechanisms that satisfy
\emph{differential privacy}.
\begin{definition}
  \label{def:dp}
A mechanism $\san\from \cX^n \to \cR$ is \emph{$(\eps, \delta)$-differentially
private} if for every two adjacent databases $D \sim D' \in \cX^n$ and every
subset $S \subseteq \cR$,
\[
\prob{\san(D) \in S} \leq e^{\eps} \prob{\san(D') \in S} + \delta.
\]
\end{definition}

In this work we construct mechanisms that ensure differential privacy \emph{for
the analyst} as well as for the database.  To define analyst privacy, we first
define \emph{many-analyst mechanisms}. Let $\bfQ$ be the set of all allowable
queries. The mechanism takes $m$ sets of queries $\cQ_1, \dots, \cQ_m$ and
returns $m$ outputs $Z_1, \dots, Z_m$, where $Z_j$ contains answers to the
queries $\cQ_j$.  Thus, a many-analyst mechanism has the form $\san \from
\univ^n \times (\bfQ^*)^m \to \cR^m$.  Given sets of queries $\cQ_1, \dots,
\cQ_m$, let $\cQ = \bigcup_{j=1}^{m} \cQ_j$ denote the set of all queries.  In
guaranteeing privacy even in the event of collusion, it will be useful to refer
to the output given to all analysts other than some analyst $i$.  For each $\id
\in [m]$ we write $\san(D,\cQ)_{-\id}$ to denote
$(Z_1,\ldots,Z_{\id-1},Z_{\id+1},\ldots,Z_m)$, the output given to all analysts
other than $\id$.

Let $\cQ = \cQ_1, \dots, \cQ_m$ and $\cQ' = \cQ'_1, \dots, \cQ'_m$.  We say that
$\cQ$ and $\cQ'$ are \emph{analyst-adjacent} if there exists $\id^* \in [m]$
such that for every $\id \neq \id^*$, $\cQ_{\id} = \cQ'_{\id}$.  That is, $\cQ
\sim \cQ'$ are analyst adjacent if they differ only on the queries asked by one
analyst.  Intuitively, a mechanism satisfies one-analyst-to-many-analyst privacy
if changing \emph{all the queries asked by analyst $\id^*$} does not
significantly affect the output given to \emph{all analysts other than $\id^*$}.
\begin{definition}
A many-analyst mechanism $\san$ satisfies \emph{$(\eps,
\delta)$-one-analyst-to-many-analyst privacy} if for every database $D \in
\univ^n$, every two analyst-adjacent query sequences $\cQ \sim \cQ'$ that differ
only on one set of queries $\cQ_{\id}, \cQ'_{\id}$, and every $S \subseteq
\cR^{m-1}$,
\[
\prob{\san(D, \cQ)_{-\id} \in S} \leq e^{\eps} \prob{\san(D, \cQ')_{-\id} \in S} + \delta.
\]
\end{definition}

Let $\cQ = \cQ_1, \dots, \cQ_m$ and $\cQ' = \cQ'_1, \dots, \cQ'_m$.  We say that
$\cQ$ and $\cQ'$ are \emph{query-adjacent} if there exists $\id^*$ such that for
every $\id \neq \id^*$, $\cQ_{\id} = \cQ'_{\id}$ and $| \cQ_{\id^*} \triangle
\cQ'_{\id^*} | \leq 1$.  That is, $\cQ \sim \cQ'$ are query adjacent if they
differ only on one of the queries.  Intuitively, we say that a mechanism
satisfies one-query-to-many-analyst privacy if changing \emph{one query asked by
analyst $\id^*$} does not significantly affect the output given to \emph{all
analysts other than $\id^*$}.
\begin{definition}
A many-analyst mechanism $\san$ satisfies \emph{$(\eps,
\delta)$-one-query-to-many-analyst privacy} if for every database $D \in
\univ^n$, every two query-adjacent query sequences $\cQ \sim \cQ'$ that differ
only on one query in $\cQ_{\id}, \cQ'_{\id}$, and every $S \subseteq \cR^{m-1}$,
\[
\prob{\san(D, \cQ)_{-\id} \in S} \leq e^{\eps} \prob{\san(D, \cQ')_{-\id} \in S} + \delta.
\]
\end{definition}

In our proofs of both differential privacy and analyst privacy, we will often
establish that for any $D \sim D'$, the two distributions $\mathcal{A}(D),
\mathcal{A}(D')$ are such that with probability at least $1-\delta$ over $y
\getsr M(D)$,
\[
  \left| \ln\left(\frac{\Pr[\mathcal{A}(D) = y]}{\Pr[\mathcal{A}(D') =
  y]}\right)\right| \leq \eps.
\]
This condition implies $(\eps, \delta)$-differential privacy~\cite{DRV10}.

\SUBSECTION{Queries and Accuracy}

In this work we consider two types of queries: \emph{low-sensitivity queries}
and \emph{linear queries}.  Low-sensitivity queries are parameterized by $\Delta
\in [0,1]$: a \emph{$\Delta$-sensitive query} is any function $q \from
\univ^n \to [0,1]$ such that
\begin{mathdisplayfull}
\max_{D \sim D'} |q(D) - q(D')| \leq \Delta.
\end{mathdisplayfull}%
A \emph{linear query} is a particular type of low-sensitivity query, specified
by a function $q\from \cX \to [0,1]$.  We define the evaluation of $q$
on a database $D \in \cX^n$ to be
\begin{mathdisplayfull}
q(D) = \frac{1}{n} \sum_{i=1}^{n} q(x^{(i)}),
\end{mathdisplayfull}%
so a linear query is evidently $(1/n)$-sensitive.

Since $\san$ may output a data structure, we must specify how to answer queries
in $\cQ$ from the output $\san(D)$.  Hence, we require that there is an
\emph{evaluator} $\cE\from \cR \times \cQ \to \R$ that estimates $q(D)$ from the
output of $\san(D)$.  For example, if $\san$ outputs a vector of ``noisy
answers'' $Z = \{ q(D) + Z_q | q \in \cQ \}$, where $Z_q$ is a random variable for
each $q \in \cQ$, then $\cR = \R^{\cQ}$ and $\cE(Z, q)$ is the $q$-th component
of $Z$.  Abusing notation, we write $q(Z)$ and $q(\san(D))$ as shorthand for
$\cE(Z, q)$ and $\cE(\san(D),q)$, respectively.
\begin{definition} 
  \label{def:acc}
  An output $Z$ of a mechanism $\san(D)$ is \emph{$\alpha$-accurate} for query
  set $\cQ$ if
  $
  |q(Z) - q(D)| \leq \alpha
  $
  for every $q \in \cQ$.  A mechanism is \emph{$(\alpha, \beta)$-accurate} for
  query set $\cQ$ if for every database $D$,
  \[
  \prob{\forall q \in \cQ, |q(\san(D)) - q(D)| \leq \alpha} \geq 1-\beta,
  \]
  where the probability is taken over the coins of $\san$.
\end{definition}

\SUBSECTION{Differential Privacy Tools}
We will use a few previously known differentially private mechanisms.  When we
need to answer a small number of queries we will use the well-known Laplace
mechanism~\cite{DMNS06}, with an improved analysis from~\cite{DRV10}.
\begin{lemma} 
  \label{lem:laplace}
  Let $\cF = \set{f_1, \dots, f_{|\cF|}}$ be a set of $\Delta$-sensitive queries
  $f_i \from \univ^n \to [0,1]$, and let $D \in \univ^n$ be a database. Let
  $\epsilon,\delta \leq 1$. Then the mechanism $\cA_{\mathrm{Lap}}(D, \cF)$ that
  outputs
  \begin{mathdisplayfull}
    f_i(D) + \Lap\left(\frac{\Delta \sqrt{8 |\cF| \log(1/\delta)}}{\eps}\right)
  \end{mathdisplayfull}
  for every $f_i \in \cF$ is:
  \begin{enumerate}
    \item $(\eps, \delta)$-differentially private, and
    \item $(\alpha, \beta)$-accurate for any $\beta \in (0,1]$ and \shortbreak
      $\alpha = \eps^{-1} \Delta \sqrt{8 |\cF| \log(1/\delta)}
      \log(|\cF|/\beta)$.
  \end{enumerate}
\end{lemma}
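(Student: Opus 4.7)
The plan is to verify the two claims separately: privacy by advanced composition, and accuracy by a standard Laplace tail bound combined with a union bound.

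For privacy, I would first recall that releasing $f_i(D) + \Lap(\Delta/\eps_0)$ for a single $\Delta$-sensitive query $f_i$ is $(\eps_0,0)$-differentially private; this is the basic guarantee of the Laplace mechanism from \cite{DMNS06}, and follows immediately from bounding the ratio of Laplace densities at shifts of size at most $\Delta$. With the noise scale chosen in the lemma, each coordinate's release is $(\eps_0,0)$-DP for
\begin{mathdisplayfull}
\eps_0 = \frac{\eps}{\sqrt{8|\cF|\log(1/\delta)}}.
\end{mathdisplayfull}
I would then invoke the advanced composition theorem of Dwork, Rothblum, and Vadhan \cite{DRV10}, which states that the $k$-fold adaptive composition of $(\eps_0,0)$-DP mechanisms satisfies $(\eps,\delta)$-DP for $\eps = \sqrt{2k\ln(1/\delta)}\,\eps_0 + k\eps_0(e^{\eps_0}-1)$. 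Here $k = |\cF|$; under the assumption $\eps,\delta \leq 1$ the second term is dominated by the first (since $\eps_0$ is small), and the choice of noise scale was made precisely so that the first term is at most $\eps$, with the factor $8$ rather than $2$ absorbing the lower-order contribution. This yields the claimed $(\eps,\delta)$-differential privacy.

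For accuracy, I would use the standard tail bound $\Pr[|Z| > t] \leq \exp(-t/b)$ for $Z \sim \Lap(b)$. Setting $b = \Delta\sqrt{8|\cF|\log(1/\delta)}/\eps$ and $t = b\log(|\cF|/\beta)$, the tail bound gives $\Pr[|Z| > t] \leq \beta/|\cF|$. A union bound over the $|\cF|$ independent Laplace noises then ensures that all errors simultaneously lie within $t = \alpha$ with probability at least $1-\beta$, which is exactly $(\alpha,\beta)$-accuracy per Definition~\ref{def:acc}.

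Neither step is a serious obstacle; the only subtlety is verifying that the constants in the noise scale line up with the constants in the DRV10 composition bound, so that one truly obtains $\eps$ (and not some slightly larger quantity) in the final privacy guarantee. This is a routine calculation once one substitutes $\eps_0$ into the advanced composition expression and uses $\eps \leq 1$ to bound $e^{\eps_0}-1 \leq 2\eps_0$, so the entire argument reduces to a clean application of two off-the-shelf tools.
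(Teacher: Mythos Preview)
Your proposal is correct and matches the paper's treatment: the paper does not prove this lemma but simply cites it as the Laplace mechanism of \cite{DMNS06} together with the advanced composition analysis of \cite{DRV10}, which is exactly the two off-the-shelf tools you invoke. The only remark is that the paper's own statement of composition (Lemma~\ref{lem:composition}) carries the constant $8$ rather than $2$, so if you plug into that version the first term already equals $\eps$ and the second-order term pushes you slightly over; using the sharper constant $2$ from \cite{DRV10} as you do, the first term is $\eps/2$ and the slack absorbs the lower-order term, so the bookkeeping goes through.
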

When we need to answer a large number of queries, we will use the multiplicative
weights mechanism from~\cite{HR10}, with an improved analysis from Gupta et
al.~\cite{GRU12}.
\begin{lemma} 
  \label{lem:mw}
  Let $\cF = \set{f_1, \dots, f_{|\cF|}}$ be a set of $(1/n)$-sensitive
  linear queries, $f_i \from \univ^n \to [0,1]$.  Let $D \in \univ^n$ be a database.
  Then there is a mechanism $\cA_{\mathrm{MW}}(D, \cF)$ that is:
  \begin{enumerate}
    \item $(\eps, \delta)$-differentially private, and
    \item $(\alpha, \beta)$-accurate for any $\beta \in (0,1]$ and \shortbreak
      \begin{mathdisplayfull}
        \alpha = O \left( \frac{ \log^{1/4} |\univ| \sqrt{ \log(|\cF|/\beta)
        \log(1/\delta)}}{\eps^{1/2} n^{1/2}} \right).
      \end{mathdisplayfull}
  \end{enumerate}
\end{lemma}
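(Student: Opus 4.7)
The plan is to reproduce the Hardt--Rothblum private multiplicative weights construction, sharpened via the Gupta--Roth--Ullman analysis. The mechanism maintains a hypothesis distribution $h_t \in \Delta(\univ)$, initialized to uniform, and processes queries $f_1, \ldots, f_{|\cF|}$ one at a time (the batch output at the end being $f_i(h_{T_i})$ where $T_i$ is the number of updates that have occurred when $f_i$ is processed). For each query $f_i$, I would first use the \emph{sparse vector / AboveThreshold} mechanism of \cite{DNRRV09} to privately test whether $|f_i(h_t) - f_i(D)|$ exceeds a threshold $\tau$. If the test reports ``below,'' output $f_i(h_t)$ and leave $h_t$ unchanged. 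If it reports ``above,'' release a Laplace-noised estimate $\tilde a_i = f_i(D) + \Lap(\cdot)$, and perform a multiplicative weights update on $h_t$ using the sign of $\tilde a_i - f_i(h_t)$.

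For privacy, I would combine two observations. First, a single instantiation of sparse vector handles arbitrarily many below-threshold queries using a fixed privacy budget, so the cost from the ``easy'' queries is independent of $|\cF|$. Second, each ``hard'' query produces one Laplace release and one MW update, each of which is $(1/n)$-sensitive. The crucial quantitative lever is the multiplicative weights potential argument: if $\tau$ is the AboveThreshold gap, the potential $\mathrm{KL}(D\|h_t)$ decreases by $\Omega(\tau^2)$ on every update, so the total number $k$ of updates is at most $O(\log|\univ| / \tau^2)$. I would then compose the $O(k)$ $(1/n)$-sensitive releases using the advanced composition theorem of \cite{DRV10} to obtain $(\eps,\delta)$-differential privacy for total budget $\eps$ when each release uses noise of scale $O(\sqrt{k \log(1/\delta)}/(\eps n))$.

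For accuracy, the total error on each query is bounded by the sum of (i) the AboveThreshold noise (of order $\tau$), (ii) the Laplace noise on the at-most-$k$ explicit answers, and (iii) the MW gap $\tau$ itself by construction. Taking a union bound over $|\cF|$ queries inflates each Laplace/AboveThreshold term by $\sqrt{\log(|\cF|/\beta)}$. Substituting $k = O(\log|\univ|/\tau^2)$ into the composition-induced noise scale gives per-query error of order
\begin{mathdisplayfull}
\tau \;+\; \frac{\sqrt{k \log(1/\delta)}\,\sqrt{\log(|\cF|/\beta)}}{\eps n}
\;=\; \tau \;+\; \frac{\sqrt{\log|\univ|\,\log(1/\delta)\,\log(|\cF|/\beta)}}{\tau\,\eps\, n}.
\end{mathdisplayfull}
Balancing the two terms by choosing $\tau$ so that $\tau^2 \asymp \sqrt{\log|\univ|\,\log(1/\delta)\,\log(|\cF|/\beta)}/(\eps n)$ yields precisely the claimed bound $\alpha = O\bigl(\log^{1/4}|\univ|\cdot\sqrt{\log(|\cF|/\beta)\log(1/\delta)}\,/\,(\eps^{1/2} n^{1/2})\bigr)$.

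The main obstacle, and the place where the improved analysis of \cite{GRU12} over the original \cite{HR10} matters, is squeezing the dependence on $\log|\univ|$ down to the fourth root: this requires using advanced composition rather than basic composition on the $k$ updates, and being careful that the AboveThreshold mechanism's noise scales with $\sqrt{\log(1/\delta)}$ rather than $\log|\cF|$. A secondary subtlety is verifying that the MW potential argument still works when the update direction is determined by a noisy estimate $\tilde a_i$ rather than the true $f_i(D)$: this holds because the noise magnitude is subsumed in $\tau$, so whenever AboveThreshold fires, the sign of the noisy gap agrees with the true gap up to additive error that can be absorbed into the $O(\tau^2)$ potential drop.
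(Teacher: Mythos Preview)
The paper does not prove this lemma at all: it is stated as a known result from Hardt--Rothblum~\cite{HR10} with the sharpened analysis of Gupta--Roth--Ullman~\cite{GRU12}, and the remark immediately following it says explicitly that the lemma is used ``as a black box, agnostic to the algorithm which instantiates these guarantees.'' So there is no paper-side proof to compare against.

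Your sketch is essentially the argument in those cited works: maintain a multiplicative-weights hypothesis, use AboveThreshold/sparse vector to flag high-error queries, update only on those, bound the number of updates via the KL potential drop, and apply advanced composition over the updates. One small caution on the arithmetic: as you have written the displayed balance, optimizing $\tau$ against $\sqrt{\log|\univ|\log(1/\delta)\log(|\cF|/\beta)}/(\tau\eps n)$ yields $\tau \asymp (\log|\univ|\log(1/\delta)\log(|\cF|/\beta))^{1/4}/\sqrt{\eps n}$, which has $1/4$ powers on $\log(1/\delta)$ and $\log(|\cF|/\beta)$ rather than the $1/2$ powers in the stated bound. The discrepancy comes from your claim that the union bound inflates the Laplace term by $\sqrt{\log(|\cF|/\beta)}$; Laplace tails actually give a full $\log(|\cF|/\beta)$ factor (cf.\ the accuracy statements in Lemmas~\ref{lem:laplace} and~\ref{lem:sv} of this paper), and tracking that through---together with the privacy budget split inside sparse vector---recovers the stated exponents. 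This is a bookkeeping slip, not a conceptual gap.
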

\begin{remark}
We use the above lemma as a black box, agnostic to the algorithm which instantiates these guarantees.
\end{remark}
Our algorithms also use the private sparse vector algorithm.  This algorithm
takes as input a database and a set of low-sensitivity queries, with the promise
that only a small number of the queries have large answers on the input
database.  Its output is a set of queries with large answers on the input
database. Importantly for this work, the sparse vector algorithm
(cf.~\cite{HR10,RothNotes}) ensures the privacy of the input queries in a strong
sense.
\begin{lemma} 
  \label{lem:sv}
  Let $\cF = \set{f_1, \dots, f_{|\cF|}}$ be a set of $\Delta$-sensitive
  functions, $f_i \from \univ^n \to [0,1]$.  Let $D \in \univ^n$ be a database,
  $\alpha \in (0,1]$, $k \in [|\cF|]$ such that
  \begin{mathdisplayfull}
    |\set{i \mid f_i(D) \geq \alpha}| \leq k.
  \end{mathdisplayfull}
  Then there is an algorithm $\cA_{\mathrm{SV}}(D, \cF)$ that
  \begin{enumerate}
    \item is $(\eps, \delta)$-differentially private with respect to $D$,
    \item returns $I \subseteq [|\cF|]$ of size at most $k$ such that with
      probability at least $1-\beta$,
\iffull
      \[
      \set{i \mid f_i(D) \geq \alpha  + \eps^{-1}\Delta \sqrt{8 k \log(1/\delta)} \log(|\cF|/\beta)}
      \subseteq I \subseteq
      \set{i \mid f_i(D) \geq \alpha  },
      \]
\else
      \begin{align*}
        \set{i \mid f_i(D) \geq \alpha  + \eps^{-1}\Delta \sqrt{8 k \log(1/\delta)} \log(|\cF|/\beta)}
        \\ \subseteq I \subseteq \set{i \mid f_i(D) \geq \alpha  },
      \end{align*}
\fi
    \item and is perfectly private with respect to the queries: \\
      if $\cF' = \set{f_1, \dots, f'_j, \dots, f_k}$, then for every $D$ and $i
      \neq j$,
      \[
      \prob{i \in \cA_{\mathrm{SV}}(D, \cF)} = \prob{i \in \cA_{\mathrm{SV}}(D, \cF')}.
      \]
  \end{enumerate}
\end{lemma}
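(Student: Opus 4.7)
The plan is to implement $\cA_{\mathrm{SV}}$ as an independent noisy-threshold algorithm: draw a single threshold noise $Z \sim \Lap(\sigma)$ and, for each $i \in [|\cF|]$, an independent per-query noise $Z_i \sim \Lap(\sigma)$, and include $i$ in $I$ iff $f_i(D) + Z_i > \alpha + Z$. The noise scale will be $\sigma = \Theta(\Delta \sqrt{k \log(1/\delta)}/\eps)$, chosen so that the database-privacy budget via advanced composition and the claimed accuracy margin work out simultaneously with the same parameter.

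For accuracy, I would invoke the standard Laplace tail bound together with a union bound over the $|\cF|+1$ noise variables to conclude that $|Z|, |Z_i| \leq \sigma \log(|\cF|/\beta)$ for all $i$ simultaneously, with probability at least $1-\beta$. Within this event, no $i$ with $f_i(D) < \alpha$ crosses the threshold, so $I \subseteq \set{i \mid f_i(D) \geq \alpha}$ and hence $|I| \leq k$ by the sparsity hypothesis; conversely, every $i$ whose $f_i(D)$ exceeds $\alpha$ by at least $2 \sigma \log(|\cF|/\beta)$ does cross, matching the stated margin $\eps^{-1}\Delta\sqrt{8k\log(1/\delta)}\log(|\cF|/\beta)$ up to the constant hidden in $\sigma$. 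For database privacy, I would apply the classical sparse-vector analysis (cf.~\cite{HR10,RothNotes}): couple the threshold noise between adjacent $D$ and $D'$ by an offset of $\Delta$, absorbing the sensitivity of each $f_i$ so that every ``below threshold'' decision is preserved exactly under the coupling; only the at most $k$ ``above threshold'' decisions contribute to the privacy loss, and advanced composition over these $k$ Laplace mechanisms of scale $\sigma$ yields $(\eps, \delta)$-DP with respect to $D$. Finally, for perfect query privacy, the marginal event $\set{i \in I}$ is a function of $f_i(D)$, $Z_i$, and $Z$ alone, none of which changes when we replace $\cF$ by a $\cF'$ differing only in $f_j$ for some $j \neq i$; hence the two marginals agree identically, as required by item (3).

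The main obstacle I anticipate is reconciling the ``$|I| \leq k$'' requirement with the perfect query-privacy guarantee. The textbook sparse-vector mechanism enforces $|I| \leq k$ by halting after $k$ above-threshold events, but any such halting rule makes the decision for query $i$ depend on earlier queries, breaking perfect query privacy. My resolution is to omit halting entirely and let the size bound emerge from the accuracy event, which holds with probability at least $1-\beta$ under the sparsity hypothesis --- consistent with the lemma's statement, where the containment (and hence the bound on $|I|$) is only asserted up to failure probability $\beta$. A secondary subtlety is balancing the threshold-noise scale against the per-query noise scale so that both privacy and accuracy are tight; taking a common scale $\sigma = \Theta(\Delta\sqrt{k\log(1/\delta)}/\eps)$ appears to suffice, but I would track constants carefully when plugging into the advanced-composition calculation.
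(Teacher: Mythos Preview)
The paper does not actually prove this lemma; it is quoted as a known tool with a citation to \cite{HR10,RothNotes} and then used as a black box, so there is no in-paper argument to compare against. That said, your proposal has a genuine gap in the database-privacy step. You correctly identify the tension---halting after $k$ above-threshold events (as in the textbook sparse vector) makes the marginal $\Pr[i\in I]$ depend on other queries and would kill item~(3), so you drop halting and let the bound $|I|\le k$ fall out of the accuracy event. That repairs items~(2) and~(3), but it breaks item~(1): the sparse-vector coupling you invoke shifts the shared threshold noise $Z$ by $\Delta$, making each \emph{below}-threshold decision free but charging $\Theta(\Delta/\sigma)$ per \emph{above}-threshold decision, so the privacy loss on an output $I$ is $\Theta((1+|I|)\,\Delta/\sigma)$. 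Differential privacy must hold for every input and every neighbor, not only those satisfying the sparsity promise; on a database where many $f_i$ sit near $\alpha$, the output can have $|I|$ as large as $|\cF|$, and with $\sigma=\Theta(\Delta\sqrt{k\log(1/\delta)}/\eps)$ the loss blows up.

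Even restricting to outputs with $|I|\le k$, a \emph{single} shared $Z$ does not give you the $\sqrt{k}$ saving: the $k$ above-threshold events are not separate sub-mechanisms to which advanced composition applies (they all condition on the same $Z$), so their cost adds linearly to $\Theta(k\Delta/\sigma)$, forcing $\sigma=\Theta(\Delta k/\eps)$ rather than $\Theta(\Delta\sqrt{k}/\eps)$. The $\sqrt{k}$ rate in the standard analysis comes from \emph{refreshing} the threshold noise after each above-threshold event and then composing $k$ independent $1$-sparse instances---and refreshing, like halting, requires a stopping rule that reintroduces dependence on other queries. In short, you have put your finger on exactly the right obstacle, but ``omit halting and cite the classical analysis'' does not resolve it; the classical analysis is precisely the part that relies on halting.
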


We will also use the Composition Theorem of Dwork, Rothblum, and
Vadhan~\cite{DRV10}.
\begin{lemma} 
\label{lem:composition}
Let $\cA \from \univ^* \to \cR^T$ be a mechanism such that for every pair of
adjacent inputs $x \sim x'$, every $t \in [T]$, every $r_{1}, \dots, r_{t-1} \in
\cR$, and every $r_t \in \cR$,
\begin{align*}
  &\prob{\cA_t(x) = r_{t} \mid \cA_{1, \dots, t-1}(x) = r_{1}, \dots, r_{t-1}}
  \\ &\leq e^{\eps_0} \prob{\cA_t(x') = r_{t} \mid \cA_{1, \dots, t-1}(x') = r_{1}, \dots, r_{t-1}} + \delta_0
\end{align*}
for $\eps_0 \leq 1/2$.
Then $\cA$ is $(\eps, \delta)$-differentially private for $\eps = \sqrt{8 T
\log(1/\delta)} + 2 \eps_0^2 T$ and $\delta = \delta_0 T$.
\end{lemma}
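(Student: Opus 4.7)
The plan is to follow the standard proof of advanced composition by controlling the \emph{privacy loss random variable} via martingale concentration and then converting a high-probability bound on that variable into an $(\eps,\delta)$-DP guarantee using the characterization stated just above the lemma. Fix adjacent inputs $x \sim x'$ and, for an output $y = (y_1,\ldots,y_T)$ sampled from $\cA(x)$, define
\[
L_t(y) = \ln\frac{\Pr[\cA_t(x) = y_t \mid y_{<t}]}{\Pr[\cA_t(x') = y_t \mid y_{<t}]}, \qquad L(y) = \sum_{t=1}^T L_t(y).
\]
If I can show $|L(y)| \leq \eps$ except on an event of probability at most $\delta$ under $y \sim \cA(x)$, the lemma follows from the high-probability characterization of $(\eps,\delta)$-DP recalled just before the statement.

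I would proceed in three stages. First, by the per-step hypothesis, for every prefix $y_{<t}$ the conditional distributions of $\cA_t(x)$ and $\cA_t(x')$ are $(\eps_0,\delta_0)$-indistinguishable; hence there is a bad set $B_t(y_{<t})$ of conditional mass at most $\delta_0$ outside of which $|L_t(y)| \leq \eps_0$. A union bound over $t \in [T]$ yields that $|L_t(y)| \leq \eps_0$ for every $t$ simultaneously except on an event of total mass at most $T\delta_0 = \delta$. Second, on the good event I would bound the conditional mean using the elementary inequality $\mathbb{E}_{y_t \sim \cA_t(x)}[L_t \mid y_{<t}] \leq \eps_0(e^{\eps_0}-1) \leq 2\eps_0^2$, which holds whenever $|L_t|\leq \eps_0$ and $\eps_0 \leq 1/2$; summing over $t$ produces the deterministic $2\eps_0^2 T$ term in $\eps$. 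Third, I would apply Azuma's inequality to the martingale $\sum_t\bigl(L_t - \mathbb{E}[L_t \mid y_{<t}]\bigr)$, whose increments are bounded in magnitude by $O(\eps_0)$, to concentrate $L(y)$ around its mean up to deviation $O\!\bigl(\sqrt{T\log(1/\delta)}\bigr)$, matching the $\sqrt{8T\log(1/\delta)}$ term.

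The main obstacle is the bookkeeping of the two distinct failure modes that both cost probability mass: the per-step $\delta_0$-slack at each of the $T$ steps, and the Azuma deviation event. To turn $L_t$ into a bounded martingale difference without altering the conditional distributions, one must truncate $L_t$ outside $B_t(y_{<t})$ and verify that the truncation changes the law of $L(y)$ by at most the union-bounded mass already absorbed into $\delta$. Allocating the $\delta$ budget carefully between the per-step slack and the Azuma tail, and verifying that the target $\eps = \sqrt{8T\log(1/\delta)} + 2\eps_0^2 T$ captures both contributions, is the only subtle point; the remaining manipulations are routine algebra.
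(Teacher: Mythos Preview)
The paper does not prove this lemma at all: it is quoted as the Composition Theorem of Dwork, Rothblum, and Vadhan~\cite{DRV10} and used as a black box. Your sketch is precisely the standard DRV10 argument (privacy-loss martingale, per-step mean bounded by $\eps_0(e^{\eps_0}-1)\le 2\eps_0^2$, Azuma on the centered increments, union bound over the $\delta_0$-slack sets), so there is nothing to compare against in this paper.

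One small caveat: as stated here the lemma has a typo---the first term of $\eps$ should carry a factor of $\eps_0$ (it should read $\eps_0\sqrt{8T\log(1/\delta)} + 2\eps_0^2 T$), and indeed this is how the paper itself applies the lemma later (e.g.\ in the proof of Theorem~\ref{thm:dataprivacy4counting}). Your Azuma step, with increments bounded by $O(\eps_0)$, would naturally produce this $\eps_0$ factor, so do not be confused when your computation does not literally match the displayed $\eps$.
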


\SUBSECTION{Multiplicative Weights}
Let $A \from \cA \to [0,1]$ be a measure over a set of actions $\cA$.  We use
$|A| = \sum_{a \in \cA} A(a)$ to denote the {\em density} of $A$.  A measure
naturally corresponds to a probability distribution $\dist{A}$ in which
\begin{mathdisplayfull}
\prob{\dist{A} = a} = A(a)/|A|
\end{mathdisplayfull}
for every $a \in \cA$.  Throughout, we will use calligraphic letters $(\cA)$ to
denote a set of actions, lower case letters ($a$) to denote the actions, capital
letters ($A$) to denote a measure over actions, and capital letters with a tilde
to denote the corresponding distributions ($\dist{A}$).
\iffull
We will use the KL-divergence between two distributions, defined to be
\[
KL(\dist{A} || \dist{A}') = \sum_{a \in \cA} \dist{A}(a) \log\left(\dist{A}(a) /
\dist{A}'(a)\right).
\]
\fi
Let $L \from \cA \to [0,1]$ be a loss function (losses $L$).  Abusing notation,
we can define $L(A) = \ex{L(\dist{A})}.$  Given an initial measure $A_1$, we can
define the multiplicative weights algorithm in Algorithm~\ref{alg:MW}.
\begin{algorithm}[h!]
  \caption{The Multiplicative Weights Algorithm, $MW_{\eta}$}
  \label{alg:MW}
  \begin{algorithmic}
    \STATE{For $t = 1,2,\dots,T$:}
    \INDSTATE[1]{Sample $a_{t} \getsr \dist{A}_t$}
    \INDSTATE[1]{Receive losses $L_t$ (may depend on $A_{1},a_1, \dots, A_{t-1},
    a_{t-1}$)}
    \INDSTATE[1]{\textbf{Update:} {\bf For each} $a \in \cA${\bf :}}
    \INDSTATE[2]{Update $A_{t+1}(a) = e^{-\eta L_t(a)} A_{t}(a)$ for every $a \in \cA$}
  \end{algorithmic}
\end{algorithm}

\shortpagebreak

The following theorem about the multiplicative weights update is well-known.

\begin{theorem}[\iffull Multiplicative Weights.\fi See e.g. \cite{RakhlinNotes}]
  Let $A_{1}$ be the uniform measure of density $1$, and let $\set{a_1, \dots,
  a_T}$ be the actions obtained by $MW_{\eta}$ with  losses $\set{L_1, \dots,
  L_t}$.  Let $A^* = \mathbf{1}_{a = a^*}$, for some $a^* \in \cA$, and $\delta
  \in (0,1]$.  Then with probability at least $1-\beta$,
\iffull
  \begin{align*}
    \Ex{t \getsr [T]}{L_t(a_t)} &\leq{} (1+\eta) \Ex{t \getsr [T]}{L_t(A^*)} +
    \frac{KL(\dist{A}^* || \dist{A}_1)}{\eta T} + \frac{4
    \log(1/\beta)}{\sqrt{T}} \\
    &\leq{} \Ex{t \getsr [T]}{L_t(A^*)} + \eta + \frac{\log |\cA|}{\eta T} +
    \frac{4 \log(1/\beta)}{\sqrt{T}}.
  \end{align*}
\else
  \begin{align*}
    &\Ex{t \getsr [T]}{L_t(a_t)}
    \leq{} \Ex{t \getsr [T]}{L_t(A^*)} + \eta + \frac{\log |\cA|}{\eta T} +
    \frac{4 \log(1/\beta)}{\sqrt{T}}.
  \end{align*}
\fi
\end{theorem}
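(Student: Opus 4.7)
The plan is to split the bound into a deterministic multiplicative-weights regret piece controlling the expected losses $\tfrac{1}{T}\sum_t L_t(A_t) = \tfrac{1}{T}\sum_t \Ex{a \getsr \tilde A_t}{L_t(a)}$, and a stochastic concentration piece passing from $L_t(A_t)$ to the realized $L_t(a_t)$ via Azuma--Hoeffding.

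For the deterministic piece I would run the classical potential argument on $\Phi_t := |A_t|$. Using $e^{-\eta x} \leq 1 - \eta x + \eta^2 x^2$ for $\eta x \in [0,1]$ (together with $L_t(a)^2 \leq L_t(a)$ to absorb the quadratic term), the update step gives $\Phi_{t+1} \leq \Phi_t\bigl(1 - (\eta - \eta^2) L_t(A_t)\bigr)$; telescoping and applying $\log(1-z) \leq -z$ then yields an upper bound on $\log(\Phi_{T+1}/\Phi_1)$. Simultaneously $\Phi_{T+1} \geq A_{T+1}(a^*) = A_1(a^*) \exp\bigl(-\eta \sum_t L_t(a^*)\bigr)$ gives a lower bound. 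Combining the two, and noting that for the uniform $\tilde A_1$ of density $1$ and point-mass $\tilde A^* = \mathbf{1}_{a = a^*}$ one has $\log(|A_1|/A_1(a^*)) = \log|\cA| = KL(\tilde A^* \,\|\, \tilde A_1)$, rearrangement yields
\[
\sum_t L_t(A_t) \leq (1+\eta) \sum_t L_t(A^*) + \frac{KL(\tilde A^* \,\|\, \tilde A_1)}{\eta}
\]
(up to reshuffling an $O(\eta)$ factor), which after dividing by $T$ gives the first two terms of the bound.

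For the concentration piece, $X_t := L_t(a_t) - L_t(A_t)$ is a martingale difference sequence with respect to the filtration $\cF_t = \sigma(A_1, a_1, \ldots, A_t, a_t)$: both $L_t$ (which depends only on history through time $t-1$) and $A_t$ are $\cF_{t-1}$-measurable, and $\ex{L_t(a_t) \mid \cF_{t-1}} = L_t(A_t)$ by the sampling step $a_t \getsr \tilde A_t$. Since $L_t \in [0,1]$ the increments are bounded by $1$, so Azuma--Hoeffding yields $|\tfrac{1}{T}\sum_t X_t| = O\bigl(\sqrt{\log(1/\beta)/T}\bigr)$ with probability at least $1-\beta$, which is dominated by the stated $4\log(1/\beta)/\sqrt{T}$ term in the regime of interest. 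Adding this to the deterministic bound gives the first inequality; the second follows from $(1+\eta) L_t(A^*) \leq L_t(A^*) + \eta$ (using $L_t(A^*) \in [0,1]$) and $KL(\tilde A^* \,\|\, \tilde A_1) \leq \log|\cA|$. I do not anticipate a serious obstacle: both components are textbook, and the only real care needed is to verify the martingale structure when losses are chosen adaptively on prior play --- but this is immediate from the stated dependence of $L_t$ on $A_1, a_1, \ldots, A_{t-1}, a_{t-1}$.
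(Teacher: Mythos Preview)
The paper does not prove this theorem: it is stated as a known result with a citation to \cite{RakhlinNotes} and no argument is given in the text. So there is no ``paper's own proof'' to compare against.

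Your proposal is the standard textbook argument and is correct. The decomposition into (i) the deterministic potential bound on $\tfrac{1}{T}\sum_t L_t(A_t)$ and (ii) an Azuma--Hoeffding step to pass to the sampled losses $L_t(a_t)$ is exactly what one would expect from the cited lecture notes, and the martingale structure you identify is correct given the algorithm's stated dependence of $L_t$ on $A_1,a_1,\ldots,A_{t-1},a_{t-1}$. One minor quibble: Azuma yields a deviation of order $\sqrt{\log(1/\beta)/T}$, not $\log(1/\beta)/\sqrt{T}$; your remark that the former is ``dominated by'' the latter only holds once $\log(1/\beta)$ is bounded away from zero, so the stated constant $4$ does not cover $\beta$ arbitrarily close to $1$. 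This is cosmetic and does not affect any downstream use in the paper.
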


We need to work with a variant of multiplicative weights that only produces
measures $A$ of high density, which will imply that $\dist{A}$ does not assign
too much probability to any single element of $\cA$.  To this end, we will apply
(a special case of) the Bregman projection to the measures obtained from the
multiplicative weights update rule.
\begin{definition} 
  \label{def:proj}
  Let $s \in (0,\cU]$.  Given a measure $A$ such that $|A| \leq s$, let
  $\projs{A}$ be the \emph{(Bregman) projection of $A$ into the set of
  density-$s$ measures}, obtained by computing $c \geq 1$ such that $s = \sum_{a
  \in \cA} \min\{1, c A(a)\}$ and setting $\proj{A}(a) = \min\{1, c M(a)\}$ for
  every $a \in \cA$. We call $s$ is the \emph{density} of measure $A$.
\end{definition}

\begin{algorithm}[h!]
  \caption{The Dense Multiplicative Weights Algorithm, $DMW_{s, \eta}$}
  \label{alg:DMW}
  \begin{algorithmic}
    \STATE{For $t = 1,2,\dots,T$:}
    \INDSTATE[1]{Let $A'_{t} = \projs{A}_t$, and sample $a_{t} \getsr
    \dist{A}'_{t}$}
    \INDSTATE[1]{Receive losses $L_t$ (may depend on $A_{1},a_1, \dots, A_{t-1},
    a_{t-1}$)}
    \INDSTATE[1]{\textbf{Update:} {\bf For each} $a \in \cA${\bf :}}
    \INDSTATE[2]{Update $A_{t+1}(a) = e^{-\eta L_t(a)} A_{t}(a)$}
  \end{algorithmic}
\end{algorithm}

Given an initial measure $A_1$ such that $|A_1| \leq s$, we can define the dense
multiplicative weights algorithm in Algorithm~\ref{alg:DMW}.  Note that we
update the unprojected measure $A_{t}$, but sample $a_{t}$ using the projected
measure $\projs{A}_{t}$.  Observe that the update step can only decrease the
density, so we will have $|A_{t}| \leq s$ for every $t$.  Like before,
given a sequence of losses $\set{L_1, \dots, L_T}$ and an initial measure
$A_{1}$ of density $s$, we can consider the sequence $\set{A_1, \dots, A_{T}}$
where $A_{t+1}$ is given by the projected multiplicative weights update applied
to $A_{t}, L_{t}$.  The following theorem is known.

\begin{theorem}
  Let $A_1$ be the uniform measure of density $1$ and let $\set{a_1, \dots, a_T}$
  be the sequence of measures obtained by $DMW_{s,\eta}$ with losses $\set{L_1,
  \dots, L_T}$.  Let $A^* = \mathbf{1}_{a \in S*}$ for some set $S^* \subseteq
  \cA$ of size $s$, and $\delta \in (0,1]$.  Then with probability $1-\beta$,
\iffull
  \begin{align*}
    \Ex{t \getsr [T]}{L_t(\proj{A_t)}} &\leq (1+\eta) \Ex{t \getsr [T]}{L_t(A^*)} +
    \frac{KL(\dist{A}^* || \dist{A}_1)}{\eta T} + \frac{4 \log(1/\beta)}{\sqrt{T}} \\
    &\leq{} \Ex{t \getsr [T]}{L_t(A^*)} + \eta + \frac{\log |\cA|}{\eta T} +
    \frac{4 \log(1/\beta)}{\sqrt{T}}.
  \end{align*}
\else
  \begin{align*}
    \Ex{t \getsr [T]}{L_t(\proj{A_t)}} \leq{} \Ex{t \getsr [T]}{L_t(A^*)} + \eta
    + \frac{\log |\cA|}{\eta T} + \frac{4 \log(1/\beta)}{\sqrt{T}}.
  \end{align*}
\fi
\end{theorem}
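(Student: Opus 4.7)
The plan is to adapt the KL-potential analysis behind standard multiplicative weights, using the Pythagorean property of the Bregman (KL) projection onto the constraint set $C_s$ of density-$s$, $[0,1]$-valued measures. Since the comparator $A^*$ itself lies in $C_s$, the projection step can only decrease the divergence to $A^*$ at each round, so at a high level we should be able to recover the standard MW regret bound essentially verbatim against size-$s$ smooth comparators rather than against single actions.

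Concretely, I would define the potential $\Phi_t := D(A^* \,\|\, A_t)$, the unnormalized KL divergence to the unprojected iterate. Using the MW update $A_{t+1}(a) = e^{-\eta L_t(a)} A_t(a)$ and the bound $e^{-\eta x} \leq 1 - \eta x + \eta^2 x^2$ for $x \in [0,1]$, a direct one-step calculation gives
\[
\Phi_{t+1} - \Phi_t \;\leq\; \eta \langle L_t, A^* - A_t \rangle + \eta^2 |A_t| \;\leq\; \eta \langle L_t, A^* - A_t \rangle + \eta^2 s,
\]
since the MW update only decreases total mass, so $|A_t| \leq |A_1| \leq s$ throughout. Telescoping with $\Phi_{T+1} \geq 0$ and $\Phi_1 = D(A^* \,\|\, A_1) \leq s \log |\cA|$ (direct computation for $A_1$ uniform of density $1$) produces an unprojected-iterate regret bound. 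To replace $A_t$ with $A'_t$ on the left-hand side, I would re-express $DMW_{s,\eta}$ as follow-the-regularized-leader on $C_s$ with entropic regularizer (one checks the update-then-project scheme produces exactly the FTRL iterate $A'_t(a) = \min\{1, c\,e^{-\eta L_{<t}(a)}\}$) and apply the Pythagorean inequality $D(A^* \,\|\, A_t) \geq D(A^* \,\|\, A'_t) + D(A'_t \,\|\, A_t)$ together with the standard FTRL stability analysis. Dividing through by the comparator mass $s$ converts the inner products into the $L_t(\cdot)$ form, yielding the deterministic bound; the $(1+\eta)$ factor on the comparator is absorbed from the matching second-order lower bound on the exponential, and the $4\log(1/\beta)/\sqrt{T}$ slack comes from Azuma--Hoeffding applied to the martingale difference sequence $L_t(a_t) - L_t(\proj{A_t})$ induced by sampling.

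The step I expect to be hardest is the conversion from $\langle L_t, A_t \rangle$ to $\langle L_t, A'_t \rangle$. A naive substitution fails because $c_t \geq 1$ and $A_t(a) \leq 1/|\cA| \leq 1$, so $A'_t(a) \geq A_t(a)$ pointwise and hence $\langle L_t, A'_t \rangle \geq \langle L_t, A_t \rangle$ for nonnegative losses, making the unprojected regret bound point the wrong way. The resolution --- tracking the divergence to the projected iterate and invoking the Pythagorean inequality at each step, equivalently running the FTRL/mirror-descent stability analysis on $C_s$ --- is the classical Bregman-projection trick underlying smooth boosting and the Barak--Hardt--Kale analyses cited in the introduction, so I expect the details to be routine if a bit tedious.
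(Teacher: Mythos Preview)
The paper does not prove this theorem; it simply states it as known and points to Rakhlin's lecture notes for a full treatment. So there is no in-paper proof to compare against.

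Your approach is the standard one underlying the cited reference: recognize that the lazy update-then-project scheme is exactly FTRL on the constraint set $C_s$ with the unnormalized entropy regularizer, and invoke the generalized Pythagorean inequality for Bregman projections (valid here because the scale-and-cap map is precisely the KL projection onto $C_s$, and $A^* = \mathbf{1}_{S^*}$ lies in $C_s$). Two small remarks. First, your initial potential calculation with $\Phi_t = D(A^*\|A_t)$ yields the unprojected term $\langle L_t, A_t\rangle$, which, as you correctly flag, points the wrong way; since the DMW update is from the \emph{unprojected} $A_t$ (not from $A'_t$), the ``greedy'' one-step potential drop with $\hat A_{t+1}=e^{-\eta L_t}A'_t$ does not directly telescope either, because $A'_{t+1}$ is the projection of $A_{t+1}$, not of $\hat A_{t+1}$. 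The clean route is exactly the FTRL one you name second: Be-the-Leader plus a stability bound on $\langle L_t, A'_t - A'_{t+1}\rangle$, rather than trying to patch the greedy potential argument. Second, your bound $|A_t|\le|A_1|\le s$ is loose but harmless---in fact $|A_1|=1$ here---and after normalizing by the common mass $s$ of $A'_t$ and $A^*$ the second-order term still collapses to the stated $\eta$. The $4\log(1/\beta)/\sqrt{T}$ term from Azuma on the sampling martingale is exactly right.
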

\iffull
\else
See e.g.~\cite{RakhlinNotes} for a thorough treatment of this result.
\fi

\SUBSECTION{Regret Minimization and Two-Player Zero-Sum Games} Let $G \from
\cA_R \times \cA_C \to [0,1]$ be a two-player zero-sum game between players
($R$)ow and ($C$)olumn, who take actions $r \in \cA_{R}$ and $c \in \cA_{C}$ and
receive losses $G(r,c)$ and $-G(r,c)$, respectively. Let $\Delta(\cA_R),
\Delta(\cA_C)$ be the set of measures over actions in $\cA_{R}$ and $\cA_{C}$,
respectively.  The well-known minimax theorem states that
\[
  v := \min_{R \in \Delta(\cA_{R})} \max_{C \in \Delta(\cA_{C})} G(R, C) =
  \max_{C \in \Delta(\cA_{C})} \min_{R \in \Delta(\cA_{R})} G(R, C).
\]
We define this quantity $v$ to be the \emph{value of the game}.

Freund and Schapire~\cite{FS} showed that if two sequences of actions $\set{r_1,
\dots, r_T}, \set{c_1, \dots, c_T}$ are ``no-regret with respect to one
another'', then $\dist{r} = \frac{1}{T} \sum_{t=1}^{T} r_t$ and $\dist{c} =
\frac{1}{T} \sum_{t=1}^{T} c_t$ form an approximate equilibrium strategy pair.
More formally, if
\[
  \max_{c \in \cA_{C}} \Ex{t}{G(r_t, c)} - \rho \leq \Ex{t}{G(r_t, c_t)} \leq
  \min_{r \in \cA_{R}} \Ex{t}{G(r, c_t)} + \rho,
\]
then
\begin{mathdisplayfull}
v - 2\rho \leq G(\dist{r}, \dist{c}) \leq v + 2\rho.
\end{mathdisplayfull}%
Thus, if Row chooses actions using the multiplicative weights update
rule with losses $L_t(r_t) = G(r_t, c_t)$ and Column chooses actions using the
multiplicative weights rule with losses $L_t(r_t) = -G(r_t, c_t)$, then each
player's distribution on actions converges to a minimax strategy.  \iffull
That is, if we play until both players have regret at most $\rho$:
\[
\max_{c} G(\dist{r}, c) \leq v + 2\rho
\qquad
v - 2\rho \leq \min_{r} G(r, \dist{c}).
\]
\fi

For query privacy in our view of query release as a two player game, Column must
not put too much weight on any single query.  Thus, we need an analogue of this
result in the case where Column is not choosing actions according to the
multiplicative weights update, but rather using the projected multiplicative
weights update.  In this case we cannot hope to obtain an approximate minimax
strategy, since Column cannot play any single action with significant
probability.  However, we can define an alternative notion of the value of a
game where Column is restricted in this way: let $\Delta_{s}(\cA_{C})$ be the
set of measures over $\cA_{C}$ of minimum density at least $s$, and define
\[
v_{s} := \min_{R \in \Delta(\cA_{R})} \max_{C \in \Delta_{s}(\cA_{C})} G(R, C).
\]
Notice that $v_{s} \leq v$, and $v_{s}$ can be very different from $v$.
\begin{theorem}
  Let $\set{r_1, \dots, r_T} \in \cA_{R}$ be a sequence of row-player actions,
  $\set{C_1, \dots, C_T} \in \Delta_{s}(\cA_{C})$ be a sequence of high-density
  measures over column-player actions, and $\set{c_1, \dots, c_T} \in \cA_{C}$
  be a sequence of column-player actions such that $c_j \getsr C_j$ for every $t
  \in [T]$.  Further, suppose that
\iffull
  \[
    \Ex{t}{G(r_t, c_t)} \leq \min_{R \in \Delta(\cA_{R})} \Ex{t}{G(R, c_t)} + \rho
    \quad \textrm{and} \quad \Ex{t}{G(r_t, c_t)} \geq \max_{C \in
      \Delta_{s}(\cA_{C})} \Ex{t}{G(r_t, C)} - \rho.
  \]
\else
  \begin{align*}
    &\max_{C \in \Delta_{s}(\cA_{C})} \Ex{t}{G(r_t, C)} - \rho \\
    &\leq{} \Ex{t}{G(r_t, c_t)} \leq \min_{R \in \Delta(\cA_{R})} \Ex{t}{G(R, c_t)} + \rho
  \end{align*}
\fi
  Then,
  \begin{mathdisplayfull}
    v_{s} - 2\rho \leq G(\dist{r}, \dist{c}) \leq v + 2\rho.
  \end{mathdisplayfull}%
  Moreover, $\dist{r}$ is an approximate min-max strategy with respect to
  strategies in $\Delta_{s}(\cA_{C})$, i.e.,
  \begin{mathdisplayfull}
    v_{s} - 2\rho \leq \max_{C \in \Delta_{s}(\cA_{C})} G(\dist{r}, C) \leq v +
    2\rho.
  \end{mathdisplayfull}
\end{theorem}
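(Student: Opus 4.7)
The plan is to follow the standard Freund--Schapire template, but to be careful that the column player's restriction to $\Delta_{s}(\cA_{C})$ only enters on one side of the argument. The key observation is that, by linearity of $G$ in each argument, the empirical averages $\dist{r} = \frac{1}{T}\sum_t r_t$ and $\dist{c} = \frac{1}{T}\sum_t c_t$ satisfy
\[
G(\dist{r},\dist{c}) = \Ex{t}{G(r_t,c_t)},\quad \Ex{t}{G(R,c_t)} = G(R,\dist{c}),\quad \Ex{t}{G(r_t,C)} = G(\dist{r},C).
\]
So the two hypotheses translate directly into
\[
\max_{C \in \Delta_{s}(\cA_{C})} G(\dist{r},C) - \rho \;\leq\; G(\dist{r},\dist{c}) \;\leq\; \min_{R \in \Delta(\cA_{R})} G(R,\dist{c}) + \rho .
\]

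For the \emph{upper} bound on $G(\dist{r},\dist{c})$, I would use that $\min_{R} G(R,\dist{c})$ is at most the minimax value against the worst column distribution, which is exactly $v$: fix any minimax row strategy $R^{*}$ (achieving $\min_R \max_{C \in \Delta(\cA_C)} G(R,C) = v$); then $\min_R G(R,\dist{c}) \leq G(R^{*},\dist{c}) \leq \max_{C} G(R^{*},C) = v$. This gives $G(\dist{r},\dist{c}) \leq v + \rho \leq v + 2\rho$. For the \emph{lower} bound, the restriction on Column is essential: $\dist{r}$ is just one particular row strategy, so $\max_{C \in \Delta_{s}(\cA_{C})} G(\dist{r},C) \geq \min_{R} \max_{C \in \Delta_{s}(\cA_{C})} G(R,C) = v_{s}$, and hence $G(\dist{r},\dist{c}) \geq v_{s} - \rho \geq v_{s} - 2\rho$.

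The ``moreover'' claim follows from the same two inequalities. On the one hand, plugging in $\dist{r}$ as a candidate for the outer minimum in the definition of $v_{s}$ immediately gives $v_{s} \leq \max_{C \in \Delta_{s}(\cA_{C})} G(\dist{r},C)$. On the other hand, chaining the upper bound we just established with the regret hypothesis gives
\[
\max_{C \in \Delta_{s}(\cA_{C})} G(\dist{r},C) \leq G(\dist{r},\dist{c}) + \rho \leq v + 2\rho,
\]
which is precisely the stated bound.

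There is really no serious obstacle here; the whole content is bookkeeping about \emph{which} value ($v$ vs.\ $v_{s}$) appears on which side. The only subtle point, which I would flag explicitly, is that the hypothesis is stated in terms of the \emph{sampled} column actions $c_t$ rather than the measures $C_t$, so the linearity-of-expectation identities above must be applied to the realized sequence $\{c_t\}$ (the samples $c_t \getsr C_t$ play no further role in the argument beyond being the objects whose empirical average defines $\dist{c}$). Once that is noted, both displayed inequalities and the ``moreover'' clause drop out in a handful of lines.
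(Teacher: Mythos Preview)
Your argument contains a genuine error in the very first line: the identity
\[
G(\dist{r},\dist{c}) = \Ex{t}{G(r_t,c_t)}
\]
is false. Bilinearity of $G$ gives only
\[
G(\dist{r},\dist{c}) = \frac{1}{T^2}\sum_{t,t'} G(r_t,c_{t'}),
\]
which is the full double average, not the diagonal average $\frac{1}{T}\sum_t G(r_t,c_t)$. (Your other two identities, $\Ex{t}{G(R,c_t)}=G(R,\dist{c})$ and $\Ex{t}{G(r_t,C)}=G(\dist{r},C)$, are correct because only one argument is being averaged.) A quick sanity check: your argument yields bounds with slack $\rho$ rather than the stated $2\rho$, which already signals that a step has been collapsed.

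The fix is exactly what the paper does: keep $\Ex{t}{G(r_t,c_t)}$ as the central quantity and chain \emph{both} regret hypotheses through it. For the lower bound on $G(\dist{r},\dist{c})$,
\[
v_s \;\leq\; \max_{C \in \Delta_s(\cA_C)} G(\dist{r},C)
\;\leq\; \Ex{t}{G(r_t,c_t)} + \rho
\;\leq\; \min_{R} G(R,\dist{c}) + 2\rho
\;\leq\; G(\dist{r},\dist{c}) + 2\rho,
\]
where the last step uses that $\dist{r}$ is a feasible choice for $R$ in the minimum. Note how the passage from $\Ex{t}{G(r_t,c_t)}$ to $G(\dist{r},\dist{c})$ costs an extra $\rho$ because it requires the \emph{other} hypothesis; this is the $\rho$ you lost by the incorrect identification. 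The upper bound on $\max_{C\in\Delta_s}G(\dist{r},C)$ in the ``moreover'' clause is obtained by the same chain, stopping at $\min_R G(R,\dist{c})+2\rho \leq v+2\rho$. Your direct observation that $v_s \leq \max_{C\in\Delta_s}G(\dist{r},C)$ (by plugging $\dist{r}$ into the outer minimum) is correct and in fact slightly cleaner than the paper's route for that half of the ``moreover'' clause.
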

\iffull
\begin{proof}
  For the first set of inequalities, we handle each part separately. For one
  direction,
  \begin{align*}
    v_{s}
    &={} \min_{R \in \Delta(\cA_{R})} \max_{C \in \Delta_{s}(\cA_{C})} G(R,C) \\
    &\leq{} \max_{C \in \Delta_{s}(\cA_{C})} \Ex{t}{G(r_t, C)}
    \leq{} \Ex{t}{G(r_t, c_t)} + \rho \\
    &\leq{} \min_{R \in \Delta(\cA_{R})} \Ex{t}{G(R, c_t)} + 2\rho
    ={} \min_{R \in \Delta(\cA_{R})} G(R, \dist{c}) +2\rho \\
    &\leq{} G(\dist{r}, \dist{c}) + 2\rho.
  \end{align*}
  The other direction is similar, starting with the fact that $v = \max_{c \in
  C} \min_{r \in R} G(r,c)$.

For the second set of inequalities, we also handle the two cases separately.
For the upper bound,
\begin{align*}
\max_{C \in \Delta_{s}(\cA_{C})} \Ex{t}{G(\dist{r}, C)}
&\leq{} \Ex{t}{G(r_t,c_t)} + \rho \\
&\leq \min_{R \in \Delta(\cA_{R})} \Ex{t}{G(R, c_{t})} + 2\rho
= \min_{R \in \Delta(\cA_{R})} G(R, \dist{c}) + 2\rho \\
&\leq v + 2\rho.
\end{align*}

For the lower bound,
\begin{align*}
\max_{C \in \Delta_{s}(\cA_{C})} G(\dist{r}, C)
\geq{} \Ex{t}{G(\dist{r}, \dist{c})} \geq v_{s} - 2\rho
\end{align*}
This completes the proof of the theorem.
\end{proof}
\else
We omit the proof, which closely follows the argument of Freund and
Schapire~\cite{FS} for the unconstrained case.
\fi


\begin{corollary} \label{cor:approxminmax}
Let $G \from \cA_{R} \times \cA_{C} \to [0,1]$.  If the row player chooses
actions $\set{r_1, \dots, r_{T}}$ by running $MW_{\eta}$ with loss functions
$L_{t}(r) = G(r, c_t)$ and the column player chooses actions $\set{c_1, \dots,
  c_{T}}$ by running $DMW_{s, \eta}$ with the loss functions $L_{t}(c) = -G(r_t,
  c)$, then with probability at least $1-\beta$,
\begin{mathdisplayfull}
v_{s} - 2\rho \leq \max_{c \in C_{s}} G(\dist{r}, c) \leq v + 2\rho,
\end{mathdisplayfull}
for
\[
\rho = \eta + \frac{ \max\{\log |\cA_{R}|, \log |\cA_{C}|\}}{ \eta T} +
\frac{4\log(2/\beta)}{\sqrt{T}}.
\]
\end{corollary}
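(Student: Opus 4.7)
The plan is to combine the two regret bounds provided by the $MW_\eta$ and $DMW_{s,\eta}$ theorems with the preceding theorem on approximate equilibria in the density-restricted setting. The row player, running $MW_\eta$ against losses $L_t(r) = G(r, c_t)$, obtains (with probability at least $1 - \beta/2$) the guarantee
\[
\Ex{t}{G(r_t, c_t)} \leq \min_{r^* \in \cA_R} \Ex{t}{G(r^*, c_t)} + \rho_R,
\]
where $\rho_R = \eta + \frac{\log |\cA_R|}{\eta T} + \frac{4\log(2/\beta)}{\sqrt{T}}$. Since the minimum over pure actions equals the minimum over mixed strategies $R \in \Delta(\cA_R)$ (best responses are achieved at pure actions), this is exactly the right-hand no-regret inequality required by the preceding theorem.

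Symmetrically, I would apply the $DMW_{s,\eta}$ theorem to the column player, who faces losses $L_t(c) = -G(r_t, c)$. With probability at least $1 - \beta/2$, the projected-measure sequence $\{C_t = \projs{A}_t\}$ satisfies
\[
\Ex{t}{-G(r_t, C_t)} \leq \min_{A^* = \mathbf{1}_{S^*}} \Ex{t}{-G(r_t, A^*)} + \rho_C,
\]
for every density-$s$ set-indicator $A^*$, where $\rho_C = \eta + \frac{\log |\cA_C|}{\eta T} + \frac{4\log(2/\beta)}{\sqrt{T}}$. Rearranging and noting that the maximum of a linear functional over $\Delta_s(\cA_C)$ is attained at a density-$s$ indicator (by convexity/linearity), this yields
\[
\Ex{t}{G(r_t, c_t)} \geq \max_{C \in \Delta_s(\cA_C)} \Ex{t}{G(r_t, C)} - \rho_C,
\]
which is the left-hand no-regret inequality required by the preceding theorem. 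Here we also use that $\Ex{t}{G(r_t, c_t)} = \Ex{t}{G(r_t, C_t)}$ in expectation over the sampling $c_t \getsr \dist{C}_t$; to get the sample-path version with high probability one invokes a standard Azuma/Hoeffding concentration bound on the martingale difference sequence $G(r_t, c_t) - G(r_t, C_t)$, which contributes an additional $O(\log(1/\beta)/\sqrt{T})$ term that is already absorbed into the stated $\rho$.

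Taking a union bound over the two failure events, both regret inequalities hold simultaneously with probability at least $1 - \beta$ with $\rho = \max\{\rho_R, \rho_C\}$ as in the statement. Plugging these two inequalities into the hypothesis of the preceding theorem immediately gives the desired conclusion $v_s - 2\rho \leq \max_{C \in \Delta_s(\cA_C)} G(\dist{r}, C) \leq v + 2\rho$. The only step requiring care is the concentration argument needed to pass from the expected loss $G(r_t, C_t)$ (which is what the $DMW$ regret bound controls) to the realized loss $G(r_t, c_t)$ on the sampled actions; this is a routine martingale concentration but is the one place where the stochastic sampling of $c_t$ interacts nontrivially with the deterministic regret analysis.
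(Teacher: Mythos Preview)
Your proposal is correct and matches the paper's intended derivation: the paper states this as an immediate corollary with no separate proof, and you have filled in exactly the natural argument—apply the $MW_\eta$ regret bound to the row player, the $DMW_{s,\eta}$ regret bound to the column player, union-bound the two $\beta/2$ failure events, and invoke the preceding theorem. Your observation that one must pass from the measure-level guarantee $G(r_t,C_t)$ to the sampled-action guarantee $G(r_t,c_t)$ via a martingale concentration step is a valid point of care (and the $4\log(2/\beta)/\sqrt{T}$ term in the stated $\rho$ is exactly where such a term would be absorbed), though the paper's phrasing of the $DMW$ theorem is slightly ambiguous on whether this is already accounted for.
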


\section{A One-Query-To-Many-Analyst Private Mechanism}

\iffull
\subsection{An Offline Mechanism for Linear Queries}
\fi
We define our offline mechanisms for releasing linear queries in
Algorithm~\ref{alg:offline-q}.
\begin{algorithm}[h!]
  \caption{Offline Mechanism for Linear Queries with One-Query-to-Many-Analyst Privacy}
  \begin{algorithmic} \label{alg:offline-q}
    \STATE{\textbf{Input:} Database $D \in \cX^n$ and sets of linear queries $\cQ_1, \dots, \cQ_m$.}
    \STATE{\textbf{Initialize:}
    Let $\cQ = \bigcup_{j = 1}^{m} \cQ_j \cup \neg \cQ_j$,
    $D_0(x) = 1/|\univ|$ for every $x \in \univ$,
    $Q_0(q) =  1/|\cQ|$ for every $q \in \cQ$,
    $$T  = n \cdot \max\{\log|\univ|, \log |\cQ|\}, \longquad \eta =
    \frac{\eps}{2\sqrt{T \log(1/\delta)}}, \longquad s = 12 T$$}
    \STATE{\textbf{DataPlayer:}}
    \INDSTATE[1]{On input a query $\hat{q}_t$, for each $x \in \univ$:}
    \INDSTATE[2]{Update $D_{t}(x) = D_{t-1}(x) \cdot \exp\left(-\eta \left(\frac{1 + \hat{q}_t(D) - \hat{q}_t(x)}{2}\right)\right)$}
    \INDSTATE[1]{Choose $\hat{x}_{t} \getsr \dist{D}_{t}$ and send $\hat{x}_{t}$ to \textbf{QueryPlayer}}
    \STATE{}
    \STATE{\textbf{QueryPlayer:}}
    \INDSTATE[1]{On input a data element $\hat{x_t}$, for each $q \in \cQ$:}
    \INDSTATE[2]{Update $Q_{t+1}(q) = Q_{t}(q) \cdot
    \exp\left(-\eta\left(\frac{1 + q(D) -q(\hat{x}_t)}{2} \right)\right)$}
    \INDSTATE[2]{Let $P_{t+1} = \projs Q_{t+1}$}
    \INDSTATE[1]{Choose $\hat{q}_{t+1} \getsr \dist{P}_{t+1}$ and send $\hat{q}_{t+1}$ to \textbf{DataPlayer}}
    \STATE{}
    \STATE{\textbf{GenerateSynopsis:}}
    \INDSTATE[1]{Let $\widehat{D} = (\hat{x}_1, \dots, \hat{x}_{T})$.}
    \INDSTATE[1]{Run sparse vector on $\widehat{D}$, obtain a set of at most $s$ queries $\cQ_{f}$}
    \INDSTATE[1]{Run Laplace Mechanism, obtain answer $a_{q}$ for each $q \in \cQ_{f}$}
    \INDSTATE[1]{Output $\widehat{D}$ to all analysts.}
    \INDSTATE[1]{For each $q \in \cQ_{f}$, output $(q, a_{q})$ to the analyst that issued $q$.}
  \end{algorithmic}
\end{algorithm}

\iffull
\subsubsection{Accuracy Analysis}
\else
\paragraph*{Accuracy Analysis}
\fi
\begin{theorem} \label{thm:offlineacc4counting}
The offline algorithm for linear queries is $(\alpha, \beta)$-accurate for
\[
\alpha = O\left(\frac{\sqrt{\log(|\cX| +
|\cQ|)\log(1/\delta)}\log(|\cQ|/\beta)}{\eps \sqrt{n}}\right).
\]
\end{theorem}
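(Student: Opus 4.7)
The plan is to view Algorithm~\ref{alg:offline-q} as simulating no-regret play between a ``data player'' and a ``query player'' in a suitably defined zero-sum game, obtain from Corollary~\ref{cor:approxminmax} that $\hat D$ is approximately correct on all but $s$ queries, and then clean up the remaining queries using the sparse vector and Laplace mechanisms.

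For the game, let $G(x, q) = (1 + q(D) - q(x))/2 \in [0,1]$ on row actions $\cX$ and column actions $\cQ = \bigcup_j \cQ_j \cup \neg \cQ_j$. The data player's update corresponds to $MW_\eta$ with loss $L_t(x) = G(x, \hat q_t)$, and (using the symmetry introduced by including the negations $\neg q$ as separate column actions) the query player's update corresponds to $DMW_{s,\eta}$ playing the opposing role, so Corollary~\ref{cor:approxminmax} applies. Since the data player can play the distribution induced by $D$ itself and achieve $G(D,q) = 1/2$ against every column action, the game value satisfies $v \leq 1/2$. With probability at least $1-\beta/2$ the corollary yields
\[
\max_{Q \in \Delta_s(\cQ)} G(\hat D, Q) \leq v + 2\rho \leq \tfrac{1}{2} + 2\rho,
\]
where $\rho = \eta + \max\{\log|\cX|, \log|\cQ|\}/(\eta T) + 4\log(2/\beta)/\sqrt{T}$. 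Substituting the chosen $T = n \max\{\log|\cX|, \log|\cQ|\}$ and $\eta = \eps/(2\sqrt{T\log(1/\delta)})$ yields $\rho = O(\sqrt{\log(|\cX|+|\cQ|)\log(1/\delta)}/(\eps\sqrt{n}))$.

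Next I translate this into a per-query statement: for any density-$s$ measure $Q$ on $\cQ$, the bound unpacks to $\Ex{q \sim \tilde Q}{q(D) - q(\hat D)} \leq 4\rho$. If strictly more than $s$ queries in $\cQ$ satisfied $q(D) - q(\hat D) > 4\rho$, then placing $Q(q) = 1$ on any $s$ of them would yield a valid density-$s$ measure violating this bound. Hence at most $s$ queries in $\cQ$ exceed the threshold; since $\cQ$ contains $\neg q$ for every original $q$, at most $2s$ original queries satisfy $|q(\hat D) - q(D)| > 4\rho$. I then invoke Lemma~\ref{lem:sv} with threshold $\alpha = 4\rho$, sparsity $k = 2s$, and $(1/n)$-sensitive functions $q \mapsto |q(\cdot) - q(\hat D)|$, returning a set $\cQ_f$ of size at most $2s$; with probability $\geq 1-\beta/4$, every query outside $\cQ_f$ has $|q(\hat D) - q(D)| \leq 4\rho + O(\sqrt{s\log(1/\delta)}\log(|\cQ|/\beta)/(\eps n))$. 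Lemma~\ref{lem:laplace} then answers the queries in $\cQ_f$ with error of the same asymptotic order, with probability $\geq 1-\beta/4$. Substituting $s = 12T = O(n \log(|\cX|+|\cQ|))$ shows both the sparse-vector and Laplace slacks are $O(\sqrt{\log(|\cX|+|\cQ|)\log(1/\delta)}\log(|\cQ|/\beta)/(\eps\sqrt{n}))$, matching the first-stage error up to lower-order factors. A union bound over the three failure events (game convergence, sparse vector, Laplace) of total weight $\leq \beta$ completes the proof.

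The main obstacle is the first step: one must pair the algorithm's per-round losses with a well-defined zero-sum game (where inclusion of $\neg q$ is what ensures both directions of error are penalized symmetrically) and correctly pin down $v \leq 1/2$, so that the restricted-adversary guarantee of Corollary~\ref{cor:approxminmax} can be massaged into the per-query accuracy statement driving Step~2. The remaining steps are essentially routine compositions of previously established building blocks with the prescribed parameter choices.
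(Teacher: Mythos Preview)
Your proposal is correct and follows essentially the same route as the paper: cast the two multiplicative-weights updates as no-regret play in the game $G(x,q)=(1+q(D)-q(x))/2$, invoke Corollary~\ref{cor:approxminmax} to bound the error of $\hat D$ against every density-$s$ column mixture, deduce that only $O(s)$ queries can be bad, and then clean those up with sparse vector plus Laplace. The only cosmetic difference is that the paper additionally pins down $v_s=1/2$ (their Claim~\ref{clm:acc1}) and uses the two-sided bound to write $\max_{Q\in\Delta_s}\Ex{q\sim\tilde Q}{|q(D)-q(\hat D)|}\le 4\rho$ directly, whereas you use only $v\le 1/2$ and recover the two-sided statement via the closure of $\cQ$ under negation; this is why you end up with a bound of $2s$ bad queries instead of the paper's $s$, which is harmless asymptotically (and in fact your count could be tightened to $s$, since each original query with $|q(D)-q(\hat D)|>4\rho$ contributes exactly one of $q,\neg q$ to the bad set in $\cQ$).
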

\begin{proof}
Observe that the algorithm is computing an approximate equilibrium of the game
$G_{D}(x, q) = \frac{1 + q(D) - q(x)}{2}$.  Let $v, v_{s}$ be the value and
constrained value of this game, respectively.  First, we pin down the quantities
$v$ and $v_{s}$.
\begin{claim} \label{clm:acc1}
  For every $D$, the value and constrained value of $G_{D}$ is $1/2$.
\end{claim}
\iffull
\begin{proof}[\longproof of Claim~\ref{clm:acc1}]
  It's clear that the value (and hence constrained value) is at most $1/2$,
  because
  \[
    \min_{x} \max_{q} \frac{1 + q(D) - q(x)}{2} \leq \max_{q} \frac{1 + q(D) -
    q(D)}{2} = \frac{1}{2}.
  \]
  Suppose we choose $x$ such that $(1 + q(D) - q(x))/2 < 1/2$ for some $q \in
  \cQ$.  Then, since the query $q' = 1-q$ is also in $\cQ$, $(1+q'(D) - q'(x))/2
  > 1/2$.  But then $\max_{q \in \cQ} (1+q(D) - q(x))/2 > 1/2$, so the value of
  the game is at least $1/2$.

  For the constrained value, suppose we choose $x$ such that $\Ex{q \getsr
  Q}{(1+q(D)-q(x))/2} < 1/2$ for some $Q \in Q_{s}$.  Then we can flip every query
  in $Q$ to get a new distribution $Q'$ such that $\Ex{q \getsr Q'}{(1 + q(D) -
  q(x))/2} > 1/2$.  So $v_{s} \geq 1/2$ as well.
\end{proof}
\else
We omit the proof, which considers the payoff to the data player if he plays the
true database $D$ as his strategy.
\fi

Let $\hat{D} = \frac{1}{T} \sum_{t=1}^{T} x_{t}$.  By
Corollary~\ref{cor:approxminmax},
\[
  v_{s} - 2\rho \leq \max_{Q \in \Delta_{s}(\cQ)} \left( \frac{1}{2} \Ex{q
    \getsr \dist{Q}}{1 + q(D) - q(\hat{D})}\right) \leq v + 2\rho.
\]
Applying Claim~\ref{clm:acc1} and rearranging terms, with probability at least
$1 - \beta/3$,
\iffull
\begin{align*}
&\left| \max_{Q \in \Delta_{s}(\cQ)} \left(\Ex{q \getsr \dist{Q}}{q(D) -
  q(\hat{D})}\right) \right|
=  \max_{Q \in \Delta_{s}(\cQ)} \left(\Ex{q \getsr \dist{Q}}{\left|q(D) - q(\hat{D})\right|}\right)
\leq{} 4\rho \\
&={} 4\left( \eta + \frac{\max\{\log|\cX|, \log|\cQ|\}}{\eta T} + \frac{4 \log(2/\beta)}{\sqrt{T}} \right) \\
&={}  O\left(\frac{\sqrt{\log(|\cX| + |\cQ|) \log(1/\delta)} +
\log(1/\beta)}{\eps \sqrt{n}}\right) := \alpha_{\hat{D}}.
\end{align*}
\else
\begin{align*}
&\left| \max_{Q \in \Delta_{s}(\cQ)} \left(\Ex{q \getsr \dist{Q}}{q(D) - q(\hat{D})}\right) \right| \\
&={}  O\left(\frac{\sqrt{\log(|\cX| + |\cQ|) \log(1/\delta)} +
\log(1/\beta)}{\eps \sqrt{n}}\right) := \alpha_{\hat{D}}.
\end{align*}
\fi

The previous statement suffices to show that $|q(D) - q(\dist{D})| \leq
\alpha_{\hat{D}}$ for all but $s$ queries.  Otherwise, the uniform distribution
over the bad queries would be a distribution over queries contained in
$\Delta_{s}(\cQ)$, with expected error larger than $\alpha_{\hat{D}}$.

We can now run the sparse vector algorithm (Lemma~\ref{lem:laplace}). With
probability at least $1-\beta/3$, it will identify every query $q$ with error
larger than $\alpha_{\hat{D}} + \alpha_{SV}$ for
\[
\alpha_{\mathrm{SV}} =
O\left(\frac{\sqrt{s \log(1/\delta)}\log(|\cQ|/\beta)}{\eps n} \right).
\]
Since there are at most $s$ such queries, with probability at least $1-\beta/3$,
the Laplace mechanism (Lemma~\ref{lem:sv}) answers these queries to within error
\[
\alpha_{\mathrm{Lap}} = O\left( \frac{\sqrt{s \log(1/\delta)}\log(s/\beta)}{\eps
n} \right).
\]
Now, observe that in the final output, there are two ways that a query can be
answered: either by $\hat{D}$, in which case its answer can have error as large
as $\alpha_{\hat{D}} + \alpha_{\mathrm{SV}}$, or by the Laplace mechanism, in
which case its answer can have error as large as $\alpha_{\mathrm{Lap}}$.  Thus,
with probability at least $1-\beta$, every query has error at most
$\max\{\alpha_{\hat{D}} + \alpha_{\mathrm{SV}}, \alpha_{\mathrm{Lap}}\}$.
Substituting our choice of $s = 12T = O(n \log(|\univ| + |\cQ|))$ and
simplifying, we conclude that the mechanism is $(\alpha, \beta)$-accurate for
\[
\alpha = O\left(\frac{\sqrt{\log(|\cX| + |\cQ|)
\log(1/\delta)}\log(|\cQ|/\beta)}{\eps \sqrt{n}} \right).
\]
\end{proof}

\iffull
\subsubsection{Data Privacy}
\else
\paragraph*{Data Privacy}
\fi

\begin{theorem} \label{thm:dataprivacy4counting}
  Algorithm~\ref{alg:offline-q} satisfies $(\eps, \delta)$-differential privacy
  for the data.
\end{theorem}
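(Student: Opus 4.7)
The key observation is that the output $\hat{D} = (\hat{x}_1, \dots, \hat{x}_T)$ depends on $D$ \emph{only} through the query transcript $(\hat{q}_1, \dots, \hat{q}_T)$. Given the query history, the data player's sampling distribution $\dist{D}_t$ does not actually depend on $D$: in the exponent the term $\eta \hat{q}_s(D)/2$ is constant in $x$ and cancels under normalization, so $\dist{D}_t(x) \propto D_0(x)\exp(\eta \sum_{s<t} \hat{q}_s(x)/2)$. Thus $\hat{D}$ is post-processing (via a channel that does not access $D$) of the query transcript, and by the post-processing property of differential privacy it suffices to show that $(\hat{q}_1,\dots,\hat{q}_T)$ is differentially private with respect to $D$.

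To bound the privacy of the query transcript I will apply the composition theorem (Lemma~\ref{lem:composition}) to the $T$ query-sampling steps. Fixing the full history through round $t$, the distribution $\dist{P}_{t+1}$ from which $\hat{q}_{t+1}$ is drawn depends on $D$ via the cumulative log-weight $\ln Q_{t+1}(q) = \ln Q_0(q) - \eta \sum_{s\leq t}(1+q(D)-q(\hat{x}_s))/2$. Since each linear query $q$ is $(1/n)$-sensitive, $\ln Q_{t+1}(q)$ has sensitivity at most $\eta t/(2n)$ in $D$, giving $Q_{t+1}(q)/Q'_{t+1}(q)\in[e^{-\gamma_t},e^{\gamma_t}]$ for $\gamma_t=\eta t/(2n)$. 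I then need to show that the Bregman projection at most doubles this ratio. Writing $P_{t+1}(q)=\min(1, c Q_{t+1}(q))$ with $c$ chosen so that $\sum_q \min(1,cQ_{t+1}(q))=s$, monotonicity of the density constraint in $c$ forces the scaling factors to satisfy $c/c' \in [e^{-\gamma_t}, e^{\gamma_t}]$. A short case analysis (both components uncapped, one capped, both capped) then yields $\dist{P}_{t+1}(q)/\dist{P'}_{t+1}(q)\in[e^{-2\gamma_t},e^{2\gamma_t}]$, so the per-step privacy loss is $\eps_0(t) \le \eta t/n$.

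Applying Lemma~\ref{lem:composition} to these $T$ sampling steps with this per-step bound, and plugging in $\eta = \eps/(2\sqrt{T\log(1/\delta)})$ and $T = n\cdot\max\{\log|\univ|,\log|\cQ|\}$, yields that the query transcript (and hence $\hat{D}$) is data-private in the required regime. The remaining outputs---the set $\cQ_f$ identified by the sparse vector algorithm on the $(1/n)$-sensitive functions $|q(D)-q(\hat{D})|$, and the noisy answers $\{a_q\}_{q \in \cQ_f}$---are independently data-private by Lemmas~\ref{lem:sv} and~\ref{lem:laplace}, since $\hat{D}$ and $\cQ_f$ are already differentially private. A final composition step over these three ingredients, with appropriately scaled privacy budgets, yields the claimed $(\eps,\delta)$-differential privacy for the data. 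The main technical obstacle is the projection analysis: because $\projs$ is non-linear (it caps individual weights at $1$), the per-query log-weight bound does not propagate trivially, and requires the scale-invariance and monotonicity argument sketched above to show that it only loses a factor of two.
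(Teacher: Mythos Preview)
Your analysis of the $\hat q_{t+1}$ step (the bound on $Q_{t+1}$, the projection argument, and the resulting per-step loss) matches the paper's Claim~3.6 and Lemma~3.4 essentially line for line, and the treatment of the sparse-vector and Laplace stages is the same. You also make a correct and cleaner observation the paper does not exploit: given $\hat q_1,\dots,\hat q_t$, the normalized distribution $\tilde D_t$ is independent of $D$ (the $\hat q_s(D)$ term is constant in $x$ and cancels), so the $\hat x_t$ steps contribute \emph{zero} privacy loss rather than the $\eps_0$ the paper records in Claim~3.5.

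However, the reduction in your first paragraph is not valid as stated. It is \emph{not} true that $\hat D$ is post-processing of the query transcript $(\hat q_1,\dots,\hat q_T)$. The interleaving matters: $\hat q_{t+1}$ is sampled from a distribution that depends on $\hat x_1,\dots,\hat x_t$ \emph{and} on $D$, so conditioning on the future queries pushes a $D$-dependent posterior back onto the earlier $\hat x$'s. Concretely, for $T=2$ one computes
\[
\Pr[\hat x_1,\hat x_2\mid \hat q_1,\hat q_2,D]
=\Pr[\hat x_2\mid \hat q_1,\hat q_2]\cdot
\frac{\Pr[\hat x_1\mid \hat q_1]\,\Pr[\hat q_2\mid \hat x_1,D]}{\Pr[\hat q_2\mid \hat q_1,D]},
\]
and the second factor depends on $D$. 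So ``privacy of the marginal $(\hat q_1,\dots,\hat q_T)$'' does not by itself imply privacy of $\hat D$. What is true (and what your second paragraph actually does) is that the \emph{full} transcript $(\hat q_1,\hat x_1,\dots,\hat q_T,\hat x_T)$ is a $2T$-fold adaptive composition in which the $\hat x_t$ steps have per-step loss $0$ and the $\hat q_t$ steps have per-step loss at most $\eta t/n$; privacy of $\hat D$ then follows by projection from the full transcript. That is exactly the paper's argument, with your tighter accounting at the $\hat x$ steps. You should rephrase the first paragraph accordingly (or, equivalently, say that $\hat D$ is a deterministic function of $(\hat q_{1:T},R)$ where $R$ is the data player's private randomness, and analyze the privacy of $(\hat q_{1:T}\mid R)$).
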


Before proving the theorem, we will state a useful lemma about the Bregman
projection onto the set of high density measures (Definition~\ref{def:proj}).
\begin{lemma} [Projection Preserves \iffull Differential \fi Privacy] \label{lem:projdp}
  Let $A_0, A_1 \from \cA \to [0,1]$ be two full-support measures over a set of
  actions $\cA$ and $s \in (0,|\cA|)$ be such that $|A_0|, |A_1| \leq s$ and
  $|\ln(A_0(a) / A_1(a))| \leq \eps$ for every $a \in \cA$.  Let $A'_0 =
  \projs{A_0}$ and $A'_1 = \projs{A_1}$.  Then $|\ln(A'_0(a) / A'_1(a))| \leq
  2\eps$ for every $a \in \cA$.
\end{lemma}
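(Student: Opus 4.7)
The plan is to analyze the scaling constants used by the projection and then perform a short case analysis on whether an entry is capped. Write $A'_i = \projs{A_i}$ as $A'_i(a) = \min\{1, c_i A_i(a)\}$, where $c_i \geq 1$ is the unique scalar with $\sum_a \min\{1, c_i A_i(a)\} = s$, as in Definition~\ref{def:proj}.

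The first step is to show that $c_0/c_1 \in [e^{-\eps}, e^{\eps}]$. Define $\phi_i(c) = \sum_{a} \min\{1, c A_i(a)\}$, which is non-decreasing in $c$. The hypothesis $A_0(a) \leq e^{\eps} A_1(a)$ for every $a$ implies
\[
\min\{1, c A_0(a)\} \leq \min\{1, (c e^{\eps}) A_1(a)\},
\]
so $\phi_0(c) \leq \phi_1(c e^{\eps})$ for every $c$. Evaluating at $c = c_0$ gives $s = \phi_0(c_0) \leq \phi_1(c_0 e^{\eps})$, and since $\phi_1(c_1) = s$ and $\phi_1$ is monotone, we get $c_1 \leq c_0 e^{\eps}$. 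The symmetric argument yields $c_0 \leq c_1 e^{\eps}$.

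The second step is a case analysis on each action $a$, combining the ratio bound $A_0(a)/A_1(a) \in [e^{-\eps}, e^{\eps}]$ with $c_0/c_1 \in [e^{-\eps}, e^{\eps}]$. If neither entry is capped, then $A'_0(a)/A'_1(a) = (c_0/c_1)(A_0(a)/A_1(a)) \in [e^{-2\eps}, e^{2\eps}]$. If both are capped, the ratio is $1$. The only delicate case is when exactly one is capped, say $A'_0(a) = 1$ while $A'_1(a) = c_1 A_1(a) < 1$. Then using $c_0 A_0(a) \geq 1$ we get
\[
c_1 A_1(a) \geq c_1 e^{-\eps} A_0(a) \geq (c_1/c_0) e^{-\eps} \geq e^{-2\eps},
\]
so $A'_0(a)/A'_1(a) = 1/(c_1 A_1(a)) \in [1, e^{2\eps}]$. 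The other asymmetric case is handled identically. In every case $|\ln(A'_0(a)/A'_1(a))| \leq 2\eps$.

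The main obstacle is the mixed-capping case, since there one must combine both the pointwise ratio bound on the $A_i$ and the ratio bound on the $c_i$ to rule out a large blow-up; this is where the factor of $2$ in the conclusion comes from, and it is tight in general. The monotonicity argument giving $c_0/c_1 \in [e^{-\eps}, e^{\eps}]$ is the other key step, but it is clean once one writes down the right monotone functional $\phi_i$.
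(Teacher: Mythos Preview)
Your proof is correct and follows essentially the same strategy as the paper's: first bound the ratio of the scaling constants $c_0/c_1$ to within $[e^{-\eps}, e^{\eps}]$, then combine this with the pointwise ratio bound on the $A_i$. The paper derives the $c$-ratio by contradiction and finishes with a triangle-inequality style step $\left|\ln\frac{\min\{1,c_0 A_0(a)\}}{\min\{1,c_1 A_1(a)\}}\right| \leq \left|\ln\frac{\min\{1,c_0 A_0(a)\}}{\min\{1,c_0 A_1(a)\}}\right| + \left|\ln\frac{c_1}{c_0}\right|$, while you use the monotone functional $\phi_i$ and an explicit capping case analysis; these are just two presentations of the same argument.
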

\iffull
\begin{proof} [\longproof of Lemma~\ref{lem:projdp}]
  Recall that to compute $A' = \projs{A}$, we find a ``scaling factor'' $c > 1$
  such that
  \[
    \sum_{a \in \cA} \min \{1, c A(a)\} = s,
  \]
  and set $A'(a) = \min \{1, c A(a) \}$.  Let $c_0$ and $c_1$ be the scaling
  factors for $A'_0$ and $A'_1$ respectively.  Assume without loss of generality
  that $c_0 \leq c_1$.  First, observe that
  \[
    \left| \ln \left( \frac{\min\{1, c_0 A_0(a)\}}{\min\{1, c_0 A_1(a)\}}
    \right) \right| \leq \left| \ln \left( \frac{A_0(a)}{ A_1(a)} \right)
    \right| \leq \eps,
  \]
  for every $a \in \cA$.  Second, we observe that $c_1 / c_0 \leq e^{\eps}$.  If
  this were not the case, then we would have $c_1 A_1(a) \geq c_0 A_1(a) e^{\eps}
  \geq c_0 A_0(a)$ for every $a \in \cA$, with strict inequality for at least
  one $a$.  But then,
  \[
    \sum_{a \in \cA} \min \{1, c_1 A_1(a) \} > \sum_{a \in \cA} \min \{1, c_0
    A_0(a)\} = s,
  \]
  which would contradict the choice of $c_1$.  Thus,
  \begin{equation*}
    \left| \ln\left(\frac{\min \{1, c_0 A_0(a)\}}{\min \{1, c_1 A_1(a)\}}
    \right) \right| \leq \left| \ln\left( \frac{\min \{1, c_0 A_0(a)\}}{\min
      \{1, c_0 A_1(a)\}} \right)\right| + \left| \ln \left( \frac{c_1}{c_0}
      \right) \right| \leq \eps + \eps = 2\eps,
  \end{equation*}
  for every $a \in \cA$.
\end{proof}
\else
We omit the proof of this lemma for lack of space.
\fi
Now we prove the main result of this section.
\begin{proof} [\longproof of Theorem~\ref{thm:dataprivacy4counting}]
  We focus on analyzing the privacy properties of the output $\hat{D} =
  (\hat{x}_{1}, \dots, \hat{x}_{T})$, the privacy of the final stage of the
  mechanism will follow from standard arguments in differential privacy.  We
  will actually show the stronger guarantee that the sequence $v = (\hat{x}_{1},
  \hat{q}_{1}, \dots, \hat{x}_{T}, \hat{q}_{T})$ is differentially private for
  the data.  Fix a pair of adjacent databases $D_0 \sim D_1$ and let $V_0, V_1$
  denote the distribution on sequences $v$ when the mechanism is run on database
  $D_0, D_1$ respectively.  We will show that with probability at least
  $1-\delta/3$ over $v = (\hat{x}_{1}, \hat{q}_{1}, \dots, \hat{x}_{T},
  \hat{q}_{T}) \getsr V_0$,
  \begin{mathdisplayfull}
    \left| \ln\left(\fullfrac{V_0(v)}{V_1(v)} \right) \right| \leq
    \fullfrac{\eps}{3},
  \end{mathdisplayfull}
  which is no weaker than $(\eps/3, \delta/3)$-differential privacy.  To do so,
  we analyze the privacy of each element of $v$, $\hat{x}_t$ or $\hat{q}_{t}$,
  and apply the composition analysis of Dwork, Rothblum, and
  Vadhan~\cite{DRV10}. Define $\eps_0 = 2\eta T / n$.
  \begin{claim} \label{clm:nextx} For every $v$, and every $t \in [T]$,
    \[
      \left| \ln\left(\frac{V_0(\hat{x}_t \mid \hat{x}_1, \hat{q}_1, \dots,
      \hat{x}_{t-1}, \hat{q}_{t-1})}{V_1(\hat{x}_t \mid \hat{x}_1, \hat{q}_1,
      \dots, \hat{x}_{t-1}, \hat{q}_{t-1})}\right) \right| \leq \eps_0.
    \]
  \end{claim}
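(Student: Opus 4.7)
The plan is to condition on the partial history $(\hat{x}_1, \hat{q}_1, \ldots, \hat{x}_{t-1}, \hat{q}_{t-1})$ and thereby reduce the claim to comparing two explicit sampling distributions. Once this history is fixed, the queries $\hat{q}_1, \ldots, \hat{q}_{t-1}$ are determined, so the data player's unprojected measure $D_t$ is a deterministic function of the input database, and $\hat{x}_t$ is sampled with probability $D_t(x)/|D_t|$. It therefore suffices to bound the log-ratio of these sampling probabilities between the two adjacent databases $D_0, D_1$, uniformly in $x$.

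Unrolling the multiplicative weights update from the uniform initial measure yields
\[
  D_t(x) \;=\; \frac{1}{|\univ|} \cdot \exp\!\Biggl(-\frac{\eta}{2} \sum_{s} \bigl(1 + \hat{q}_s(D) - \hat{q}_s(x)\bigr)\Biggr),
\]
where the sum ranges over at most $T$ past queries and $D$ stands for whichever of the two databases is the private input. When forming the log-ratio between $D_0$ and $D_1$, the uniform prefactor and every $\hat{q}_s(x)$ term (independent of the database) cancels, leaving only
\[
  -\frac{\eta}{2} \sum_{s} \bigl(\hat{q}_s(D_0) - \hat{q}_s(D_1)\bigr).
\]
Since each $\hat{q}_s$ is a linear query with sensitivity $1/n$, every summand has absolute value at most $1/n$, so this log-ratio is at most $\eta T/(2n)$ in absolute value, uniformly in $x$.

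The one remaining subtlety is the normalization $|D_t| = \sum_{x} D_t(x)$. Because the per-point bound is uniform in $x$, every summand of $|D_t|$ shifts between the two databases by the same multiplicative factor in $[e^{-\eta T/(2n)}, e^{\eta T/(2n)}]$, so $|D_t|$ itself stays in this range. Combining the numerator and denominator bounds doubles the exponent, giving a log-ratio of at most $\eta T/n = \eps_0/2 \leq \eps_0$ on the sampling probabilities, as required. I do not anticipate a real obstacle; the argument is essentially a direct sensitivity calculation, with the key ingredients being the multiplicative (exponential) form of the MW update and the $1/n$ sensitivity of linear queries.
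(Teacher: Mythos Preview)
Your proof is correct and follows essentially the same approach as the paper's: unroll the multiplicative-weights update, cancel the $x$-dependent terms, and bound the remaining sum by $(t-1)/n \le T/n$ using the $1/n$ sensitivity of linear queries. You are in fact slightly more careful than the paper, which writes the conditional probability as the unnormalized measure and never explicitly addresses the normalizer for this claim (though it does so for the companion claim about $\hat{q}_t$). One small sharpening you could make: since the log-ratio of the unnormalized measures, $-\tfrac{\eta}{2}\sum_j(\hat{q}_j(D_0)-\hat{q}_j(D_1))$, is literally \emph{independent of $x$}, the normalizers shift by exactly the same factor and the normalized sampling distributions $\dist{D}_t$ are identical under $D_0$ and $D_1$; so the true log-ratio here is $0$, not merely $\le \eta T/n$.
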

  \begin{proof}[\longproof of Claim~\ref{clm:nextx}]
\iffull
    We can prove the statement by the following direct calculation.
    \begin{align*}
      &\left| \ln\left(\frac{V_0(\hat{x}_t \mid \hat{x}_1, \hat{q}_1, \dots,
      \hat{x}_{t-1}, \hat{q}_{t-1})}{V_1(\hat{x}_t \mid \hat{x}_1, \hat{q}_1,
      \dots, \hat{x}_{t-1}, \hat{q}_{t-1})}\right) \right|
      ={} \left| \ln\left (\frac{\exp\left(-(\eta/2) \sum_{j=1}^{t-1} 1 +
      \hat{q}_j(D_0) - \hat{q}_j(\hat{x}_t) \right)}{\exp\left(-(\eta/2)
        \sum_{j=1}^{t-1} 1 + \hat{q}_j(D_1) - \hat{q}_j(\hat{x}_t) \right)}
        \right) \right| \\
      ={} &\frac{\eta}{2} \left| \left ( \sum_{j=1}^{t-1} 1 + \hat{q}_j(D_0) -
      \hat{q}_j(\hat{x}_t) \right) - \left(\sum_{j=1}^{t-1} 1 + \hat{q}_j(D_1) -
      \hat{q}_j(\hat{x}_t) \right) \right| \\
      ={} &\frac{\eta}{2} \left| \sum_{j=1}^{t-1} \hat{q}_j(D_0) -
      \hat{q}_j(D_1) \right| \leq \frac{\eta (t-1)}{2 n} \leq \frac{\eta T}{2 n}
      \leq \eps_0
    \end{align*}
\else
    The left-hand side is as follows.
    \begin{align*}
      &\left| \ln\left (\frac{\exp\left(-(\eta/2) \sum_{j=1}^{t-1} 1 +
      \hat{q}_j(D_0) - \hat{q}_j(\hat{x}_t) \right)}{\exp\left(-(\eta/2)
        \sum_{j=1}^{t-1} 1 + \hat{q}_j(D_1) - \hat{q}_j(\hat{x}_t) \right)}
        \right) \right| \\
      ={} &\frac{\eta}{2} \left| \sum_{j=1}^{t-1} \hat{q}_j(D_0) -
      \hat{q}_j(D_1) \right| \leq \frac{\eta (t-1)}{2 n} \leq \frac{\eta T}{2 n}
      \leq \eps_0
    \end{align*}
\fi
  \end{proof}
  \begin{claim} \label{clm:nextq} For every $v$, and every $t \in [T]$,
    \[
      \left| \ln\left(\frac{V_0(\hat{q}_t \mid \hat{x}_1, \hat{q}_1, \dots,
      \hat{x}_{t})}{V_1(\hat{q}_t \mid \hat{x}_1, \hat{q}_1, \dots,
      \hat{x}_{t})}\right) \right| \leq \eps_0.
    \]
  \end{claim}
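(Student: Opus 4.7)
The plan is to mirror the proof of Claim~\ref{clm:nextx}, inserting one additional step to handle the Bregman projection. Conditioned on the history $\hat{x}_1, \hat{q}_1, \ldots, \hat{x}_t$, the variable $\hat{q}_t$ is drawn from $\dist{P}_t$ where $P_t = \projs{Q_t}$, and the unprojected measure admits the closed form
\begin{mathdisplayfull}
Q_t(q) = Q_0(q) \prod_{j=1}^{t} \exp\!\left(-\frac{\eta}{2}\bigl(1 + q(D) - q(\hat{x}_j)\bigr)\right).
\end{mathdisplayfull}
Writing $Q_t^{D_0}$ and $Q_t^{D_1}$ for this measure evaluated on the two adjacent databases, I would first take their ratio. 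All terms not involving $D$ cancel, and since each linear query is $(1/n)$-sensitive, I obtain
\begin{mathdisplayfull}
\left|\ln\!\left(\frac{Q_t^{D_0}(q)}{Q_t^{D_1}(q)}\right)\right| = \frac{\eta}{2}\,\Bigl|\sum_{j=1}^{t}\bigl(q(D_0) - q(D_1)\bigr)\Bigr| \leq \frac{\eta T}{2n}
\end{mathdisplayfull}
for every $q \in \cQ$. This is the direct analogue of the calculation in Claim~\ref{clm:nextx}, but it controls only the unprojected measure, not $\dist{P}_t$.

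Second, I would invoke Lemma~\ref{lem:projdp} to transfer this bound through the projection. Its hypothesis $|Q_t| \leq s$ is satisfied because $Q_0$ has density $1 \leq s$ and each multiplicative update only scales entries downward, so the total mass never grows. The conclusion is $\bigl|\ln(P_t^{D_0}(q)/P_t^{D_1}(q))\bigr| \leq \eta T/n = \eps_0/2$ for every $q$.

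Finally, the sampling distribution is $\dist{P}_t(q) = P_t(q)/|P_t|$, and by the defining property of $\projs{\cdot}$ the density $|P_t|$ equals exactly $s$ under both $D_0$ and $D_1$. The two normalizations are therefore identical and cancel, so the pointwise log-ratio bound on $P_t$ transfers verbatim to $\dist{P}_t$, giving the required $\eps_0/2 \leq \eps_0$.

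I do not expect any serious obstacle. The only delicate point is the projection step itself: without it, the total masses $|Q_t^{D_0}|$ and $|Q_t^{D_1}|$ could drift apart and contribute an unbounded multiplicative factor upon normalization. Lemma~\ref{lem:projdp} has been engineered precisely so that projection pins both densities to the common value $s$, at the cost of only a factor-of-two loss in the pointwise log-ratio, which is exactly the slack that has been built into the definition $\eps_0 = 2\eta T/n$.
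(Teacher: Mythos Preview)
Your proof is correct and follows essentially the same route as the paper: bound the log-ratio of the unprojected measures $Q_t$ by $\eta T/(2n)$ using $(1/n)$-sensitivity, push this through the projection via Lemma~\ref{lem:projdp} at a factor-of-two cost, and then normalize. The one genuine difference is in the last step: the paper bounds the change in the normalizer by $e^{\pm \eta T/n}$ and thereby loses another factor of two (arriving at exactly $\eps_0 = 2\eta T/n$), whereas you observe that $|P_t^{D_0}| = |P_t^{D_1}| = s$ by construction of $\projs{\cdot}$, so the normalizers cancel and you land at $\eps_0/2$. Your observation is correct and yields a slightly sharper constant, but the argument is otherwise identical to the paper's.
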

  \begin{proof}[\longproof of Claim~\ref{clm:nextq}]
    The sample $\hat{q}_t$ is made according to $\dist{P}_{t}$, which is the
    distribution corresponding to the projected measure $P_{t}$.  First we'll
    look at the unprojected measure $Q_{t}$. Observe that, for any database $D$
    and query $q$,
    \[
      Q_{t}(q) = \exp\left( -(\eta/2) \sum_{j=1}^{t-1} 1 + q(D) - q(\hat{x}_j)\right).
    \]
    Thus, if $Q_0(q)$ is the measure we would have when database $D_0$ is the input,
    and $Q_1(q)$ is the measure we would have when database $D_1$ is the input, then
    \begin{equation*}
      \left| \ln \left(\frac{Q_0(q)}{Q_1(q)} \right) \right| \leq \frac{\eta}{2}
      \left| \sum_{j=1}^{t-1} q_j(D_0) - q_j(D_1) \right| \leq \frac{\eta
      T}{2n},
    \end{equation*}
    for every $q \in \cQ$. Given that $Q_0$ and $Q_1$ satisfy this condition,
    Lemma~\ref{lem:projdp} guarantees that the projected measures satisfy
    \[
      \left| \ln \left(\fullfrac{P_0(q)}{P_1(q)} \right) \right| \leq \fullfrac{\eta T}{n}.
    \]
    Finally, we note that if the above condition is satisfied for every $q \in
    \cQ$, then the distributions $\dist{P_0}, \dist{P_{1}}$ satisfy
    \begin{equation*}
      \left| \ln \left(\fullfrac{\dist{P}_0(q)}{\dist{P}_1(q)} \right) \right|
      \leq \fullfrac{2 \eta T}{n} \leq \eps_0,
    \end{equation*}
    because the value of the normalizer also changes by at most a multiplicative
    factor of $e^{\pm \eta T / n}$.  We observe that $V_b(\hat{q}_t \mid
    \hat{x}_1, \hat{q}_1, \dots, \hat{x}_{t}) = \dist{P}_b(\hat{q}_t)$ for $b
    \in \bits$, which completes the proof of the claim.
\end{proof}
Now, the composition lemma (Lemma~\ref{lem:composition}) (for
$2T$-fold composition) guarantees that with probability at least $1-\delta/3$,
\[
\left| \ln\left(\fullfrac{V_0(v)}{V_1(v)} \right) \right| \leq \eps_0 \sqrt{4T
\log(3/\delta)} + 4\eps_0^2 T,
\]
which is at most $\eps/3$ by our choice of $\eps_0$.  This implies that
$\hat{D}$ is $(\eps/3, \delta/3)$-differentially private.

We note that the sparse vector computation
to find the $s$ queries with large error is $(\eps/3, \delta/3)$-differentially
private, by our choice of parameters (Lemma~\ref{lem:sv}), and the answers to
the queries found by sparse vector are $(\eps/3, \delta/3)$-differentially
private for our choice of parameters
(Lemma~\ref{lem:laplace}).\iffull\footnote{We could improve the constants in our
  privacy analysis slightly by finding the queries with large error using sparse
  vector and answering them using the Laplace mechanism in one step.  However,
  in our algorithm for achieving analyst-to-many privacy, we need to do the
  analogous steps separately, and thus we chose to present them this way to
maintain modularity.} \fi \space The theorem follows from composition.
\end{proof}

\iffull
\subsubsection{Query Privacy}
\else
\paragraph*{Query Privacy}
\fi
\begin{theorem} \label{thm:queryprivacy4counting}
  Algorithm~\ref{alg:offline-q} satisfies $(\eps,
  \delta)$-one-query-to-many-analyst differential privacy.
\end{theorem}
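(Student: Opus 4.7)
The plan is to adapt the argument of Theorem~\ref{thm:dataprivacy4counting} to track sensitivity with respect to a single query swap in $\cQ_{\id^*}$, exploiting the key asymmetry that the DataPlayer's update depends on $\cQ$ \emph{only} through the sampled queries $\hat q_t$, whereas the QueryPlayer's per-round sensitivity is controlled by the smoothness enforced by the Bregman projection. Fix query-adjacent $\cQ \sim \cQ'$ differing by the swap of one query $q$ for $q'$ in $\cQ_{\id^*}$, and set $\widetilde{\cQ} = \cQ \cap \cQ'$. The joint output given to analysts $i \neq \id^*$ splits into (a) the shared synopsis $\hat D = (\hat x_1, \ldots, \hat x_T)$, and (b) for each such $i$, the subset of sparse-vector-flagged queries lying in $\cQ_i$ together with their Laplace-noised answers. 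The proof will show each piece is $(\eps/2, \delta/2)$-differentially private with respect to the swap and combine via composition.

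Piece (b) is nearly immediate: the Laplace answers are deterministic functions of the fixed database $D$ and of the flagged indices only; and Lemma~\ref{lem:sv} (property 3) states that sparse vector is perfectly private with respect to swapping one query, so the distribution of flagged indices in $\cQ_i$ is unchanged across the two worlds (the marginal statement lifts to the joint over other indices by the algorithm's independent per-query processing). Piece (a) is the heart of the argument: the proof will analyze the internal trace $v = (\hat x_1, \hat q_1, \ldots, \hat x_T, \hat q_T)$ round by round, in the style of Claims~\ref{clm:nextx} and~\ref{clm:nextq} from the data-privacy proof, and then invoke Lemma~\ref{lem:composition}.

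For the $\hat x_t$ round, the measure $D_t$ is determined by $D$ and by the history $\hat q_1, \ldots, \hat q_{t-1}$ alone, with no direct dependence on $\cQ$; hence conditioned on identical histories the log-ratio is exactly $0$. For the $\hat q_t$ round, the unprojected weights $Q_t^0(q'')$ and $Q_t^1(q'')$ on any common query $q'' \in \widetilde{\cQ}$ agree exactly, since they are computed by the same update rule from the same initial mass $1/|\cQ| = 1/|\cQ'|$ and the same history of $\hat x_j$'s. Consequently $\dist P_t^0(q'') / \dist P_t^1(q'')$ reduces to the ratio of the two Bregman scaling constants $c_0/c_1$ (times the common $1/s$ normalizer). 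Using the defining identities $\sum_{\cQ} \min\{1, c_0 Q_t\} = s = \sum_{\cQ'} \min\{1, c_1 Q_t\}$ and the fact that the two sums differ only in the single swapped-query term (a quantity in $[0,1]$ because the projection caps each weight at $1$), a perturbation argument in the spirit of Lemma~\ref{lem:projdp} will give $|\ln(c_0/c_1)| = O(1/s)$, so the per-round sensitivity of $\hat q_t$ is $\eps_0 = O(1/s)$.

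The main remaining obstacle is the event $B$ that some $\hat q_t$ equals $q$ (impossible under $\cQ'$) or $q'$ (impossible under $\cQ$), which produces an infinite pointwise log-ratio on $v$. Since $\hat q_t$ is internal and only $\hat D$ is released, we can handle $B$ via the $\delta$-slack on the released marginal: smoothness forces each specific query to be sampled with probability at most $1/s$ per round, so a union bound gives $\Pr[B] \leq T/s$, and the standard split $V_0(\hat D \in S) \leq V_0(\hat D \in S, \neg B) + \Pr_{V_0}[B]$ absorbs this into $\delta$ once $s$ is chosen large enough in terms of $T$ and $\delta$. Composing $T$ rounds of the per-round bound $\eps_0 = O(1/s)$ via Lemma~\ref{lem:composition} then delivers the target $(\eps/2, \delta/2)$-DP for $\hat D$. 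The main technical challenges are (i) making the projection-constant perturbation bound precise, and (ii) verifying that the value of $s$ needed for query privacy remains compatible with the accuracy analysis of Theorem~\ref{thm:offlineacc4counting} up to polylogarithmic factors.
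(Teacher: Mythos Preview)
Your decomposition into pieces (a) and (b) is fine, and your treatment of (b) matches the paper. The gap is in (a), specifically in how you propose to handle the event $B$.

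With the algorithm's parameters $s = 12T$, the probability that some $\hat q_t$ lands on one of the (two or four) actions associated with the swapped query is $\Theta(T/s) = \Theta(1)$, not $\le \delta$. Your union bound $\Pr[B] \le T/s$ is correct, but it evaluates to a constant, so it cannot be absorbed into the $\delta$-slack. To force $\Pr[B] \le \delta/2$ you would need $s \gtrsim T/\delta$, and then the sparse-vector/Laplace step, whose error scales as $\sqrt{s}/(\eps n)$, incurs an extra $1/\sqrt{\delta}$ factor in $\alpha$. That is not polylogarithmic in $1/\delta$, so challenge (ii) in your plan cannot be met: the parameters needed for your argument are genuinely incompatible with the accuracy bound of Theorem~\ref{thm:offlineacc4counting}. (Your challenge (i), the $O(1/s)$ bound on $|\ln(c_0/c_1)|$, is also not obviously true in general, but even granting it, the argument fails at $B$.)

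The paper's proof avoids this by \emph{not} tracking the full trace $(\hat x_t, \hat q_t)$. It analyzes only the marginal on $(\hat x_1,\dots,\hat x_T)$ via Lemma~\ref{lem:sdtodp}, an interactive ``secrecy of the sample'' lemma: view each $\hat x_t$ as the output of a mechanism that is $\eta$-differentially private \emph{in its query inputs} $(\hat q_1,\dots,\hat q_{t-1})$, and view the two worlds as differing only in that each $\hat q_j$ is drawn from one of two distributions with statistical distance $\le 1/s$ (this SD bound is Lemma~\ref{lem:projsd}). Lemma~\ref{lem:sdtodp} then shows that even though the expected number of rounds in which the ``wrong'' query is sampled is $T\sigma = T/s = 1/12$, the cumulative privacy loss on the $\hat x$-marginal is $O\!\left(\eta (T\sigma)\sqrt{T\log(1/\delta)}\right) \le \eps$. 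The point is that a constant expected number of ``bad'' samples is harmless because each one perturbs the $\eta$-DP mechanism's output only by $\eta$; your trace-level argument loses this, since a single bad $\hat q_t$ already makes the trace probability ratio infinite.
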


Before proving query privacy of Algorithm~\ref{alg:offline-q}, we will state a
useful composition lemma.  The lemma is a generalization of the ``secrecy of the
sample lemma''~\cite{KLNRS07, DRV10} to the interactive setting. Consider the
following game:
\begin{itemize}
  \item Fix an $(\eps, \delta)$-differentially private mechanism $\cA \from \cU^*
    \to \cR$ and a bit $b \in \bits$.  Let $D_0 = \emptyset$.
  \item For $t = 1, \dots , T$:
    \begin{itemize}
      \item The (randomized) adversary $\cB(y_1,\dots,y_{t}; r)$
        chooses two distributions $B^0_t, B^1_t$ such that $SD(B^0_t, B^1_t)
        \leq \sigma$.
      \item Choose $x_{t} \getsr B^b_{t}$ and let $D_t = D_{t-1} \cup
        \set{x_{t}}$.
      \item Choose $y_{t} \getsr \cA(D_{t})$.
    \end{itemize}
\end{itemize}
For a fixed mechanism $\cA$ and adversary $\cB$, let $V^0$ be the distribution
on $(y_1, \dots, y_{T})$ when $b = 0$ and $V^1$ be the distribution on $(y_1,
\dots, y_{T})$ when $b = 1$.
\begin{lemma} \label{lem:sdtodp}
  If $\eps \leq 1/2$ and $T \sigma \leq 1/12$, then with probability at least $1
  - T\delta - \delta'$ over $y = (y_1, \dots, y_T) \getsr V^0$,
  \[
    \left| \ln \left( \fullfrac{V^0(y)}{V^1(y)} \right) \right| \leq \eps (T\sigma)
    \sqrt{2T \log(1/\delta')} + 30\eps^2(T\sigma) T.
  \]
\end{lemma}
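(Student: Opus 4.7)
The plan is to bound the log-ratio $\bigl|\ln(V^0(y)/V^1(y))\bigr|$ by decomposing it across rounds and invoking the advanced composition lemma (Lemma~\ref{lem:composition}). Writing the loss telescopically as $\sum_t Z_t$ with $Z_t = \ln\bigl(V^0(y_t \mid y_{<t})/V^1(y_t \mid y_{<t})\bigr)$, both conditionals share the observed history $y_{<t}$, so the adversary $\cB$ picks the same pair of distributions $B^0_t, B^1_t$ in the two worlds; the difference between the two conditional distributions arises only from the unobserved internal multisets $D^b_{t-1}$ that are integrated out.

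The first main step is a coupling argument to control this internal drift. By the definition of statistical distance, $x^0_t$ and $x^1_t$ can be maximally coupled so that they agree with probability at least $1-\sigma$ per round. Let $N$ denote the total number of rounds in which the coupled samples disagree. Then $\mathbb{E}[N] \leq T\sigma$, and a Chernoff bound combined with the hypothesis $T\sigma \leq 1/12$ gives $N \leq k$ with probability at least $1-\delta'/2$, for some $k = O(T\sigma + \sqrt{T\log(1/\delta')})$. Conditioned on this event, the two internal databases $D^0_{t-1}$ and $D^1_{t-1}$ differ in at most $k$ rows at every time $t$, so by group privacy applied to the $(\eps, \delta)$-DP mechanism $\cA$, the conditional distributions $V^0(y_t \mid y_{<t})$ and $V^1(y_t \mid y_{<t})$ are $(O(k\eps), O(k\delta))$-indistinguishable; concretely, $|Z_t| \leq \eps_0 := O(\eps \cdot T\sigma)$ holds except on at most a $\delta_0 = O(\delta)$ fraction of $y_t$'s.

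The second step applies Lemma~\ref{lem:composition} to the sequence $Z_1, \ldots, Z_T$ with per-round parameters $(\eps_0, \delta_0)$. Since $\eps_0 \leq 1/2$ in our parameter regime, the lemma yields $\bigl|\sum_t Z_t\bigr| \leq \eps_0\sqrt{2T\log(1/\delta')} + O(\eps_0^2 T)$ with probability at least $1 - \delta'/2$, while the per-round failure contributes $T\delta_0 = O(T\delta)$ by a union bound. Substituting $\eps_0 = O(\eps T\sigma)$ recovers the first term $\eps(T\sigma)\sqrt{2T\log(1/\delta')}$ exactly, and the second term $O(\eps^2 T^3\sigma^2)$ is absorbed into $30\eps^2(T\sigma)T$ using $T\sigma \leq 1/12$. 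The main obstacle is the coupling step: establishing $|Z_t| \leq O(\eps T\sigma)$ instead of the trivial $O(\eps)$ requires carefully transferring the statistical-distance closeness of $B^0_t$ and $B^1_t$ through the DP mechanism $\cA$, while correctly accounting for accumulated drift in the unobserved internal databases from previous rounds.
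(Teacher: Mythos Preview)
Your approach has a genuine gap in the first step. You bound the number of disagreements by $N \leq k = O\bigl(T\sigma + \sqrt{T\log(1/\delta')}\bigr)$ via Chernoff, but then silently set $\eps_0 := O(\eps \cdot T\sigma)$. These are inconsistent: group privacy only gives $|Z_t| \leq k\eps$, and under the hypothesis $T\sigma \leq 1/12$ the term $\sqrt{T\log(1/\delta')}$ dominates $T\sigma$, so $\eps_0 = O\bigl(\eps\sqrt{T\log(1/\delta')}\bigr)$, not $O(\eps T\sigma)$. Plugging the correct $\eps_0$ into the composition lemma yields a leading term of order $\eps\, T\log(1/\delta')$, which lacks the crucial $(T\sigma)$ factor and is far weaker than the claimed bound. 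The difficulty is intrinsic to any high-probability tail bound on $N$: since $\Pr[N\geq 1]\approx T\sigma$ can be as large as $1/12$, no bound of the form $N\leq k$ holding with probability $1-\delta'/2$ can give $k=O(T\sigma)$ when $\delta'$ is small.

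The paper sidesteps this by never fixing a single high-probability value of $N$. Instead it conditions on the adversary's internal coins (equivalently, on the coupling coins), which fixes $w=\sum_t c_t$; for each fixed $w$ it applies Azuma to get a bound scaling with $w\eps$, and then \emph{averages} this bound over the binomial distribution of $w$. The $(T\sigma)$ factor emerges precisely because $\mathbb{E}[w]=T\sigma$, and the higher-order term is controlled by a direct computation of $\sum_w \binom{T}{w}\sigma^w(1-\sigma)^{T-w}\,w\eps\min\{e^{w\eps}-1,1\}$. Your coupling idea is the right starting point, but to recover the stated bound you must exploit the \emph{expectation} of $N$ rather than a tail bound on it.
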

\iffull
(We prove this lemma in Appendix~\ref{sec:app1}.)
\else
We omit the proof of this lemma for lack of space.
\fi

We also need another lemma about the Bregman projection onto the set of
high-density measures (Definition~\ref{def:proj})
\begin{lemma} 
  \label{lem:projsd}
  Let $A_0 \from \cA \to [0,1]$ and $A_1 \from \cA \cup \set{a^*} \to [0,1]$ be
  two full-support measures over their respective sets of actions and $s \in
  (0,|\cA|)$ be such that 1) $|A_0|, |A_1| \leq s$ and 2) $A_0(a) = A_1(a)$ for
  every $a \in \cA$.  Let $A'_0 = \projs{A_0}$ and $A'_1 = \projs{A_1}$.  Then
  $SD(\dist{A}'_0, \dist{A}'_1) \leq 1/s$.
\end{lemma}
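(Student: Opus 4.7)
The plan is to compare the two projections by analyzing how the scaling constants in Definition~\ref{def:proj} must relate, and then to compute the statistical distance directly. Write $B_i(a) = \min\{1, c_i A_i(a)\}$ for the unnormalized projected measures, where $c_0, c_1 \geq 1$ are the scaling factors guaranteed by the definition, so that $\sum_{a \in \cA} B_0(a) = s$ and $\sum_{a \in \cA \cup \{a^*\}} B_1(a) = s$. Note also that $\tilde{A}'_i = B_i / s$.

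The first step is to show $c_1 \leq c_0$. Suppose for contradiction $c_1 > c_0$. Since $A_0(a) = A_1(a)$ for every $a \in \cA$, we would have $c_1 A_1(a) > c_0 A_0(a)$ on all of $\cA$, hence $B_1(a) \geq B_0(a)$ there, giving $\sum_{a \in \cA} B_1(a) \geq \sum_{a \in \cA} B_0(a) = s$. But the defining equation for $c_1$ forces $\sum_{a \in \cA} B_1(a) = s - B_1(a^*) \leq s$, and equality requires $B_1(a^*) = 0$, which already yields $\tilde{A}'_0 = \tilde{A}'_1$ and trivially the conclusion. Otherwise we have a contradiction, so $c_1 \leq c_0$.

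Consequently, for every $a \in \cA$ we have $B_1(a) \leq B_0(a)$. Now compute the statistical distance:
\begin{align*}
2\,SD(\tilde{A}'_0, \tilde{A}'_1) &= \sum_{a \in \cA} \frac{B_0(a) - B_1(a)}{s} + \frac{B_1(a^*)}{s} \\
&= \frac{1}{s}\Bigl(s - (s - B_1(a^*)) + B_1(a^*)\Bigr) = \frac{2 B_1(a^*)}{s}.
\end{align*}
Since $B_1(a^*) = \min\{1, c_1 A_1(a^*)\} \leq 1$, this gives $SD(\tilde{A}'_0, \tilde{A}'_1) \leq 1/s$, as required.

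The main subtlety is the ordering of the scaling factors: the intuition is that $A_1$ has strictly more ``room'' occupied by $a^*$, so it must scale the shared portion \emph{less} aggressively. Once that monotonicity is in hand, the statistical distance telescopes beautifully because the deficit $\sum_{a \in \cA} B_0(a) - \sum_{a \in \cA} B_1(a)$ is exactly the mass $B_1(a^*)$ that $A'_1$ places on the new coordinate, so the total variation reduces to a single term bounded by the cap $1$ on any measure value.
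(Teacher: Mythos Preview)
Your proof is correct and follows essentially the same approach as the paper: both establish that $A'_0(a) \geq A'_1(a)$ for $a \in \cA$ (you do this explicitly via the scaling-factor ordering $c_1 \leq c_0$, while the paper asserts it directly from the form of the projection) and then compute the $\ell_1$ distance using $|A'_0| = |A'_1| = s$. Your computation is in fact slightly sharper, yielding the exact value $SD = A'_1(a^*)/s$ rather than just the bound $\leq 1/s$.
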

\iffull
\begin{proof} [\longproof of Lemma~\ref{lem:projsd}]
  Using the form of the projection (Definition~\ref{def:proj}), it is not hard
  to see that for $a \neq a^*$, $A'_0(a) \geq A'_1(a)$.  For convenience, we
  will write $A'_0(a^*) = 0$ even though $a^*$ is technically outside of the
  domain of $A'_0$.  We can now show the following.
  \begin{align*}
    \sum_{a \in \cA \cup \set{a^*}} | A'_0(a) - A'_1(a) |
    &= |A'_{0}(a^*) - A'_1(a^*)| + \sum_{a \neq a^*} | A'_0(a) - A'_1(a) | \\
    &\leq{} 1 + \sum_{a \neq a^*} | A'_{0}(a) - A'_1(a) | \\
    &={} 1 + \sum_{a \neq a^*} A'_{0}(a) - A'_{1}(a) \tag{$A'_{0}(a) \geq
    A'_{1}(a)$ for $a \neq a^*$} \\
    &={} 1 + |A'_{0}| - (|A'_{1}| - A'_{1}(a^*))
    \leq{} 1 + |A'_{0}| - (|A'_{1}| - 1) \\
    &={} 1 + s - (s - 1) = 2
  \end{align*}
  We also have that $|A'_{0}| = |A'_{1}| = s$, so
  \begin{align*}
    SD(\dist{A}'_{0}, \dist{A}'_{1})
    &={} \frac{1}{2} \sum_{a \in \cA \cup \set{a^*}} \left|
    \frac{A'_{0}(a)}{|A'_{0}|} - \frac{A'_{1}(a)}{|A'_{1}|} \right| \\
    &={} \frac{1}{2s} \sum_{a \in \cA \cup \set{a^*}} | A'_{0}(a) - A'_{1}(a) |
    \leq \frac{1}{s}.
  \end{align*}
\end{proof}
\else
We omit the proof of this lemma for lack of space.
\fi
Now we can prove one-query-to-many-analyst privacy.

\begin{proof} [\longproof of Theorem~\ref{thm:queryprivacy4counting}]
Fix a database $D$.  Consider two adjacent query sets $\cQ_0 \sim \cQ_1$ and,
without loss of generality assume $\cQ_0 = \cQ_1 \cup \{q^*\}$ and that $q^* \in
\cQ_{\id}$ for some analyst $\id$.  We write the output to all analysts as $v =
(\hat{x}_1, \dots, \hat{x}_T, b_{1}, \dots, b_{|\cQ|}, a_{1}, \dots, a_{|\cQ|})$
where $\hat{D} = \{\hat{x}_{1}, \dots, \hat{x}_{T}\}$ is the database that is
released to all analysts, $b_{1}, \dots, b_{|\cQ|}$ is a sequence of bits that
indicates whether or not $q_j(\hat{D})$ is close to $q_j(D)$, and $a_{1}, \dots,
a_{|\cQ|}$ is a sequence of approximate answers to the queries $q_j(D)$ (or
$\bot$, if $q_j(\hat{D})$ is already accurate).  We write $v_{-\id}$ for the
portion of $v$ that excludes outputs specific to analyst $\id$'s queries.  Let
$V_0, V_1$ be the distribution on outputs when the query set is $\cQ_0$ and
$\cQ_1$, respectively.

We analyze the three parts of $v$ separately.  First we show that
$\hat{D}$, which is shared among all analysts, satisfies analyst privacy.
\begin{claim} \label{clm:analyst1}
  With probability at least $1-\delta$ over the samples $\hat{x}_1, \dots,
  \hat{x}_{T} \getsr V_0$,
  \[
    \left| \ln\left( \frac{V_0(\hat{x}_1, \dots, \hat{x}_{T})}{V_1(\hat{x}_1,
    \dots, \hat{x}_{T})} \right) \right| \leq \eps.
  \]
\end{claim}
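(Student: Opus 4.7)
The plan is to apply the interactive secrecy-of-the-sample lemma (Lemma~\ref{lem:sdtodp}) to the coupled process that generates $\hat{x}_1,\dots,\hat{x}_T$. The mapping I would use is: the ``hidden'' samples $x_t$ are the queries $\hat{q}_t$ drawn by the query player, while the ``outputs'' $y_t$ are the public samples $\hat{x}_t$. The differentially private mechanism $\cA$ plays the role of sampling $\hat{x}_t \getsr \dist{D}_t$, which is a deterministic-then-Gibbs function of the history $\hat{q}_1,\dots,\hat{q}_t$, and the adversary $\cB$ plays the role of producing the next query-player distribution from the public history $\hat{x}_1,\dots,\hat{x}_{t-1}$.

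The plan then has three pieces. First, I would bound the per-step privacy parameter $\eps$ of the ``mechanism'' $\cA$. Inspecting the data-player update, the unnormalized weight $D_t(x)$ depends on each historical query $\hat{q}_j$ through the factor $\exp(-(\eta/2)(1+\hat{q}_j(D)-\hat{q}_j(x)))$, so changing a single query $\hat{q}_j$ moves every $D_t(x)$ by at most a factor $e^{\eta}$; after renormalization, sampling $\hat{x}_t \getsr \dist{D}_t$ is $(2\eta,0)$-DP with respect to one coordinate of the history. Second, I would bound the per-step statistical-distance parameter $\sigma$ of the adversary. For $q\neq q^*$, the unprojected measure $Q_t(q)$ depends only on the public history $\hat{x}_1,\dots,\hat{x}_{t-1}$ and the query $q$, so $Q_t^0(q)=Q_t^1(q)$ for all such $q$, while $Q_t^0$ has one extra action $q^*$ (and its negation). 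Lemma~\ref{lem:projsd} therefore gives $SD(\dist{P}_t^0,\dist{P}_t^1) \leq 1/s$ (up to a constant coming from the $\neg q^*$ action), so I can take $\sigma = O(1/s)$.

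Third, I would plug $T$, $\eps_0 = 2\eta$, $\sigma = O(1/s)$ into Lemma~\ref{lem:sdtodp}. With the parameter choices $s = 12T$ and $\eta = \eps/(2\sqrt{T\log(1/\delta)})$, one has $T\sigma \leq 1/12$, so the hypothesis of the lemma is satisfied, and the conclusion yields a high-probability log-ratio bound of the form $\eps_0 (T\sigma)\sqrt{2T\log(1/\delta')} + O(\eps_0^2 T\sigma\, T)$, which simplifies to $O(\eps)$ for $\delta' = \Theta(\delta)$. The main technical obstacle I anticipate is casting the mutually dependent updates into the exact adversary/mechanism template of Lemma~\ref{lem:sdtodp} — in particular, observing that once we condition on the public history $\hat{x}_1,\dots,\hat{x}_{t-1}$, the query player's choice of $\dist{P}_t^b$ depends on $b$ only through the presence or absence of the action $q^*$, so the stochastic-domination structure required by the lemma really does hold coordinate by coordinate.
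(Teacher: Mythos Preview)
Your proposal is correct and follows essentially the same approach as the paper's proof: both cast the generation of $\hat{x}_1,\dots,\hat{x}_T$ into the template of Lemma~\ref{lem:sdtodp} with $\hat{q}_t$ playing the role of the hidden sample and $\hat{x}_t$ the output, bound the per-step privacy of the data-player's sampling by $O(\eta)$, bound the per-step statistical distance of the query-player's distributions by $O(1/s)$ via Lemma~\ref{lem:projsd}, and then invoke the lemma with $s=12T$ and $\eta=\eps/(2\sqrt{T\log(1/\delta)})$. The only minor discrepancies are cosmetic: the paper gets $\eta$ rather than $2\eta$ for the per-step DP bound, and your parenthetical about the $\neg q^*$ action is apt (two actions are added, not one), but this only affects constants and the paper's choice of $s=12T$ absorbs it.
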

\begin{proof}[\longproof of Claim~\ref{clm:analyst1}]
  To prove the claim, we show how the output $\hat{x}_1, \dots, \hat{x}_T$ can
  be viewed as the output of an instantiation of the mechanism analyzed by
  Lemma~\ref{lem:sdtodp}. For every $t \in [T]$ and $\hat{q}_1, \dots,
  \hat{q}_{t-1}$, we define the measure $D_{t}$ over database items to be
  \[
    D_{t}(x) = \exp\left(-(\eta/2) \sum_{j=1}^{t-1} 1 + \hat{q}_j(D) -
    \hat{q}_j(x) \right).
  \]
  Notice that if we replace a single query $\hat{q}_\ell$ with $\hat{q}_{\ell}'$
  and obtain the measure $D'_{t}$, then for every $x \in \cX$,
  \begin{mathdisplayfull}
    \left| \ln \left( \fullfrac{\dist{D}_{t}(x)}{\dist{D}'_{t}(x)} \right) \right|
    \leq \eta.
  \end{mathdisplayfull}
  Thus we can view $\hat{x}_{t}$ as the output of an $\eta$-differentially
  private mechanism $\cA_{D}(\hat{q}_1, \dots, \hat{q}_{t-1})$, which fits into
  the framework of Lemma~\ref{lem:sdtodp}. (Here, $\hat{x}_{t}$ plays the role
  of $y_t$ and $\hat{q}_1, \dots, \hat{q}_{t-1}$ plays the role of $D_{t-1}$ in
  the description of the game, while the input database $D$ is part of the
  description of $\cA$).

  Now, in order to apply Lemma~\ref{lem:sdtodp}, we need to argue the
  distribution on samples $\hat{q}_{t}$ when the query set is $\cQ_0$ is
  \emph{statistically close} to the distribution on samples $\hat{q}_{t}$ when
  the query set is $\cQ_1$.  Fix any $t \in [T]$ and let $Q_{0}, Q_{1}$ be the
  measure $Q_{t}$ over queries maintained by the query player when
  the input query set is $\cQ_0, \cQ_1$, respectively.  For $q \neq q^*$, we
  have
  \[
    Q_{0}(q) = Q_{1}(q) = \exp\left(-(\eta/2) \sum_{j=1}^{t-1}1+ q(D) -
    q(\hat{x}_j) \right).
  \]
  Additionally, we set $Q_{0}(q^*) = 0$ (for notational convenience), while
  $Q_{1}(q^*) \in (0,1]$.  Thus, if we let $P_{0} = \projs{Q_{0}}$ and $P_{1} =
  \projs{Q_{1}}$, we will have $SD(\dist{P}_{0}, \dist{P}_1) \leq 1/s$ by
  Lemma~\ref{lem:projsd}. Since the statistical distance is $1/s = 1/12T$, we
  can apply Lemma~\ref{lem:sdtodp} to show that with probability at least
  $1-\delta$,
  \begin{align*}
    \left| \ln\left( \frac{V(\hat{x}_1, \dots, \hat{x}_{T})}{V'(\hat{x}_1,
    \dots, \hat{x}_{T})} \right) \right|
    \leq{} &\frac{\eta \sqrt{T \log(1/\delta)}}{8} + \frac{5 \eta^2 T}{2}
    \leq{} \eps. {\iffull \tag{$\eta = \eps / (2 \sqrt{T \log(1/\delta)})$} \fi}
  \end{align*}
\end{proof}

Now that we have shown $\hat{D}$ satisfies
$(\eps,\delta)$-one-query-to-many-analyst differential privacy, it remains to
show that the remainder of the output satisfies perfect
one-query-to-many-analyst privacy.  Recall from the proof of
Theorem~\ref{thm:offlineacc4counting} that $\hat{D}$ will be accurate for all
but $s$ queries.  That is, if we let $\set{f_j}_{j \in [|\cQ|]}$ consist of the
functions $f_j(D) = | q_j(D) - q_j(\hat{D}) |$, then
\begin{mathdisplayfull}
| \set{j \mid f_j(D) \geq \alpha} | \leq s,
\end{mathdisplayfull}%
where $\alpha$ is chosen as in Theorem~\ref{thm:offlineacc4counting}.  By
Lemma~\ref{lem:sv}, the sparse vector algorithm will release bits $b_1, \dots,
b_{|\cQ|}$ (the indicator vector of the subset  of queries with large error)
such that for every $j \in [|\cQ|]$, the distribution on $b_j$ does not depend
on any function $f_{j'}$ for $j' \neq j$.  Thus, if $z_{-a}$ contains all the
bits of $b_1, \dots, b_{|\cQ|}$ that do not correspond to queries in $\cQ_{a}$,
then the distribution of $z_{-\id}$ does not depend on the queries asked by
analyst $\id$, and thus $z_{-\id}$ is perfectly one-query-to-many analyst
private.  Finally, for each query $q_j$ such that $b_j = 1$, the output to the
owner of that query will include $a_j = q_j(D) + z_j$ where $z_j$ is an
independent sample from the Laplace distribution. These outputs do not depend
on any other query, and thus are perfectly one-query-to-many analyst private.
This completes the proof of the theorem.

\end{proof}

\section{A One-Analyst-to-Many-Analyst Private Mechanism}
\iffull
\subsection{An Offline Mechanism for Linear Queries}
\fi
In this section we present an algorithm for answering linear queries that
satisfies the stronger notion of one-analyst-to-many-analyst privacy.  The
algorithm is similar to Algorithm~\ref{alg:offline-q}, but with two notable
modifications.

First, instead of the ``query player'' of Algorithm~\ref{alg:offline-q}, we will
have an ``analyst player'' who chooses analysts as actions and is trying to find
an analyst $\id \in [m]$ for which there is at least one query in $\cQ_{\id}$
with large error (recall that the queries are given to the mechanism in sets
$\cQ_1, \dots, \cQ_m$).  That is, the analyst player attempts to find $\id \in
[m]$ to maximize $\max_{q \in \cQ_{\id}} q(D) - q(\hat{D})$.

Second, we will compute a database $\hat{D}$ such that $\max_{q \in \cQ_{\id}}
|q(D) - q(\hat{D})|$ is small for all but $s$ \emph{analysts} in the set $[m]$,
rather than having the $s$ mishandled queries in Algorithm~\ref{alg:offline-q}.
We can still use sparse vector to find these $s$ analysts, however we can't
answer the queries with the Laplace mechanism, since each of the analysts may
ask an exponential number of queries. However, since there are not too many
analysts remaining, we can use $s$ independent copies of the multiplicative
weights mechanism (each run with $\eps' \approx \eps/\sqrt{s}$) to handle each
analyst's queries.
\iffull
\else
Due to space requirements we omit the proofs, which follow those of the previous
section quite closely.
\fi

\begin{algorithm}[h!]
  \caption{Offline Mechanism for Linear Queries with One-Analyst-to-Many-Analyst Privacy}
  \begin{algorithmic} \label{alg:offline-id}
    \STATE{\textbf{Input:} Database $D \in \cX^n$, and $m$ sets of linear
    queries $\overline{\cQ}_{1}, \dots, \overline{\cQ}_{m}$.  For $\id \in [m]$,
    let $\cQ_{\id} = \overline{\cQ}_{\id} \cup \neg \overline{\cQ}_{\id}$.}
    \STATE{\textbf{Initialize:} Let $D_0(x) = 1/|\univ|$ for each $x \in \univ$,
    $I_0(q) =  1/m$ for each $\id \in [m]$, \\
    \[
      T  = n^{2/3} \max\{\log|\univ|, m\}, \longquad \eta = \frac{\sqrt{T
      \log(1/\delta)}}{2 \epsilon}, \longquad s = 12 T.
    \]
   }
    \STATE{\textbf{DataPlayer:}}
    \INDSTATE[1]{On input an analyst $\hat{\id}_t$, for each $x \in \univ$,
  update:}
    \[
      D_{t}(x) = D_{t-1}(x) \cdot
    \exp\left(-\eta \max_{q \in \cQ_{\hat{\id}_{t}}} \left(\frac{1 +
      \hat{q}_t(D) - \hat{q}_t(x)}{2}\right)\right)
    \]
    \INDSTATE[1]{Choose $\hat{x}_{t} \getsr \dist{D}_{t}$ and send $\hat{x}_{t}$
    to \textbf{AnalystPlayer}}
    \STATE{}
    \STATE{\textbf{AnalystPlayer:}}
    \INDSTATE[1]{On input a data element $\hat{x_t}$, for each $\id \in \cI$,
  update:}
    \[
      I_{t+1}(\id) = I_{t}(\id) \cdot
    \exp\left(-\eta \max_{q \in \cQ_{\id}} \left(\frac{1 + q(D)
      -q(\hat{x}_t)}{2} \right)\right)
    \]
    \INDSTATE[1]{Let $P_{t+1} = \projs I_{t+1}$}
    \INDSTATE[1]{Choose $\hat{\id}_{t+1} \getsr \dist{P}_{t+1}$ and send
    $\hat{\id}_{t+1}$ to \textbf{DataPlayer}}
    \STATE{}
    \STATE{\textbf{GenerateSynopsis:}}
    \INDSTATE[1]{Let $\widehat{D} = (\hat{x}_1, \dots, \hat{x}_{T})$}
    \INDSTATE[1]{Run sparse vector on $\widehat{D}$, obtain a set of at most $s$
  analysts:}
    \INDSTATE[2]{$I_{f} = \set{\id_1, \dots, \id_s} \subseteq [m]$}
    \INDSTATE[1]{For each analyst $\id \in I_{f}$, run $\cA_{\mathrm{MW}}(D,
    \cQ_{\id})$ with parameters}
    \INDSTATE[2]{$\eps' = \frac{\eps}{10\sqrt{s \log(3s/\delta)}}$ and $\delta' =
    \frac{\delta}{3s}$}
    \INDSTATE[2]{Obtain a sequence of answers $\vec{a}_{\id}$.}
    \INDSTATE[1]{Output $\widehat{D}$ to all analysts.}
    \INDSTATE[1]{For each $\id \in [m] \setminus I_{f}$, output $\vec{a}_{\id}$
  to analyst $\id$}
  \end{algorithmic}
\end{algorithm}

\iffull
\subsubsection{Accuracy Analysis}
\fi
\begin{theorem} \label{thm:offlineacc4counting2}
  Algorithm~\ref{alg:offline-id} is $(\alpha, \beta)$-accurate for
  \[
    \alpha = \tilde{O}\left( \frac{\sqrt{\log(|\univ| + m)\log|\cQ_{\id}|}
    \log(m/\beta) \log^{3/4}(1/\delta)}{\eps n^{1/3}} \right).
  \]
\end{theorem}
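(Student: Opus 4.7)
The plan is to adapt the three-stage structure of the proof of Theorem~\ref{thm:offlineacc4counting}, replacing the query player by an analyst player whose actions are indices $\id \in [m]$, and using private multiplicative weights (Lemma~\ref{lem:mw}) in place of the Laplace mechanism to handle the ``bad'' objects (now bad analysts instead of bad queries).

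First, I would view the algorithm as computing an approximate equilibrium of the zero-sum game
\[
G_D(x, \id) = \max_{q \in \cQ_{\id}} \frac{1 + q(D) - q(x)}{2},
\]
where the data player minimizes and the analyst player maximizes. I would prove an analogue of Claim~\ref{clm:acc1}, namely that both the value $v$ and the constrained value $v_s$ equal $1/2$. The lower bound follows because each $\cQ_\id$ is closed under negation: any smooth strategy for the analyst player whose expected payoff falls below $1/2$ on some $x$ can be ``flipped'' within each $\cQ_\id$ to obtain a strategy (still smooth over $[m]$) whose expected payoff exceeds $1/2$. The upper bound uses that the data player may play the uniform mixture over rows of $D$, matching $q(D)$ in expectation.

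Next, I would invoke Corollary~\ref{cor:approxminmax} with $|\cA_R| = |\univ|$ and $|\cA_C| = m$. With the algorithm's choices of $\eta$ and $T = n^{2/3} \max\{\log|\univ|, m\}$, the resulting regret $\rho$ satisfies $\rho = \tilde{O}\bigl(\sqrt{\log(|\univ|+m)\log(1/\delta)}/(\eps\, n^{1/3})\bigr)$, and the approximate equilibrium guarantee gives
\[
\max_{I \in \Delta_s([m])} \Ex{\id \getsr I}{\max_{q \in \cQ_\id}|q(D) - q(\hat{D})|} \leq 4\rho =: \alpha_{\hat{D}}.
\]
This implies, by the same pigeonhole argument as in Theorem~\ref{thm:offlineacc4counting}, that at most $s$ analysts have $\max_{q \in \cQ_\id}|q(D) - q(\hat{D})| > \alpha_{\hat{D}}$; otherwise, the uniform distribution over these bad analysts would be an $s$-smooth strategy violating the bound.

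Third, I would apply sparse vector (Lemma~\ref{lem:sv}) to the $m$ functions $f_\id(D) = \max_{q \in \cQ_\id}|q(D) - q(\hat{D})|$, each of which is $(1/n)$-sensitive as the max of $(1/n)$-sensitive functions. This identifies at most $s$ bad analysts $I_f$, with additive overhead $\alpha_{SV} = \tilde{O}(\sqrt{s\log(1/\delta)}\log(m/\beta)/(\eps n))$. For each $\id \in I_f$, I would invoke $\cA_{MW}$ with the stated parameters $\eps' = \eps/(10\sqrt{s\log(3s/\delta)})$ and $\delta' = \delta/(3s)$, obtaining error $\alpha_{MW}$ on analyst $\id$'s queries.

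The overall error is $\max\{\alpha_{\hat{D}} + \alpha_{SV}, \alpha_{MW}\}$, and substituting $s = 12T$ yields the claimed bound. The main obstacle is the bookkeeping in the last step: after substituting $s$, the $(\eps')^{-1/2}$ factor in $\cA_{MW}$ contributes $s^{1/4} = \tilde{O}(n^{1/6} (\log|\univ| + m)^{1/4})$, which combines with the $n^{-1/2}$ factor in Lemma~\ref{lem:mw} to produce the $n^{-1/3}$ dependence, and one must verify that all polylogarithmic factors align with the theorem's $\tilde{O}$.
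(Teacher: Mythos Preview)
Your proposal is correct and mirrors the paper's own proof essentially step for step: the same game $G_D(x,\id)=\max_{q\in\cQ_\id}\frac{1+q(D)-q(x)}{2}$, the same value claim $v=v_s=1/2$, the same invocation of Corollary~\ref{cor:approxminmax} and pigeonhole to bound the number of bad analysts by $s$, sparse vector to identify them, and then $\cA_{\mathrm{MW}}$ with $\eps'\approx\eps/\sqrt{s}$ for each, with the $s^{1/4}\approx n^{1/6}$ bookkeeping yielding the $n^{-1/3}$ rate. The only minor slip is your phrasing of the lower bound on $v_s$: the analyst player's actions are indices $\id$, not queries, so there is nothing to ``flip''; the correct observation is simply that closure of each $\cQ_\id$ under negation forces $G_D(x,\id)\geq 1/2$ pointwise.
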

\iffull
\begin{proof}
  As we discussed above,  the algorithm is computing an approximate equilibrium
  of the game
  \[
    G_{D, \cQ}(x, \id) = \max_{\id \in [m]} \max_{q \in \cQ_{\id}}\frac{1 + q(D)
    - q(x)}{2}.
  \]
  Let $v, v_{s}$ be the value and constrained value of this game, respectively.
  First we pin down the quantities $v$ and $v_{s}$.
  \begin{claim} \label{clm:acc2}
    For every $D, m, \cQ$, the value and constrained value of $G_{D, m, \cQ}$ is
    $1/2$.
  \end{claim}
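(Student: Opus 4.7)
The plan is to follow the structure of Claim~\ref{clm:acc1}. Two features of the algorithm's setup are crucial: each $\cQ_\id = \overline{\cQ}_\id \cup \neg\overline{\cQ}_\id$ is closed under query negation by construction in the initialization step, and the data player's mixed strategies include the empirical distribution over $\cX$ induced by the records of $D$.

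For the upper bound on $v$ (which already gives $v_s \leq v \leq 1/2$), I would have the data player play the mixed strategy $R$ that places weight proportional to the multiplicity of each element in $D$, so that $q(R) = q(D)$ for every linear query $q$. Then every term $(1 + q(D) - q(R))/2$ equals $1/2$, so $\max_{q \in \cQ_\id}(1 + q(D) - q(R))/2 = 1/2$ for every analyst $\id$; taking any mixed distribution over $\id$ (in particular any $I \in \Delta_s([m])$) preserves this, yielding $v_s \leq v \leq 1/2$.

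For the matching lower bound $v_s \geq 1/2$, I would observe that for any data strategy $R$ and any pure analyst $\id$, closure under negation forces at least one $q \in \cQ_\id$ to satisfy $(1 + q(D) - q(R))/2 \geq 1/2$: if $q(R) \leq q(D)$ then $q$ itself suffices, and otherwise $\neg q \in \cQ_\id$ does (using $\neg q(R) = 1 - q(R)$ and $\neg q(D) = 1 - q(D)$). Hence $G_{D, m, \cQ}(R, \id) \geq 1/2$ for every pure $\id$, so the same bound holds in expectation against every mixed analyst strategy $I$, in particular every $I \in \Delta_s([m])$. This yields $v_s \geq 1/2$, matching the upper bound and giving $v = v_s = 1/2$.

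The only nontrivial point relative to Claim~\ref{clm:acc1} is that the inner $\max_{q \in \cQ_\id}$ makes the payoff non-bilinear in $x$. My plan is to treat the data player's mixed strategy via the linear extension of the queries (so $q(R) = \Ex{x \sim R}{q(x)}$, with the max over $q$ applied after taking this expectation), matching the interpretation implicit in the accuracy analysis of Theorem~\ref{thm:offlineacc4counting}. Under this interpretation the calculations go through essentially verbatim from the one-query case, and I do not anticipate any further obstacles beyond this bookkeeping.
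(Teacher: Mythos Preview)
Your proposal is correct and matches the paper's approach: the paper omits the proof entirely, stating only that it is ``nearly identical to that of Claim~\ref{clm:acc1},'' and your argument follows exactly that template (play the empirical distribution of $D$ for the upper bound, exploit closure of each $\cQ_\id$ under negation for the lower bound, and observe that the pointwise lower bound against pure analysts $\id$ carries over to any $I \in \Delta_s([m])$). Your explicit flagging of the non-bilinearity introduced by the inner $\max_{q \in \cQ_\id}$, and your choice to read the data player's payoff via the linear extension $q(R) = \Ex{x \sim R}{q(x)}$ consistent with how $q(\hat{D})$ is used in the accuracy analysis of Theorem~\ref{thm:offlineacc4counting2}, is more careful than the paper itself, which glosses over this point.
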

  The proof of this claim is omitted, but is nearly identical to that of
  Claim~\ref{clm:acc1}.

  Let $\hat{D} = \frac{1}{T} \sum_{t=1}^{T} \hat{x}_{t}$.  By
  Corollary~\ref{cor:approxminmax},
  \[
    v_{s} - 2\rho \leq \max_{I \in \Delta_{s}([m])} \Ex{\id \getsr
      \dist{I}}{\max_{q \in \cQ_{\id}}\left( \frac{1 + q(D) -
        q(\hat{D})}{2}\right)} \leq v + 2\rho.
  \]
  Applying Claim~\ref{clm:acc2} and rearranging terms, we have that with
  probability $1-\beta/3$,
  \begin{align*}
    &\left| \max_{I \in \Delta_{s}([m])} \left(\Ex{\id \getsr \dist{I}}{\max_{q \in \cQ_{\id}} q(D) - q(\hat{D})}\right) \right|
    =  \max_{I \in \Delta_{s}([m])} \left(\Ex{\id \getsr \dist{I}}{\max_{q \in \cQ_{\id}} \left|q(D) - q(\hat{D})\right|}\right)
    \leq{} 4\rho \\
    &={} 4\left( \eta + \frac{\max\{\log|\cX|, \log m\}}{\eta T} + \frac{4 \log(3/\beta)}{\sqrt{T}} \right) \\
    &={}  O\left(\frac{\sqrt{\log(|\cX| + m) \log(1/\delta)} +
  \log(1/\beta)}{\eps n^{1/3}}\right) := \alpha_{\hat{D}}.
  \end{align*}
  The previous statement suffices to show that $\max_{q \in \cQ_{\id}} |q(D) -
  q(\dist{D})| \leq \alpha_{\hat{D}}$ for all but $s$ analysts $\id \in [m]$.
  Otherwise, the uniform distribution over the analysts for which the error
  bound of $\alpha_{\hat{D}}$ does not hold would be a distribution over
  analysts, contained in $\Delta_{s}([m])$ with expected error larger than
  $\alpha_{\hat{D}}$.

  Since there are at most $s$ such analysts we can run the sparse vector algorithm
  (Lemma~\ref{lem:sv}), and, with probability at least $1-\beta/3$, it will
  identify every analyst $\id$ such that the maximum error over all queries in
  $\cQ_{\id}$ is larger than $\alpha_{\hat{D}} + \alpha_{SV}$ for
  \[
    \alpha_{\mathrm{SV}} = O\left(\frac{\sqrt{s
    \log(1/\delta)}\log(m/\beta)}{\eps n} \right).
  \]
  There are at most $s$ such analysts.  Thus, running the multiplicative weights
  mechanism (Lemma~\ref{lem:mw}) independently for each of these analysts'
  queries---with privacy parameters $\eps' =
  \Theta(\eps/\sqrt{s\log(s/\delta)})$ and $\delta' = \Theta(\delta/s)$---will
  yield answers such that, with probability $1-\beta/3$, for every $\id \in I'$,
  \begin{align*}
    \max_{q \in \cQ_{\id}} |q(D) - a_{q}|
    &\leq O\left( \frac{s^{1/4} \log^{1/4}|\univ| \sqrt{\log(s |\cQ_{\id}|/\beta)}\log^{3/4}(s/\delta)}{\sqrt{\eps n}} \right) \\
    &\leq  \tilde{O}\left( \frac{n^{1/6} \sqrt{\log(|\univ| + m) \log(|\cQ_{\id}|/\beta)}\log^{3/4}(1/\delta)}{\sqrt{\eps n}} \right) \\
    &\leq  \tilde{O}\left( \frac{ \sqrt{\log(|\univ| + m)
      \log(|\cQ_{\id}|/\beta)}\log^{3/4}(1/\delta)}{n^{1/3} \sqrt{\eps}} \right)
      := \alpha_{\mathrm{MW}}.
  \end{align*}
  Taking a union bound, observing that the maximum error on any query is
  $\max\{\alpha_{\hat{D}} + \alpha_{\mathrm{SV}}, \alpha_{\mathrm{MW}}\}$, and
  simplifying, we get that the mechanism is $(\alpha,\beta)$-accurate for
  \[
    \alpha = \tilde{O}\left( \frac{\sqrt{\log(|\univ| + m)\log|\cQ_{\id}|}
    \log(m/\beta) \log^{3/4}(1/\delta)}{\eps n^{1/3}} \right).
  \]
\end{proof}
\fi

\iffull
\subsubsection{Data Privacy}
\fi

\begin{theorem} \label{thm:dataprivacy4counting2}
  Algorithm~\ref{alg:offline-id} satisfies $(\eps, \delta)$-differential privacy
  for the data.
\end{theorem}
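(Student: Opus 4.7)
The proof will closely mirror that of Theorem~\ref{thm:dataprivacy4counting}, so the plan is to identify the three sources of potential privacy loss and bound each by $(\eps/3, \delta/3)$, then compose. The three sources are: (i) the transcript $v = (\hat{x}_1, \hat{\id}_1, \dots, \hat{x}_T, \hat{\id}_T)$ produced by the game simulation, (ii) the sparse vector invocation used to identify $I_f$, and (iii) the $s$ independent invocations of $\cA_{\mathrm{MW}}$ on the selected analysts' query sets.

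For step (i), I would adapt the two claims in the proof of Theorem~\ref{thm:dataprivacy4counting}. The key observation is that although the data player's loss is now $\max_{q \in \cQ_{\hat{\id}_t}} \tfrac{1+q(D)-q(x)}{2}$ rather than a single query's loss, the map $D \mapsto \max_{q \in \cQ_{\hat{\id}_t}} q(D)$ remains $(1/n)$-sensitive because the max of $(1/n)$-sensitive functions is $(1/n)$-sensitive. Consequently, the same computation as in Claim~\ref{clm:nextx} shows that the conditional log-likelihood ratio for $\hat{x}_t$ between two adjacent databases is at most $\eta T/(2n) =: \eps_0$. Likewise, for the analyst sample $\hat{\id}_t$, the analyst player's unprojected measure $I_t(\id)$ depends on the database through $\max_{q \in \cQ_{\id}} q(D)$, which is again $(1/n)$-sensitive, yielding $|\ln(I_0(\id)/I_1(\id))| \leq \eta T/(2n)$. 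Invoking Lemma~\ref{lem:projdp} on the projection $P_{t+1}=\projs{I_{t+1}}$ doubles this to $\eta T/n$, and accounting for the normalizer gives $\eps_0$ as in Claim~\ref{clm:nextq}. Finally, the advanced composition lemma (Lemma~\ref{lem:composition}) applied to the $2T$-fold composition, combined with the chosen $\eta$, gives the $(\eps/3, \delta/3)$-DP guarantee for $v$.

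For step (ii), the sparse vector algorithm is $(\eps/3, \delta/3)$-DP by Lemma~\ref{lem:sv} with the appropriate parameter setting (just as in Theorem~\ref{thm:dataprivacy4counting}). For step (iii), which is new compared to the previous proof, the algorithm runs $s$ independent copies of $\cA_{\mathrm{MW}}$, each with parameters $\eps' = \eps/(10\sqrt{s \log(3s/\delta)})$ and $\delta' = \delta/(3s)$. Each copy is individually $(\eps', \delta')$-DP by Lemma~\ref{lem:mw}. The $s$-fold composition (Lemma~\ref{lem:composition}) then yields $(\eps'\sqrt{8s \log(3/\delta)} + 2\eps'^2 s,\, s \delta')$-DP, which by the parameter choice simplifies to $(\eps/3, \delta/3)$-DP.

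The main obstacle is keeping the parameter bookkeeping in step (iii) clean, since the privacy budget for the $s$ MW mechanisms must be apportioned via advanced composition in a way that yields the desired overall $(\eps/3, \delta/3)$ guarantee; this is where the specific choices $\eps'=\eps/(10\sqrt{s\log(3s/\delta)})$ and $\delta'=\delta/(3s)$ are calibrated. Once the three $(\eps/3, \delta/3)$-DP guarantees are established, the standard (basic) composition across the three stages yields overall $(\eps,\delta)$-DP, completing the proof. The remaining work -- verifying the concrete constants for $\eta$, $s$, and $T$ meet the stated bounds -- is a direct calculation using the same arithmetic as in the one-query-to-many-analyst case.
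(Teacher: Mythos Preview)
Your proposal is correct and follows essentially the same approach as the paper: the paper explicitly states that the proof mirrors that of Theorem~\ref{thm:dataprivacy4counting}, with the only new ingredient being step~(iii), where the $s$ independent copies of $\cA_{\mathrm{MW}}$ (each with $\eps' = \Theta(\eps/\sqrt{s\log(s/\delta)})$, $\delta' = \Theta(\delta/s)$) are combined via Lemma~\ref{lem:composition} to yield $(\eps/3,\delta/3)$-DP, exactly as you outline. One small bookkeeping slip: after projection and normalization the per-round bound should be $2\eta T/n$ (as in Claim~\ref{clm:nextq}), not your $\eps_0 = \eta T/(2n)$, but this does not affect the argument once the composition lemma is applied with the paper's choice of $\eta$.
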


\iffull
We omit the proof of this theorem, which follows that of
Theorem~\ref{thm:dataprivacy4counting} almost identically.  The only difference
is that in the final step, we need to argue that running $s$ independent copies
of multiplicative weights with privacy parameters $\eps' = \Theta(\eps / \sqrt{s
\log(s/\delta)})$ and $\delta' = \Theta(\delta/s)$ satisfies $(\eps/3,
\delta/3)$-differential privacy, which follows directly from the composition
properties of differential privacy (Lemma~\ref{lem:composition}).
\fi

\iffull
\subsubsection{Query Privacy}
In this section we prove query privacy for our one analyst to many analyst
mechanism.
\fi
\begin{theorem} \label{thm:queryprivacy4counting2}
  Algorithm~\ref{alg:offline-id} satisfies $(\eps,
  \delta)$-one-analyst-to-many-analyst differential privacy.
\end{theorem}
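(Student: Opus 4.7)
The plan is to mirror the three-part structure of the proof of Theorem~\ref{thm:queryprivacy4counting}. Fix a database $D$ and two analyst-adjacent query sequences $\cQ \sim \cQ'$ that differ only on analyst $\id^*$'s queries, and let $V_0, V_1$ be the induced distributions on the portion of the output visible to analysts $\id \neq \id^*$. Write this visible output as $v_{-\id^*} = (\hat{D}, (b_\id, \vec{a}_\id)_{\id \neq \id^*})$, where $\hat{D} = (\hat{x}_1,\dots,\hat{x}_T)$ is the shared synopsis, $b_\id$ is the sparse-vector indicator of whether analyst $\id$ has a high-error query, and $\vec{a}_\id$ is the per-analyst MW answer vector (or $\bot$). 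I will show that $\hat{D}$ is approximately one-analyst-to-many-analyst private, while the sparse-vector bits and per-analyst MW answers for $\id \neq \id^*$ are perfectly so.

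For $\hat{D}$, I would recast the interaction in the framework of Lemma~\ref{lem:sdtodp}: view each $\hat{x}_t$ as the output of an $\eta$-differentially private mechanism $\cA_D$ whose ``records'' are loss functions $\ell_j(x) = \max_{q \in \cQ_{\hat{\id}_j}} (1 + q(D) - q(x))/2 \in [0,1]$, which enter the data player's weight multiplicatively through $\exp(-\eta \ell_j(x))$. The analyst player takes the role of the adversary $\cB$: at step $t$ it produces distributions $B_t^0, B_t^1$ on the loss-function input, where in world $b$ we sample $\hat{\id}_t \getsr \dist{P}_t^{b}$ from the projected analyst measure and return the loss induced by $\cQ^{(b)}_{\hat{\id}_t}$.

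The central observation is that because $\cQ$ and $\cQ'$ agree everywhere except on analyst $\id^*$, the \emph{unprojected} analyst-player measures satisfy $I_t^0(\id) = I_t^1(\id)$ for all $\id \neq \id^*$ with both having total mass at most $s$, so Lemma~\ref{lem:projsd} gives $SD(\dist{P}_t^0, \dist{P}_t^1) \leq 1/s$. Furthermore, conditional on $\hat{\id}_t \neq \id^*$ the loss function is identical in both worlds, while conditional on $\hat{\id}_t = \id^*$ it can be arbitrarily different; summing these two contributions and using that the density-$s$ projected measure assigns mass at most $1/s$ to any single analyst gives $SD(B_t^0, B_t^1) \leq 2/s$. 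With $s = \Theta(T)$ chosen so that $2T/s$ satisfies the hypothesis of Lemma~\ref{lem:sdtodp}, invoking that lemma with $\eps_0 = \eta$ and $\sigma = 2/s$ yields $|\ln(V_0(\hat D)/V_1(\hat D))| \leq \eps$ with probability $1-\delta$ for the chosen parameters.

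The remaining two pieces are easier. The sparse-vector stage runs on the analyst-indexed functions $f_\id(D) = \max_{q \in \cQ_\id}|q(D) - q(\hat D)|$, and changing $\cQ_{\id^*}$ only changes $f_{\id^*}$; thus Lemma~\ref{lem:sv}(3) guarantees that the bits $(b_\id)_{\id \neq \id^*}$ are identically distributed under $\cQ$ and $\cQ'$ given $\hat D$. Each per-analyst vector $\vec{a}_\id$ is produced by a separate copy of $\cA_{\mathrm{MW}}$ run on $(D, \cQ_\id)$, so for $\id \neq \id^*$ it does not depend on $\cQ_{\id^*}$ at all. Composing the approximate privacy of $\hat D$ with the perfect privacy of the other outputs finishes the proof. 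The main obstacle is the two-sided nature of the change: altering $\cQ_{\id^*}$ perturbs both the analyst player's mixed strategy \emph{and} the data player's loss whenever $\id^*$ is actually sampled. Both effects are tamed by the smoothness of the projected analyst measure, which caps each at $1/s$; the only subtlety is to ensure the constant $T\sigma$ appearing in Lemma~\ref{lem:sdtodp} is small enough, which may require taking $s$ slightly larger than $12T$ but has no effect on the asymptotic utility bound of Theorem~\ref{thm:offlineacc4counting2}.
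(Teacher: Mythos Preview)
Your proposal is correct and takes essentially the same approach as the paper: show that $\hat{D}$ is analyst-private by casting the data player's samples in the framework of Lemma~\ref{lem:sdtodp}, then argue that the sparse-vector bits and the per-analyst multiplicative-weights outputs for $\id \neq \id^*$ are perfectly private. The one cosmetic difference is that the paper first reduces, without loss of generality, to the case $\cQ_1 = \cQ_0 \cup \cQ_{\id^*}$ (an analyst is \emph{added} rather than changed), so that Lemma~\ref{lem:projsd} applies verbatim to the analyst player's measures with $\sigma = 1/s$; you instead treat the ``change'' case directly, which is why your invocation of Lemma~\ref{lem:projsd} is not literally valid as stated (the lemma compares measures whose supports differ by one action, not measures on the same support differing in one coordinate) and you pick up the extra $1/s$ term and the caveat about enlarging $s$---but as you note, this only perturbs constants.
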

\iffull
\begin{proof}
  Fix a database $D$.  Consider two adjacent sets of queries $\cQ_0, \cQ_1$.
  Without loss of generality assume $\cQ_0 = \cQ_{\id_{1}} \cup \dots
  \cQ_{\id_{m}}$ and $\cQ_1 = \cQ_0 \cup \cQ_{\id^*}$. That is $\cQ_1$ is just
  $\cQ_0$ with an additional set of queries $\cQ_{\id^*}$ added.  We write the
  output to all analysts as $v = (\hat{x}_1, \dots, \hat{x}_T, b_{1}, \dots,
  b_{m}, \vec{a}_{1}, \dots, \vec{a}_{m})$ where $\hat{D} = \hat{x}_{1}, \dots,
  \hat{x}_{T}$ is the database that is released to all analysts, $b_{1}, \dots,
  b_{m}$ is a sequence of bits that indicates whether or not $q_j(\hat{D})$ is
  close to $q_j(D)$ for every $q \in \cQ_{\id}$, and $\vec{a}_{1}, \dots,
  \vec{a}_{m}$ is a sequence consisting of the output of the multiplicative
  weights mechanism for every analyst $\id \in [m]$ and $\bot$ for every other
  analyst.  Let $V_0, V_1$ be the distribution on outputs when the queries are
  $\cQ_0$ and $\cQ_1$, respectively.

  The proof closely follows the proof of one-query-to-many-analyst privacy for
  Algorithm 3.  Showing that the final two parts $b, \vec{a}$ of the output are
  query private is essentially the same, so we will focus on proving that
  $\hat{D}$ satisfies one-analyst-to-many-analyst privacy.
  \begin{claim} \label{clm:analyst2}
    With probability at least $1-\delta$ over $\hat{x}_{1}, \dots, \hat{x}_{T}
    \getsr V_0$,
    \[
      \left| \ln\left( \frac{V_0(\hat{x}_1, \dots, \hat{x}_{T})}{V_1(\hat{x}_1,
      \dots, \hat{x}_{T})} \right) \right| \leq \eps.
    \]
  \end{claim}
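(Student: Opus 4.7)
The plan is to parallel the proof of Claim~\ref{clm:analyst1} almost verbatim, with analysts now playing the role that queries played in the one-query-to-many-analyst setting. First, I would view each data player sample $\hat{x}_t$ as the output of an $\eta$-differentially private mechanism whose input record at step $j$ is the sampled analyst action $\hat{\id}_j$. Concretely, the unprojected data player measure can be written as
\[
  D_t(x) = \exp\left(-\eta \sum_{j=1}^{t-1} \max_{q \in \cQ_{\hat{\id}_j}} \frac{1 + q(D) - q(x)}{2}\right),
\]
and replacing a single $\hat{\id}_\ell$ by $\hat{\id}_\ell'$ perturbs the exponent by at most $\eta$ pointwise, so $|\ln(\dist{D}_t(x)/\dist{D}_t'(x))| \leq \eta$ for every $x \in \cX$. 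This places the sampling of $\hat{x}_t$ inside the framework of Lemma~\ref{lem:sdtodp}, with $\hat{\id}_t$ serving as a ``record'' and $\hat{x}_t$ as the observed output.

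Next, I would bound the statistical distance between the distributions on $\hat{\id}_t$ under the two input query sequences. Since $\cQ_1$ differs from $\cQ_0$ only by adding the set $\cQ_{\id^*}$ for a single new analyst $\id^*$, the unprojected analyst player measures $I_t^{(0)}$ and $I_t^{(1)}$ satisfy $I_t^{(0)}(\id) = I_t^{(1)}(\id)$ for every $\id \neq \id^*$: the update for analyst $\id$ depends only on $\cQ_{\id}$ and on the shared history $\hat{x}_1, \dots, \hat{x}_{t-1}$, and both are unaffected. Meanwhile $\id^*$ is absent from the support of $I_t^{(0)}$ and present (with some weight in $(0,1]$) in that of $I_t^{(1)}$. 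This is exactly the hypothesis of Lemma~\ref{lem:projsd}, so after Bregman projection to density $s$ we conclude $SD(\dist{P}_t^{(0)}, \dist{P}_t^{(1)}) \leq 1/s = 1/(12T)$.

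Finally, I would invoke Lemma~\ref{lem:sdtodp} with per-round privacy parameter $\eta$, statistical distance $\sigma = 1/(12T)$, and $T$ rounds, giving with probability at least $1-\delta$
\[
  \left|\ln\left(\frac{V_0(\hat{x}_1,\dots,\hat{x}_T)}{V_1(\hat{x}_1,\dots,\hat{x}_T)}\right)\right|
  \leq \eta (T\sigma)\sqrt{2T \log(1/\delta)} + 30\eta^2 (T\sigma) T,
\]
which simplifies (using $T\sigma = 1/12$) to $O(\eta\sqrt{T\log(1/\delta)} + \eta^2 T)$, and is bounded by $\eps$ by the algorithm's choice of $\eta$. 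The main step to check carefully is the applicability of Lemma~\ref{lem:projsd}: we need to verify that moving from $\cQ_0$ to $\cQ_1$ only enlarges the analyst player's support rather than perturbing existing weights. This holds because the update to $I_t(\id)$ uses only $\max_{q \in \cQ_{\id}}(\cdot)$ with no dependence on other analysts' query sets, so the symmetric-difference of query sets affects exactly one coordinate of the measure. The remaining components of the output (the sparse vector bits $b_{-\id^*}$ and the multiplicative weights answer vectors $\vec{a}_{\id}$ for $\id \neq \id^*$) are perfectly one-analyst-to-many-analyst private by the same argument as in Theorem~\ref{thm:queryprivacy4counting}, since none of them depend on the queries in $\cQ_{\id^*}$.
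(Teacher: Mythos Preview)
Your proposal is correct and follows essentially the same approach as the paper's proof: you cast each sample $\hat{x}_t$ as the output of an $\eta$-differentially private mechanism in the sampled analysts $\hat{\id}_1,\dots,\hat{\id}_{t-1}$, invoke Lemma~\ref{lem:projsd} to bound the statistical distance between the analyst-player distributions by $1/s$, and then apply Lemma~\ref{lem:sdtodp} exactly as in Claim~\ref{clm:analyst1}. If anything, your verification that Lemma~\ref{lem:projsd} applies---namely that $I_t^{(0)}$ and $I_t^{(1)}$ agree on every $\id\neq\id^*$ because the update for analyst $\id$ depends only on $\cQ_{\id}$ and the shared history---is more explicit than the paper, which simply remarks that ``the argument does not change significantly.''
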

  \begin{proof}[\longproof of Claim~\ref{clm:analyst2}]
    To prove the claim, we show how the output $\hat{x}_1, \dots, \hat{x}_T$ can
    be viewed as the output of an instantiation of the mechanism analyzed by
    Lemma~\ref{lem:sdtodp}.  Notice that for every $t \in [T]$ and $\hat{\id}_1,
    \dots, \hat{\id}_{t-1}$, we can write the measure $D_{t}$ over database
    items as
    \[
      D_{t}(x) = \exp\left(-(\eta/2) \sum_{j=1}^{t-1}\max_{q \in
        \cQ_{\hat{\id}_{j}}} 1 + \hat{q}_j(D) - \hat{q}_j(x) \right).
    \]
    If we replace a single analyst $\hat{\id}_\ell$ with $\hat{\id}_{\ell}'$,
    and obtain the measure $D'_{t}$, then for every $x \in \cX$,
    \[
      \left| \ln \left( \frac{\dist{D}_{t}(x)}{\dist{D}'_{t}(x)} \right) \right|
      \leq \eta.
    \]
    Thus we can view $\hat{x}_{t}$ as the output of an $\eta$-differentially
    private mechanism $\cA_{D}(\hat{\id}_1, \dots, \hat{\id}_{t-1})$, which fits
    into the framework of Lemma~\ref{lem:sdtodp}. (Here, $\hat{x}_{t}$ plays the
    role of $y_t$ and $\hat{\id}_1, \dots, \hat{\id}_{t-1}$ plays the role of
    $D_{t-1}$ in the description of the game, while the input database $D$ is
    part of the description of $\cA$).

    As before, we apply Lemma~\ref{lem:sdtodp}, to argue that the distribution
    on analysts $\hat{\id}_{t}$ when the query set is $\cQ_0$ is statistically
    close to the distribution on analysts $\hat{\id}_{t}$ when the analyst set
    is $\cQ_1$.  The argument does not change significantly, thus we can apply
    Lemma~\ref{lem:sdtodp} to show that with probability at least $1-\delta$,
    \begin{align*}
      \left| \ln\left( \frac{V(\hat{x}_1, \dots, \hat{x}_{T})}{V'(\hat{x}_1,
      \dots, \hat{x}_{T})} \right) \right|
      \leq{} &\frac{\eta \sqrt{T \log(1/\delta)}}{8} + \frac{5 \eta^2 T}{2}
      \leq{} \eps \tag{$\eta = \eps / (2 \sqrt{T \log(1/\delta)})$}.
    \end{align*}
  \end{proof}

As before, the remainder of the output satisfies perfect
one-analyst-to-many-analyst privacy. This completes the proof of the theorem.
\end{proof}
\fi

\section{A One-Query-to-Many-Analyst \shortbreak Private Online Mechanism}
In this section, we present a mechanism that provides one-query-to-many-analyst
privacy in an online setting.  The mechanism can give accurate answers to any
\emph{fixed} sequence of queries that are given to the mechanism one at a time,
rather than the typical setting of \emph{adaptively chosen} queries.

The mechanism is similar to the online multiplicative weights algorithm of Hardt
and Rothblum~\cite{HR10}.  In their algorithm, a hypothesis about the true
database is maintained throughout the sequence of queries.  When a query
arrives, it is classified according to whether or not the current hypothesis
accurately answers that query.  If it does, then the query is answered according
to the hypothesis.  Otherwise, the query is answered with a noisy answer
computed from the true database and the hypothesis is updated using the
multiplicative weights update rule.

The main challenge in making that algorithm query private is to argue that the
hypothesis does not depend too much on the previous queries.  We overcome this
difficulty by ``sampling from the hypothesis.'' (recall that a database can be
thought of as a distribution over the data universe).  We must balance the need
to take many samples --- so that the database we obtain by sampling accurately
reflects the hypothesis database, and the need to limit the impact of any one
query on the sampled database.  To handle both these constraints, we introduce
\emph{batching} --- instead of updating every time we find a query not
well-answered by the hypothesis, we batch together $s$ queries at a time, and do
one update on the average of these queries to limit the influence of any single
query.

\begin{algorithm}[h!]
  \caption{Analyst-Private Multiplicative Weights for Linear Queries}
  \begin{algorithmic}
    \STATE{\textbf{Input:}\!\! Database $D \in \univ^n$, sequence $q_1, \dots, q_k$
  of linear queries}
  \STATE{\textbf{Initialize:}
  $D_0(x)= 1/|\univ|$ for each $x \in \univ$,
  $H_{0} = D_{0}$, $\cU_0 = \emptyset$,
  $s_0 = s + \Lap(2/\eps)$,
  $t = 0$, $r = 0,$
  \[
    \eta = \frac{1}{n^{2/5}}, \longquad s = \frac{128 n^{2/5} \sqrt{\log |\univ|
    \log(4k/\beta) \log(1/\delta)}}{\eps},
  \]
  \[
    \hat{n} = 32 n^{4/5} \log (4k / \beta), \longquad T = n^{4/5} \log |\univ|,
    \longquad R = 2sT,
  \]
  \[
    \sigma = \frac{20000 \log^{3/4} |\univ| \log^{1/4}(4k/\beta)
    \log^{5/4}(4/\delta)}{\eps^{3/2} n^{2/5}},
  \]
  \[
    \tau = \frac{80000 \log^{3/4} |\univ|
    \log^{5/4}(4k/\beta)\log^{5/4}(4/\delta)}{\eps^{3/2} n^{2/5}}.
  \]
}
\STATE{\textbf{AnswerQueries:}}
\STATE{{\bf While} $t < T, r < R, i \leq k$, on input query $q_i${\bf :}}
\INDSTATE[1]{Let $z_i = \Lap(\sigma)$}
\INDSTATE[1]{{\bf If} $|q_i(D) - q_i(H_t) + z_i| \leq \tau${\bf :} Output $q_i(H_t)$}
\INDSTATE[1]{{\bf Else:}}
\INDSTATE[2]{Let $u = \mathrm{sgn}(q_i(H_t) - q_i(D) - z_i) \cdot q_i$, $\cU_t = \cU_t \cup \set{u}$}
\INDSTATE[2]{Output $q_i(D) + z_i$}
\INDSTATE[2]{Let $r = r+1$}
\INDSTATE[2]{{\bf If} $|\cU_t| > s_t${\bf :}}
\INDSTATE[3]{Let $(D_{t+1}, H_{t+1}) = \mathbf{Update}(D_t, \cU_t)$}
\INDSTATE[3]{Let $\cU_{t+1} = \emptyset$, $s_{t+1} = s + \Lap(2/\eps)$}
\INDSTATE[3]{Let $t = t+1$}
\INDSTATE[1]{Advance to query $q_{i+1}$}
\STATE{}
\STATE{\textbf{Update:}}
\INDSTATE[1]{\textbf{Input:} distribution $D_t$, update queries $\cU_t = \set{u_1, \dots, u_{s_t}}$}
\INDSTATE[1]{{\bf For each} $x \in \univ${\bf :}}
\INDSTATE[2]{Let $\mathbf{u}_{t}(x) = \frac{1}{3s} \sum_{j=1}^{s_t} u_j(x)$}
\INDSTATE[2]{Update $D_{t+1}(x) = \exp (-(\alpha'/2) \mathbf{u}_{t}(x)) D_{t}(x)$}
\INDSTATE[1]{Normalize $D_{t+1}$}
\INDSTATE[1]{Let $H_{t+1}$ be $\hat{n}$ independent samples from $D_{t+1}$}
\INDSTATE[1]{\textbf{Return:} $(D_{t+1}, H_{t+1})$}
\end{algorithmic}
\label{alg:online}
\end{algorithm}

A note on terminology: the execution of the algorithm takes place in several
rounds, where each round processes one query.  Rounds where the query is
answered using the real database are called \emph{bad rounds}; rounds that are
not bad are \emph{good rounds}.  We will split the rounds into $T$
\emph{epochs}, where the hypothesis $H_t$ is used during epoch $t$.

\SUBSECTION{Accuracy}
First, we sketch a proof that the online mechanism answers linear queries
accurately. Intuitively, there are three ways that our algorithm might give an
inaccurate answer, and we treat each separately.  First, in a good round, the
answer given by the hypothesis may be a bad approximation to the true answer.
Second, in a bad round, the answer given may have too much noise.  We address
these two cases with straightforward arguments showing that the noise is not too
large in any round.

The third way the algorithm may be inaccurate is if there are more than $R$ bad
rounds, and the algorithm terminates early.  We show that this is not the case
using a potential argument: after sufficiently many bad rounds, the hypothesis
$D_{T}$ and the sample $H_{T}$ will be accurate for all queries in the stream,
and thus there will be no more bad rounds.  The potential argument is a simple
extension of the argument in Hardt and Rothblum~\cite{HR10} that handles the
additional error coming from taking samples from $D_{t}$ to obtain $H_{t}$.

\iffull
We will use the following tail bound on sums of Laplace variables.
\begin{lemma}[\cite{GRU12}]
  \label{lem:laplacetail}
  Let $X_1, \dots, X_{T}$ be $T$ independent draws from $\Lap(2/\eps)$, and let $X = \sum_{t=1}^{T}
  X_t$. Then,
  \[
    \Pr\left[\, |X| >  \frac{5\sqrt{T} \log(2/\beta) }{\eps} \right] < \beta.
  \]
\end{lemma}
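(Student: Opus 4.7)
The plan is to prove this as a standard Chernoff-style concentration inequality for sub-exponential random variables, using the moment generating function (MGF) of the Laplace distribution together with Markov's inequality.

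First I would compute the MGF of a single $X_t \sim \Lap(2/\eps)$. A direct integration against the density $\frac{\eps}{4}e^{-\eps|x|/2}$ gives
\[
\Ex{}{e^{\lambda X_t}} \;=\; \frac{1}{1 - 4\lambda^2/\eps^2}
\qquad\text{for}\qquad |\lambda| < \eps/2.
\]
Next, I would use the elementary inequality $1/(1-u) \leq e^{2u}$ valid for $u \in [0, 1/2]$, applied with $u = 4\lambda^2/\eps^2$, to obtain $\Ex{}{e^{\lambda X_t}} \leq e^{8\lambda^2/\eps^2}$ whenever $|\lambda| \leq \eps/(2\sqrt{2})$. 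By independence of the $X_t$, this gives the product bound $\Ex{}{e^{\lambda X}} \leq e^{8T\lambda^2/\eps^2}$.

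Then I would apply Markov's inequality to $e^{\lambda X}$: for any $a > 0$ and admissible $\lambda > 0$,
\[
\Pr[X \geq a] \;\leq\; e^{-\lambda a} \Ex{}{e^{\lambda X}} \;\leq\; \exp\!\left(-\lambda a + \frac{8T\lambda^2}{\eps^2}\right).
\]
Optimizing in $\lambda$ yields $\lambda^\star = a\eps^2/(16T)$, which gives the sub-Gaussian tail $\Pr[X \geq a] \leq \exp(-a^2\eps^2/(32T))$. By the symmetry of the Laplace distribution (the random variable $-X$ has the same distribution as $X$), a union bound gives $\Pr[|X| \geq a] \leq 2\exp(-a^2\eps^2/(32T))$. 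Setting $a = 5\sqrt{T}\log(2/\beta)/\eps$ and checking that $2\exp(-25\log^2(2/\beta)/32) \leq \beta$ for $\beta \in (0,1]$ finishes the bound, since $\log(2/\beta) \geq \log 2$ and the quadratic-in-$\log(1/\beta)$ exponent dominates.

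The one subtle point, and the only place where care is required, is ensuring that the optimizer $\lambda^\star$ lies in the range $|\lambda| \leq \eps/(2\sqrt{2})$ where the MGF bound $e^{8\lambda^2/\eps^2}$ is valid. This holds provided $a \lesssim T/\eps$, which covers the regime of interest. For even larger deviations (which are not needed for the stated bound), one would instead freeze $\lambda$ at a constant fraction of $\eps$ and obtain a linear-in-$a$ sub-exponential tail. Overall there is no genuine obstacle—this is a routine Chernoff argument—and the looseness in the constant $5$ (versus the sharper $\sqrt{32} \approx 5.66$ hiding under a square root) absorbs any slack in the MGF bound.
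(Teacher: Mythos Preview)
The paper does not prove this lemma; it is stated with a citation to \cite{GRU12} and used as a black box. Your Chernoff/MGF argument is the standard route for such bounds and is essentially correct.

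Two edge cases are handled loosely, however. First, your final verification that $2\exp\!\bigl(-\tfrac{25}{32}\log^2(2/\beta)\bigr) \leq \beta$ for all $\beta \in (0,1]$ is false as stated: the inequality you need is $\tfrac{25}{32}\log(2/\beta) \geq 1$, i.e.\ $\log(2/\beta) \geq 32/25$, which fails for $\beta \gtrsim 0.56$. The lemma is still true in that range (indeed trivially so near $\beta = 1$), but your Chernoff bound does not deliver it. Second, you dismiss the regime where the optimizer $\lambda^\star = a\eps^2/(16T)$ falls outside $|\lambda| \leq \eps/(2\sqrt{2})$ as ``not needed for the stated bound,'' but the lemma places no constraint relating $\beta$ and $T$: when $\log(2/\beta) \gtrsim \sqrt{T}$ you genuinely must invoke the sub-exponential tail you only sketch. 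Both issues are routine to patch and do not undermine the approach.
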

\fi

\begin{theorem}
  Algorithm~\ref{alg:online} is $(\alpha, \beta)$-accurate for
  \[
    \alpha = O\left( \frac{\log^{3/2}(k/\beta) \sqrt{\log |\univ|
    \log(1/\delta)}}{\eps^{3/2} n^{2/5}} \right).
  \]
\end{theorem}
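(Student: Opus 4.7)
The plan is to follow the Hardt--Rothblum~\cite{HR10} potential-based accuracy analysis, adapted to Algorithm~\ref{alg:online}, which differs from their online mechanism in two ways: the hypothesis $H_t$ is obtained by taking $\hat{n}$ samples from $\dist{D}_t$ (rather than using $D_t$ directly), and updates are performed in batches of roughly $s$ queries per epoch.  Following the sketch in the excerpt, I would split accuracy failures into three cases---good-round error, bad-round error, and early termination caused by too many bad rounds---and dispatch the first two with direct noise bounds, reserving the main effort for the potential argument that controls the third.

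First I would establish, via a single union bound, a collection of high-probability events.  Using a standard Laplace tail bound, each of the $k$ per-query noises $z_i \sim \Lap(\sigma)$ satisfies $|z_i| \leq O(\sigma \log(k/\beta))$; using Lemma~\ref{lem:laplacetail}, all $T$ batch-size perturbations $\Lap(2/\eps)$ simultaneously keep $s_t \in [s/2,\, 3s/2]$; and by a Chernoff bound with a union bound over the at most $kT$ (query, epoch) pairs, $|q_i(H_t) - q_i(D_t)| = O(\sqrt{\log(kT/\beta)/\hat{n}}) = O(1/n^{2/5})$ for every query $q_i$ and every epoch $t$, using $\hat{n} = 32 n^{4/5}\log(4k/\beta)$.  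Conditioned on these events: in a good round the algorithm outputs $q_i(H_t)$ with error at most $\tau + |z_i|$, and in a bad round it outputs $q_i(D) + z_i$ with error $|z_i|$.  Both are $O(\alpha)$ by the choice of $\tau$ and $\sigma$.

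The main obstacle is bounding the number of bad rounds, so that neither $t \geq T$ nor $r \geq R$ terminates the algorithm before the stream is exhausted.  I would take the potential $\Phi_t = \mathrm{KL}(\dist{D} \,\|\, \dist{D}_t)$, for which $\Phi_0 \leq \log|\univ|$ and $\Phi_t \geq 0$ always.  The standard multiplicative weights inequality applied to the batched update $\mathbf{u}_t$ gives
\[
\Phi_t - \Phi_{t+1} \;\geq\; \frac{\alpha'}{2}\bigl(\mathbf{u}_t(D) - \mathbf{u}_t(D_t)\bigr) - O\bigl((\alpha')^2\bigr).
\]
In each bad round $j$ of epoch $t$, the triggering condition together with the bounds on $z_j$ and on the sampling error yields $u_j(D) - u_j(D_t) \geq \tau - |z_j| - O(1/n^{2/5}) = \Omega(\tau)$, because $\tau$ is chosen to dominate both $\sigma \log(k/\beta)$ and the $1/n^{2/5}$ sampling error.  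Averaging over the $s_t = \Theta(s)$ bad rounds with the $1/(3s)$ coefficient used in \textbf{Update} gives $\mathbf{u}_t(D) - \mathbf{u}_t(D_t) = \Omega(\tau)$, so each completed epoch decreases $\Phi$ by $\Omega(\alpha' \tau - (\alpha')^2) = \Omega(\alpha'\tau)$.  Substituting the parameter choices, this per-epoch drop exceeds $\log|\univ|/T$, so the nonnegativity of $\Phi$ caps the total number of epochs by $T$, and hence the total number of bad rounds by $(3s/2)\cdot T < R$; neither guard ever fires.  A concluding union bound over all the concentration events from the previous step gives $(\alpha,\beta)$-accuracy with $\alpha$ dominated by $\tau + O(\sigma\log(k/\beta))$, matching the stated bound.
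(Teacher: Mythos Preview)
Your proposal is correct and follows essentially the same approach as the paper: the same three-case split (good-round error, bad-round error, early termination) and the same KL potential argument from~\cite{HR10}, adapted to account for sampling error in $H_t$ and the batched updates. The only cosmetic difference is that you condition on each $s_t \in [s/2, 3s/2]$ individually (for which you want a single-draw Laplace tail bound, not Lemma~\ref{lem:laplacetail}), whereas the paper instead carries $\sum_t s_t = T's + \sum_t S_t$ through the potential inequality and bounds the sum $\sum_t S_t$ via Lemma~\ref{lem:laplacetail}; both routes give the same conclusion.
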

\iffull
We note that we can achieve a slightly better dependence on $k, |\univ|,
\frac{1}{\eps}, \frac{1}{\delta}, \frac{1}{\beta}$ by setting the parameters a
bit more carefully.  See Section~\ref{sec:onlinelowsens} for an intuitive
picture of how to set the parameters optimally.  We have made no attempt to
optimize the constant factors in the algorithm.
\begin{proof}
  First we show that, as long as the algorithm has not terminated early, it
  answers every query accurately.
  \begin{claim} \label{clm:smallnoise}
    Before the algorithm terminates, with probability $1-\beta/4$, every query
    is answered with error at most $\tau + 6\sigma \log(3k/\beta)$
  \end{claim}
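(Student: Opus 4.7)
The plan is to show that, on a single high-probability event over the Laplace noises $z_1,\dots,z_k$, both types of answers produced by \textbf{AnswerQueries} are accurate up to the stated bound. The two cases to analyze correspond to whether the round is good (answer $q_i(H_t)$) or bad (answer $q_i(D)+z_i$), and in both cases the error reduces to controlling $|z_i|$ plus the threshold $\tau$.

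First I would apply the standard Laplace tail bound: for $Z\sim \Lap(\sigma)$, $\Pr[|Z|\geq t\sigma]=e^{-t}$. Setting $t=\log(3k/\beta)$ (with a small constant to absorb the difference between $\log$ and $\ln$, which is where the slack in the factor $6$ comes from) gives $\Pr[|z_i|>6\sigma\log(3k/\beta)]\leq \beta/(3k)$ for each fixed $i$. A union bound over the at most $k$ queries processed before termination then yields an event $\mathcal{E}$ of probability at least $1-\beta/4$ on which $|z_i|\leq 6\sigma\log(3k/\beta)$ simultaneously for every $i$.

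Next, I would condition on $\mathcal{E}$ and split by case. In a good round, the algorithm outputs $q_i(H_t)$ while the guard $|q_i(D)-q_i(H_t)+z_i|\leq \tau$ holds, so by the triangle inequality
\[
|q_i(H_t)-q_i(D)|\leq \tau + |z_i|\leq \tau + 6\sigma\log(3k/\beta).
\]
In a bad round, the algorithm outputs $q_i(D)+z_i$, so the error is exactly $|z_i|\leq 6\sigma\log(3k/\beta)\leq \tau+6\sigma\log(3k/\beta)$. Combining the two cases gives the claim on $\mathcal{E}$.

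There is no real obstacle here; the only care needed is to make the union bound go over all $k$ queries (not just the good or bad ones), and to ensure the constant $6$ absorbs any conversion between natural logarithm (from the Laplace CDF) and the $\log$ appearing in the statement. Tighter constants would only matter for optimizing parameters downstream, which the authors explicitly disclaim.
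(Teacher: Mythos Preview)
Your proposal is correct and follows essentially the same approach as the paper: bound all $|z_i|$ simultaneously via a Laplace tail plus union bound, then split into good and bad rounds, using the triangle inequality against the threshold $\tau$ in the good case and the raw noise bound in the bad case. The paper's own proof is identical in structure (and shares the same minor looseness in the constants, e.g.\ $3k$ versus $4k$ inside the log and $\beta/3$ versus $\beta/4$ for the failure probability).
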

  \begin{proof} [\longproof of Claim~\ref{clm:smallnoise}]
    Condition on the event that $|z_i| \leq 6 \sigma \log(4k/\beta)$ for every
    $i = 1,2,\dots,k$.  A standard analysis of the tails of the Laplace
    distribution shows that this event occurs with probability at least
    $1-\beta/3$.

    First we consider bad rounds.  In these rounds $q_i$ is answered with
    $q_i(D) + z_i$. Since we have assumed $|z_i|$ is not too large, all of these
    queries are answered accurately.

    Now we consider good rounds.  In these rounds we answer with $q_i(H_t)$, and
    we will only have a good round if $|q_i(D) - q_i(H_t) + z_i| \leq \tau$.
    Since we have assumed a bound on $|z_i|$, we can only have a good round if
    $|q_i(D) - q_i(H_t)| \leq \tau + 6 \sigma \log(3k/\beta)$.
\end{proof}

Now we must show that the algorithm does not terminate early.  Recall that it
can terminate early either because it hits a limit on the number of epochs, or
because it hits a limit on the number of bad rounds.  We will use a potential
argument to show that there cannot be too many epochs.  The number of bad rounds
that is in epoch $t$ is a random variable $s_{t}$, and we will also show that
with high probability, there are not too many bad rounds within the $T$ epochs.
\begin{claim} \label{clm:dontstop}
  With probability $1-3\beta/4$, the algorithm does not terminate before answering $k$ queries.
\end{claim}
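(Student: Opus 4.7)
The plan is to rule out both early-termination conditions: hitting the cap of $T$ epochs, and hitting the cap of $R = 2sT$ bad rounds. The argument for the first is a KL-divergence potential argument in the style of Hardt--Rothblum, adapted to absorb the extra error from sampling $H_t$ out of $D_t$; the second is a concentration argument on the sum of Laplace-perturbed epoch lengths.

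First I would set up the potential $\Phi_t = KL(D \,\|\, D_t)$, where $D$ is viewed as the uniform distribution on its $n$ rows. We have $\Phi_0 \leq \log|\univ|$ and $\Phi_t \geq 0$ throughout. A standard multiplicative-weights calculation, using the update rule $D_{t+1}(x) \propto \exp(-(\eta/2)\mathbf{u}_t(x)) D_t(x)$, gives a per-epoch decrement of the form $\Phi_t - \Phi_{t+1} \geq (\eta/2)\bigl(\mathbf{u}_t(D_t) - \mathbf{u}_t(D)\bigr) - O(\eta^2)$, since $\mathbf{u}_t(x) \in [-1/3,1/3]$ by construction. The task is therefore to lower-bound the gap $\mathbf{u}_t(D_t) - \mathbf{u}_t(D)$ by a positive quantity of order $\tau$.

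For this, I would invoke two high-probability events. (i) By the Laplace tail bound, $|z_i| \leq 6\sigma\log(4k/\beta)$ for every $i \leq k$ with probability at least $1-\beta/4$. On this event, every $u \in \cU_t$ satisfies $\mathrm{sgn}(u)\cdot(u(H_t) - u(D)) \geq \tau - 6\sigma\log(4k/\beta)$. (ii) Since $H_t$ consists of $\hat n = 32 n^{4/5}\log(4k/\beta)$ i.i.d.\ samples from $D_t$, Hoeffding plus a union bound over the $T$ hypotheses and $k$ queries yields $|q_i(H_t) - q_i(D_t)| \leq O\bigl(\sqrt{\log(kT/\beta)/\hat n}\bigr)$ simultaneously, with probability at least $1 - \beta/4$; the choice of $\hat n$ is calibrated so this sampling error is much smaller than $\tau$. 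Combining the two and averaging over the $s_t \approx s$ signed queries in $\cU_t$ gives
\[
\mathbf{u}_t(D_t) - \mathbf{u}_t(D) \;\geq\; \frac{s_t}{3s}\bigl(\tau - 6\sigma\log(4k/\beta) - O(\sqrt{\log(kT/\beta)/\hat n})\bigr) \;=\; \Omega(\tau).
\]
Plugging into the potential bound and summing, the number of epochs is at most $O\bigl(\log|\univ| / (\eta\tau)\bigr)$, which is at most $T = n^{4/5}\log|\univ|$ for the stated choice of $\eta, \tau, \sigma, \hat n$.

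For the bad-round count, each epoch contributes exactly $s_t + 1$ bad rounds, where $s_t = s + \Lap(2/\eps)$ is drawn fresh per epoch. The total is $\sum_{t=1}^T (s_t+1) \leq sT + T + \bigl|\sum_t \Lap(2/\eps)\bigr|$. Applying the Laplace sum tail bound (Lemma~\ref{lem:laplacetail}) with failure probability $\beta/4$ controls the last term by $O(\sqrt{T}\log(1/\beta)/\eps)$, which is dominated by $sT$, so the total is at most $2sT = R$. A union bound over the three bad events (Laplace noise $z_i$, sampling concentration of $H_t$, and Laplace sum for $s_t$) gives failure probability at most $3\beta/4$, proving the claim.

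The main obstacle is calibrating $\hat n$ and $\sigma, \tau$ jointly so that the sampling error in step (ii) is strictly dominated by the margin $\tau - 6\sigma\log(4k/\beta)$: too few samples and the potential argument fails to make progress per epoch; too many samples and $\hat n$ inflates the database size. The given parameters are tuned precisely to balance this against the $T\cdot s$ budget for bad rounds.
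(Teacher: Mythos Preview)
Your proposal is correct and follows essentially the same route as the paper: the KL potential argument with the Hardt--Rothblum decrement lemma, the three high-probability events (Laplace tails on the $z_i$, Hoeffding/Chernoff on the sampled hypotheses $H_t$, and the Laplace-sum tail on $\sum_t s_t$), and the union bound with budget $\beta/4$ each. The only cosmetic difference is the sign convention in the decrement (you write $\mathbf{u}_t(D_t)-\mathbf{u}_t(D)$ where the paper writes $\mathbf{u}_t(D)-\mathbf{u}_t(D_{t-1})$); this is consistent with the algorithm's signed update and does not affect the argument.
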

\begin{proof}[\longproof of Claim~\ref{clm:dontstop}]
  We will use a potential argument a la Hardt and Rothblum~\cite{HR10} on the
  sequence of databases $D_{t}$.  The potential function will be $$\Phi_{t} =
  RE(D || D_{t}) := \sum_{x \in \univ} D(x) \log(D(x) / D_t(x)).$$  Elementary
  properties of the relative entropy function show that $\Phi_{t} \geq 0$ and
  $\Phi_{0} = RE(D || D_{0}) \leq \log |\univ|$.  A lemma of Hardt and Rothblum
  expresses the potential decrease from the multiplicative weights update rule
  in terms of the error of the current hypothesis on the update query.
  \begin{lemma}[\cite{HR10}] \label{lem:potential}
    \begin{mathdisplayfull}
      \Phi_{t-1} - \Phi_{t} \geq \eta \left( \mathbf{u}_{t}(D) -
      \mathbf{u}_{t}(D_{t-1}) \right) - \eta^2 / 4.
    \end{mathdisplayfull}
  \end{lemma}
  Since the potential function is bounded between $0$ and $\log |\univ|$, we can
  get a bound on the number of epochs by showing that the potential decreases
  significantly between most epochs.  Given the preceding lemma, we simply need
  to show that the queries $\mathbf{u}_{1}, \mathbf{u}_{2}, \dots$ have large
  (positive) error.

  Recall that $\mathbf{u}_{t} = \frac{1}{s} \sum_{u \in \cU_{t}} u$.  Also recall
  that if $u \in \cU$ and $u = q_i$, then the reason $q_i$ is in $\cU$ is because
  $q_i(D) - q_i(H_{t-1}) + z_i > \tau$.  Similarly, if $u = \neg q_i$, then
  $q_i(D) - q_i(H_{t-1}) + z_i < -\tau$.  We will focus on the first case where
  $q_i(D) - q_i(H_{t-1}) + z_i > \tau$, the other case will follow similarly.  We
  can get a lower bound on $u(D) - u(D_{t-1})$ as follows.
  \begin{align*}
    u(D) - u(D_{t-1})
    &\geq u(D) - u(H_{t-1}) + z_i - |z_i| - |q_i(H_{t-1}) - q_i(D_{t-1})| \\
    &\geq \tau - |z_i| - |q_i(H_{t-1}) - q_i(D_{t-1}) |
  \end{align*}
  We need to show that the right-hand side of the final expression is large.  We
  have already conditioned on the event that $|z_i| \leq 6 \sigma \log(3k/\beta)
  \leq \tau/4$.  Recall that $H_{t-1}$ is a collection of $\hat{n}$ samples from
  $D_{t-1}$.  Thus a simple Chernoff bound (over the $\hat{n}$ samples) and a
  union bound (over the $k$ queries) shows that, with probability $1-\beta/4$,
  for every $i \in [k]$, $|q_i(D_{t-1}) - q_i(H_{t-1})| \leq \sqrt{16 \log(3 T /
    \beta) / \hat{n}} \leq \tau/4$.

  Thus, with probability at least $1-2\beta/3$, for every $t$ and every $u \in
  \cU_t$,
  \[
    u(D) - u(D_{t-1}) \geq \tau - \tau/4 - \tau/4 = \tau/2.
  \]
  Now,
  \[
    \mathbf{u}_{t}(D) - \mathbf{u}_{t}(D_{t-1}) = \frac{1}{s} \sum_{u \in
      \cU_{t}} u(D) - u(D_{t-1}) \geq \frac{\tau |\cU_{t}|}{2s}.
  \]
  Conditioning on the event that all of the noise values $z_i$ are small and
  all of the sampled hypotheses $H_{t}$ are accurate for $D_{t}$ on every
  query, we can calculate
  \begin{align*}
    \Phi_{T'} - \Phi_{0}
    &\geq \frac{\eta \tau}{2s} \left(\sum_{t \leq T'} |\cU_{t}|\right) - T' \eta^2 \\
    &= \frac{\eta \tau}{2s} \left(\sum_{t \leq T'} s_{t} \right) - T' \eta^2
    =  \frac{\eta \tau T'}{2} + \frac{\eta \tau}{2s} \left(\sum_{t \leq T'}
    S_{t} \right) - T' \eta^2,
  \end{align*}
  where $S_{t}$ is the value of the sample $\Lap(2/\eps)$ used to compute
  $s_{t}$ in the $t$-th epoch.  Thus, applying Lemma~\ref{lem:laplacetail} to $S
  = \sum_{t \leq T'} S_{t}$, we have that with probability $1-\beta/4$,
  \begin{align*}
    \Phi_{T'} - \Phi_{0}
    &\geq  \frac{\eta \tau T'}{2} - \frac{\eta \tau}{2s} \left| S \right| - T' \eta^2 \\
    &\geq \frac{\eta \tau T'}{2} - \frac{\eta \tau}{2s} 5 \eps^{-1} \sqrt{T'}
    \log(20/\beta) - T' \eta^2.
  \end{align*}
  Now, noting that $\tau/2 > 8\eta$ and simplifying,
  \[
    \Phi_{T'} - \Phi_{0} \geq  2\eta^2 \tau T' - \eta^2 T' \geq \eta^2 T'.
  \]
  Thus, conditioning on all the events above, $T' \leq \log |\univ| / \eta^2
  \leq n^{4/5} \log |\univ|$.  These events all occur together with probability
  at least $1-3\beta/4$, and thus the algorithm does not terminate because it
  hits the limit of $T$ epochs.  Lastly, we need to show that the algorithm does
  not hit the limit of $R$ bad rounds within those at-most $T$ epochs.  Notice
  that the number of bad rounds is at most
  \begin{mathdisplayfull}
    \sum_{t=1}^{T} s_t = \sum_{t=1}^{T} s + S_{t},
  \end{mathdisplayfull}
  where $S_{t}$ is the sample of $\Lap(2/\eps)$ used to compute $s_{t}$.
  Applying Lemma~\ref{lem:laplacetail} again we have
  \begin{align*}
    \sum_{t=1}^{T} s + S_{t} \leq Ts + \sum_{t=1}^{T} S_{t} \leq Ts + 5
    \eps^{-1} \sqrt{T} \log(2/\beta) \leq 2Ts = R.
  \end{align*}
  Thus the algorithm does not terminate due to having more than $R$ bad rounds.
  Since the algorithm does not hit its limit of $T$ epochs or $R$ bad rounds,
  except with probability at most $3\beta/4$, the claim is proven.
\end{proof}

Combining the previous two claims proves the theorem.
\end{proof}
\else
Due to space constraint, we omit the proof.
\fi

\SUBSECTION{Data Privacy}
\iffull
In this section we establish that our mechanism satisfies differential privacy.
Our proof will rely on a modular analysis of interactive differentially private
algorithms from Gupta, Roth, and Ullman~\cite{GRU12}.  Although we have not
presented our algorithm in their framework, the algorithm can easily be seen to
fit, and thus we will state an adapted version of their theorem without proof.

\begin{theorem}[\cite{GRU12}, Adapted] \label{thm:IDC}
  If Algorithm~\ref{alg:online} experiences at most $R$ bad rounds, and the
  parameters are set so that $\sigma \geq \frac{1000 \sqrt{R} \log(4/\delta)}{\eps
  n}$, then Algorithm~\ref{alg:online} is $(\eps, \delta)$-differentially private.
\end{theorem}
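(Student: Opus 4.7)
The plan is to recognize Algorithm~\ref{alg:online} as an instance of the iterative database construction (IDC) framework of Gupta, Roth, and Ullman~\cite{GRU12}. In that framework, the mechanism factors into (i) a data-independent update rule that refines a hypothesis $D_t$ given a sequence of ``distinguishing'' queries, and (ii) a data-dependent oracle that, per round, either certifies that $D_t$ already answers the current query accurately or returns a witness query and a noisy answer. Once the oracle is shown to be differentially private and the number of witnesses is bounded by $R$, composition over the oracle calls governs the overall privacy guarantee, because the hypothesis updates themselves are post-processing.

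First, I would enumerate every place where $D$ enters: the noisy comparisons $q_i(D) - q_i(H_t) + z_i$ with $z_i \sim \Lap(\sigma)$, the noisy answers $q_i(D) + z_i$ released in bad rounds, and the threshold resets $s_t = s + \Lap(2/\eps)$ at the start of each epoch. The multiplicative-weights step computing $D_{t+1}$ from $D_t$ and the batched average $\mathbf{u}_t$ uses only the already-released witness queries, and the sample $H_{t+1}$ is a function of $D_{t+1}$ alone; both are thus post-processing and cost no privacy. All queries involved are $(1/n)$-sensitive, including the averaged update query $\mathbf{u}_t$ used by the update rule.

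Next, I would bound the privacy cost of the data-dependent components. The sequence of noisy comparisons is an AboveThreshold / sparse-vector invocation, so its privacy cost depends only on the number of crossings (bad rounds), not on the total number of queries. Since by hypothesis there are at most $R$ bad rounds, the comparisons together with the $R$ Laplace answers and the at-most $T$ noisy threshold resets constitute $O(R)$ invocations of $(1/n)$-sensitive Laplace mechanisms with noise scale $\sigma$. Each individual mechanism is $(\eps_0, 0)$-private for $\eps_0 = O(1/(\sigma n))$, and advanced composition (Lemma~\ref{lem:composition}) yields $(\eps, \delta)$-differential privacy overall provided
\[
\eps_0 \sqrt{R \log(1/\delta)} + \eps_0^2 R \;\lesssim\; \eps,
\]
which reduces to $\sigma \gtrsim \sqrt{R}\log(1/\delta)/(\eps n)$, matching the hypothesis on $\sigma$ up to the constant 1000.

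The main obstacle is justifying that the two features distinguishing Algorithm~\ref{alg:online} from the online multiplicative weights of~\cite{HR10} --- batching queries into $\mathbf{u}_t$ before each update, and sampling $\hat{n}$ points to form $H_t$ --- do not break the IDC accounting. For batching, the key point is that $\mathbf{u}_t = \frac{1}{3s}\sum_{j\le s_t} u_j$ is itself a deterministic function of previously-released noisy witness queries, so it consumes no additional privacy beyond the releases that produced $\cU_t$; importantly, the update rule is applied \emph{after} the witnesses are drawn, so no new data access occurs at the batch boundary. For sampling, $H_t$ is a randomized function of $D_t$ that does not touch $D$ directly, hence is post-processing. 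Once these two points are formalized, the adapted statement follows from the IDC privacy theorem of~\cite{GRU12} with the parameter calculation above.
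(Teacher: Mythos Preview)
Your approach is essentially the paper's: the paper states this theorem \emph{without proof}, explicitly noting that Algorithm~\ref{alg:online} ``can easily be seen to fit'' the IDC framework of~\cite{GRU12} and that the statement is an adapted version of their theorem. Your proposal supplies the details the paper omits---identifying the data-dependent steps as sparse-vector comparisons plus at most $R$ Laplace releases, observing that the batched update and the sampling of $H_t$ are post-processing, and closing with advanced composition---and this is exactly the intended reduction.

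One small correction: the threshold resets $s_t = s + \Lap(2/\eps)$ are \emph{not} places where $D$ enters; $s$ is a fixed parameter, so $s_t$ is data-independent randomness and costs nothing for data privacy. You should drop them from your enumeration of data accesses (they matter only for the analyst-privacy argument and the accuracy analysis, not here). This does not affect your final bound, since the remaining $O(R)$ data-dependent releases already drive the composition calculation, but it cleans up the accounting.
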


\begin{theorem}
  Algorithm~\ref{alg:online} satisfies $(\eps, \delta)$-differential privacy.
\end{theorem}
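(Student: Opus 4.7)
The plan is to apply Theorem~\ref{thm:IDC} as a black box. That theorem reduces data privacy of Algorithm~\ref{alg:online} to two conditions: (i) the algorithm experiences at most $R$ bad rounds during its execution, and (ii) the noise scale $\sigma$ satisfies $\sigma \geq 1000\sqrt{R}\log(4/\delta)/(\eps n)$. Condition (i) is immediate by construction: the outer loop has the guard $r < R$, so as soon as the counter of bad rounds reaches $R$ the algorithm halts, which means the number of bad rounds it ever experiences is at most $R$ with probability $1$. Hence the entire task reduces to verifying (ii) for the specific parameter settings in the algorithm.

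For condition (ii), I would plug in the parameters from the initialization. We have
\begin{mathdisplayfull}
R = 2sT = 2 \cdot \frac{128\, n^{2/5}\sqrt{\log|\univ|\log(4k/\beta)\log(1/\delta)}}{\eps} \cdot n^{4/5}\log|\univ|,
\end{mathdisplayfull}
so
\begin{mathdisplayfull}
\sqrt{R} = O\!\left(\frac{n^{3/5}\log^{3/4}|\univ|\,\log^{1/4}(4k/\beta)\,\log^{1/4}(1/\delta)}{\eps^{1/2}}\right).
\end{mathdisplayfull}
Substituting this into the right-hand side of the condition yields
\begin{mathdisplayfull}
\frac{1000\sqrt{R}\log(4/\delta)}{\eps n} = O\!\left(\frac{\log^{3/4}|\univ|\,\log^{1/4}(4k/\beta)\,\log^{5/4}(4/\delta)}{\eps^{3/2}\, n^{2/5}}\right),
\end{mathdisplayfull}
where I have used $\log^{1/4}(1/\delta)\log(4/\delta) \leq \log^{5/4}(4/\delta)$. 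Comparing this with the choice
\begin{mathdisplayfull}
\sigma = \frac{20000\,\log^{3/4}|\univ|\,\log^{1/4}(4k/\beta)\,\log^{5/4}(4/\delta)}{\eps^{3/2}\, n^{2/5}},
\end{mathdisplayfull}
the constant $20000$ is chosen precisely to dominate the implicit constant from the $O(\cdot)$ above, so condition (ii) holds.

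With both hypotheses of Theorem~\ref{thm:IDC} satisfied, we conclude that Algorithm~\ref{alg:online} is $(\eps,\delta)$-differentially private. The only real step is the bookkeeping above; there is no subtle privacy analysis here because the heavy lifting (handling the interactive, adaptive release of noisy answers together with multiplicative-weights updates and the Laplace-noised threshold for $s_t$) is already encapsulated in the GRU framework. The mildly nontrivial point is that we must be careful to use the \emph{deterministic} upper bound $R$ on the number of bad rounds (enforced by the algorithm's early-termination guard), rather than the high-probability bound used in the accuracy proof — otherwise the reduction to Theorem~\ref{thm:IDC} would only give privacy conditional on a high-probability event, which is not what differential privacy requires.
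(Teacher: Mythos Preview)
Your proposal is correct and follows exactly the paper's approach: the paper's proof is the single sentence ``The theorem follows directly from Theorem~\ref{thm:IDC} and our choice of $R$,'' and you have simply spelled out the parameter check that makes this work. Your closing remark about needing the deterministic cap $R$ (rather than the high-probability bound from accuracy) is a correct and useful clarification of why the argument goes through unconditionally.
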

\begin{proof}
  The theorem follows directly from Theorem~\ref{thm:IDC} and our choice of $R$.
\end{proof}
\else
\begin{theorem}
Algorithm~\ref{alg:online} is $(\eps, \delta)$-differentially private.
\end{theorem}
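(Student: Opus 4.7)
The plan is to analyze the privacy of the data-dependent operations in Algorithm~\ref{alg:online} and invoke an advanced composition argument. The key observation is that $D$ is only accessed in two places: (i) the noisy threshold check $|q_i(D) - q_i(H_t) + z_i| \leq \tau$, and (ii) the noisy answer $q_i(D) + z_i$ released when the round is bad. Every other step, including the multiplicative weights update (which depends only on $\cU_t$ and previous $D_t$'s) and the sampling of $H_t$ from $D_t$, is a post-processing of these outputs. Also data-dependent is the threshold $s_t = s + \Lap(2/\eps)$ that determines how often an update is triggered, but this is done only once per epoch and contributes only a sparse-vector-style privacy cost.

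First, I would observe that the algorithm's overall interaction pattern is exactly the one captured by the sparse-vector technique combined with Laplace noise on the released ``above-threshold'' answers: good rounds cost nothing (the sparse vector comparison is $(\eps_0,0)$-DP per above-threshold event for $\eps_0 = O(1/(\sigma n))$), while bad rounds contribute two privacy losses of magnitude $O(1/(\sigma n))$ each -- one from the comparison that triggered them, and one from releasing $q_i(D) + z_i$. By the accuracy analysis (Claim~\ref{clm:dontstop}) the number of bad rounds is at most $R = 2sT$ with high probability. Similarly, the $T$ threshold samples $s_t$ each contribute $O(1/n)$ privacy loss per epoch and are bounded in number by $T \ll R/s$, so they are a lower-order term.

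Next, I would apply the advanced composition theorem (Lemma~\ref{lem:composition}) across the at-most-$R$ data-dependent events of per-event loss $\eps_0 = O(1/(\sigma n))$. This yields a total privacy loss of
\begin{mathdisplayfull}
\eps = O\!\left( \frac{\sqrt{R \log(1/\delta)}}{\sigma n} + \frac{R}{\sigma^2 n^2} \right).
\end{mathdisplayfull}
Substituting $R = 2sT$ and the chosen $\sigma$, I would verify that $\sigma$ has been set precisely so that $\sigma \gtrsim \sqrt{R \log(1/\delta)}/(\eps n)$, which makes the first term $O(\eps)$ and the second term negligible. In the full version this is packaged as the ``IDC'' theorem of Gupta--Roth--Ullman~\cite{GRU12}, and I would simply cite it after checking that our parameters satisfy its hypothesis $\sigma \geq 1000 \sqrt{R} \log(4/\delta)/(\eps n)$.

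The main obstacle is being careful about the two subtleties that distinguish this algorithm from a straight sparse-vector mechanism. First, the hypothesis $H_t$ used inside the threshold check is itself obtained by sampling from $D_t$, so I must argue that this sampling is performed on an object that is already a post-processing of previously released information, hence does not add new privacy loss. Second, the noisy epoch-length $s_t$ makes the correspondence between ``bad rounds'' and ``privacy events'' slightly random; I would handle this by conditioning on the high-probability event from Claim~\ref{clm:dontstop} that the total number of bad rounds is at most $R$, so that composition can be applied with a fixed bound on the number of events. With those two points addressed, the conclusion is immediate from composition and our choice of $\sigma$.
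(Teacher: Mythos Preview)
Your approach is essentially the same as the paper's: both reduce to citing the IDC framework of Gupta--Roth--Ullman~\cite{GRU12} (stated in the paper as Theorem~\ref{thm:IDC}) and verifying that the chosen $\sigma$ satisfies the hypothesis $\sigma \geq 1000\sqrt{R}\log(4/\delta)/(\eps n)$ for $R = 2sT$.

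Two points in your preliminary analysis are confused, though. First, $s_t = s + \Lap(2/\eps)$ is \emph{not} data-dependent: it is a random threshold independent of $D$, so it contributes nothing to the data-privacy budget. (Its noise is there for the \emph{query}-privacy argument, where the randomized epoch length hides whether $q^*$ landed in the update pool.) Second, and more importantly, you propose to ``condition on the high-probability event from Claim~\ref{clm:dontstop}'' to bound the number of bad rounds by $R$. Conditioning on a data-dependent event is not a legitimate move in a differential-privacy proof. What actually makes the argument go through is that the algorithm's main loop carries the hard guard $r < R$, so the number of bad rounds is \emph{deterministically} at most $R$ regardless of $D$; Claim~\ref{clm:dontstop} is used only for accuracy (to show the cap is not reached), not for privacy. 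Once you replace the conditioning with the hard cap, your invocation of the IDC theorem is clean and coincides with the paper's one-line proof.
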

The proof follows from the modular privacy proof
in~\cite{GRU12}.
\fi

\SUBSECTION{Query Privacy}
More interestingly, we show that this mechanism satisfies
one-query-to-many-analyst privacy.

\begin{theorem} \label{thm:qp}
  Algorithm~\ref{alg:online} is $(\epsilon, \delta)$-one-query-to-many-analyst
  private.
\end{theorem}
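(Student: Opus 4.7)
The plan is to mirror the structure of the proof of Theorem~\ref{thm:queryprivacy4counting} (analyst privacy for the one-shot mechanism), but adapted to the online, batched setting. Fix two query-adjacent sequences $\cQ_0 \sim \cQ_1$ that differ only on one query $q^* \in \cQ_{\id^*}$. The view of every analyst $\id \neq \id^*$ consists of (i) the answer to each of their own queries, and (ii) the indicator of whether that query was classified good or bad. For a good round the answer has the form $q_i(H_t)$, and for a bad round it has the form $q_i(D) + z_i$, where $z_i \getsr \Lap(\sigma)$ is drawn freshly and independently. Since the $z_i$ for their own queries are independent of $q^*$, the entire view is a fixed deterministic function of the hypothesis sequence $(H_0,H_1,\dots,H_T)$ together with post-processing noise. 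Hence it suffices to show that the joint distribution of $(H_0,\dots,H_T)$ is $(\eps,\delta)$-close in the two runs.

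Next I would set up a coupling between the two executions and track how the sequence $H_0,H_1,\dots,H_T$ evolves. If $q^*$ is classified as a good round (under the coupled noise and sampling), then $\cU_t$ is unchanged, no update is triggered by $q^*$, and the hypothesis sequence is literally identical in the two runs. The interesting case is when $q^*$ is bad. Then $q^*$ enters the buffer $\cU_t$ in some epoch $t$, which (a) shifts when the noisy threshold $|\cU_t|>s_t$ is reached by one step, and (b) changes the average query $\mathbf{u}_t=\frac{1}{3s}\sum u_j$ used in the update by at most $2/(3s)$ in $L_\infty$. For the multiplicative-weights update $D_{t+1}(x) \propto D_t(x)\exp(-(\eta/2)\mathbf{u}_t(x))$, this changes each $D_{t'}(x)$ for $t' \geq t$ by at most a multiplicative factor of $\exp(\eta/(3s))$, so the distributions $\dist{D}_{t'}$ and $\dist{D}'_{t'}$ are pointwise within $e^{\pm\eta/(3s)}$.

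The key step is to upgrade this pointwise bound on $D_t$ to a privacy bound on the entire sequence of samples $H_0,\dots,H_T$ by invoking the interactive secrecy-of-the-sample lemma (Lemma~\ref{lem:sdtodp}). Exactly as in Claim~\ref{clm:analyst1}, each individual sample drawn from $\dist{D}_t$ can be viewed as the output of a mechanism that is $\eps_0$-differentially private in $q^*$ for $\eps_0 = O(\eta/s)$, while the ``adversary'' producing the distribution $D_t$ between samples induces a per-step statistical distance controlled by $1/s_t = O(1/s)$ (the density parameter plays the same role as in Lemma~\ref{lem:projsd}, but here it is the normalized threshold $s_t$ of the batching). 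With a total of $T\hat n$ samples being drawn across the run, Lemma~\ref{lem:sdtodp} yields total privacy loss $O\!\big(\eps_0 \sqrt{T\hat n\log(1/\delta)}\big)$, and substituting the parameters $\eta = 1/n^{2/5}$, $s = \Theta(n^{2/5}\sqrt{\log|\univ|\log(1/\delta)}/\eps)$, $\hat n = \Theta(n^{4/5})$, $T = \Theta(n^{4/5}\log|\univ|)$ shows this is $\le \eps/2$.

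Finally I would handle the batching shift caused by $q^*$ joining $\cU_t$: the noisy threshold $s_t = s + \Lap(2/\eps)$ is exactly the familiar sparse-vector gadget and absorbs a shift of one in $|\cU_t|$ with an additional privacy cost of $\eps/2$, contributed by a single epoch (since the shift propagates at most until the next update triggers). The indicator bits and bad-round answers output to analysts $\neq \id^*$ depend on $q^*$ only through $H_t$, whose privacy we already established, plus fresh Laplace noise on independent queries, so they are handled by post-processing. Combining everything we get $(\eps,\delta)$-one-query-to-many-analyst privacy. The main obstacle, as in Theorem~\ref{thm:queryprivacy4counting}, is handling the fact that $q^*$ affects the hypotheses $H_t$ both through the \emph{content} of one batch update (sensitivity $O(\eta/s)$ in the MW weights, controlled by the secrecy-of-the-sample machinery) and through the \emph{timing} of epoch boundaries (controlled by the noisy threshold); getting these two sources of loss to compose to $\eps$ is the delicate bookkeeping step.
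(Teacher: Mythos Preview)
Your high-level plan matches the paper's: reduce the other analysts' view to the sequence of sampled hypotheses together with the epoch boundaries, bound the effect of $q^*$ on each multiplicative-weights distribution $D_t$ by $O(\eta/s)$ pointwise, and absorb the one-step shift in epoch timing with the Laplace noise on the threshold $s_t$. But two steps do not go through as written.

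First, the sufficient statistic $(H_0,\dots,H_T)$ is incomplete: knowing only the hypotheses does not tell you which $H_t$ is active when a given query $q_i$ arrives. The paper (after fixing the $z_i$ noise) tracks the pairs $\{(H_t,i_t)\}_{t\in[T]}$, where $i_t$ is the index of the last query in epoch $t$, and argues these pairs determine the transcript. Second, and more seriously, Lemma~\ref{lem:sdtodp} is the wrong tool, and your instantiation does not type-check. That lemma is tailored to the offline mechanism, where the query player \emph{samples} update queries from a distribution that is $(1/s)$-close in statistical distance under the change $\cQ_0\to\cQ_1$ (this is where the Bregman projection and Lemma~\ref{lem:projsd} enter). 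In the online algorithm there is no sampling of update queries and no projection: once the $z_i$'s and the prefix $(H_0,i_0,\dots,H_{t-1},i_{t-1})$ are fixed, the update set $\cU_{<t}$ is \emph{deterministic}, and the two versions differ by exactly the single query $q^*$. There is no source for the ``per-step statistical distance $1/s_t$'' you invoke. What you actually have is that each of the $\hat n T$ individual samples forming $H_1,\dots,H_T$ is drawn from a distribution satisfying $|\ln(D_t(x)/D'_t(x))|\le 2\eta/s=:\eps_0$, which is exactly the hypothesis of the plain composition lemma (Lemma~\ref{lem:composition}); that lemma directly yields the bound $\eps_0\sqrt{2\hat n T\log(1/\delta)}+2\eps_0^2\hat n T\le\eps/2$ you stated. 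The paper uses composition here, not secrecy of the sample. The remaining $\eps/2$ is spent on the single index $i_{t^*}$ for the epoch containing $q^*$, via the $\Lap(2/\eps)$ noise on $s_{t^*}$; conditioned on the history, later $i_t$'s are independent of $q^*$ and cost nothing.
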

\begin{proof}
  Fix the input database $D$ and the coins of the Laplace noise --- we will show
  that for every value of the Laplace random variables, the mechanism satisfies
  analyst privacy. Consider any two adjacent sequences of queries $\cQ_0,
  \cQ_1$.  Without loss of generality, we will assume that $\cQ = q_1, \dots,
  q_k$ and $\cQ' = q^*, q_1, \dots, q_k$.  For notational simplicity, we assume
  that every query in $\cQ$ has a fixed index, regardless of the presence of
  $q^*$.  More generally, we could identify each query in the sequence by a
  unique index (say, a long random string) that is independent of the other
  queries. We want to argue that the answers to \emph{all queries in $\cQ$} are
  private, but \emph{not} that the answer to $q^*$ is private (if it is
  requested).
  
  We will represent the answers to the queries in $\cQ$ by a sequence
  $\set{(H_t, i_t)}_{t \in [T]}$ where $H_{t}$ is the hypothesis used in the
  $t$-th epoch and $i_{t}$ is the index of the last query in that epoch (the one
  that caused the mechanism to switch to hypothesis $H_{t}$).  Observe that for
  a fixed database $D$, Laplace noise, and sequence of queries $\cQ$, we can
  simulate the output of the mechanism \emph{for all queries in $\cQ$} given
  only this information --- once we fix a hypothesis $H_t$, we can determine
  whether any query $q$ will be added to the update pool in this epoch.  So once
  we begin epoch $t$ with hypothesis $H_{t}$, we have fixed all the bad rounds,
  and once we are given $i_{t}$, we have determined when epoch $t$ ends and
  epoch $t+1$ begins.  At this point, we fix the next hypothesis $H_{t+1}$ and
  continue simulating.

  Formally, let $V_0, V_1$ be distribution over sequences $\{(H_t, i_t)\}$ when
  the query sequence is $\cQ_0, \cQ_1$, respectively.  We will show that with
  probability at least $1-\delta$, if $\set{(H_t, i_t)}_{t \in [T]}$ is drawn
  from $V_0$, then
  \begin{mathdisplayfull}
    \left| \ln\left(\fullfrac{V_0(\set{(H_t, i_t)})}{V_1(\set{(H_t, i_t)})}
    \right) \right| \leq \eps.
  \end{mathdisplayfull}

  Recall that $\cU_{t}$ is the set of queries that are used to update the
  distribution $D_{t}$ to $D_{t+1}$.  We will use $\cU_{\leq t} = \bigcup_{j =
  0}^{t} \cU_{t}$ to denote the set of all queries used to update the
  distributions $D_{0}, \dots, D_{t}$.  Notice that if $q^*$ does not get added
  to the set $\cU_{0}$, then $V_0$ and $V_1$ will be distributed identically.
  Therefore, suppose $q^* \in \cU_{0}$.  First we must reason about the joint
  distribution of the first component of the output.
  \iffull
  \else We omit the proofs. \shortpagebreak
  \fi
  \begin{claim} \label{clm:onlinequeryprv0}
    For all $H_0, i_0$,
    \begin{mathdisplayfull}
      \left| \ln \left( \fullfrac{V_0(H_0,i_0)}{V_1(H_0, i_0)} \right) \right|
      \leq \fullfrac{\eps}{2}.
    \end{mathdisplayfull}
  \end{claim}
\iffull
  \begin{proof}[\longproof of Claim~\ref{clm:onlinequeryprv0}]
    Since $H_0$ does not depend on the query sequence, it will be identically
    distributed in both cases.  Once $H_0$ is fixed, we can determine whether a
    query $q$ will cause an update.  Fix query $q_{i_0}$ and assume that it is
    the $s$-th update query in the sequence $q_1, \dots, q_k$ and  the
    $(s+1)$-st update query in the sequence $q^*, q_1, \dots, q_k$.  Then
    $V_0(i_0 | H_0) = \prob{s_0 = s}$ and $V_1(i_0 | H_0) = \prob{s_0 = s+1}$.
    By the basic properties of the Laplace distribution, $| \ln (V_0(i_0 | H_0)
    / V_1(i_0 | H_0)) | \leq \eps/100$.
  \end{proof}
\fi
  Now we reason about the remaining components $(H_1, i_1), \dots, (H_T, i_T)$.
  \begin{claim} \label{clm:onlinequeryprv1}
    For every $H_0, i_0$, with probability at least $1-\delta$ over the choice
    of components $v = (H_1, i_1, \dots, H_T, i_T) \getsr (V_0 \mid v_{t-1})$,
    we have
    \begin{mathdisplayfull}
      \left|\ln \left( \fullfrac{V_0(v \mid H_0, i_0)}{V_1(v \mid H_0, i_0)}
      \right) \right| \leq  \fullfrac{\eps}{2}.
    \end{mathdisplayfull}
  \end{claim}
\iffull
  \begin{proof}[\longproof of Claim~\ref{clm:onlinequeryprv1}]
    We will show that $v$ is the $\hat{n}T$-fold composition of $(\eps_0,
    0)$-differentially private mechanisms for suitable $\eps_0$.  Fix a prefix
    $v_{t-1} = H_0, i_0, \dots, H_{t-1}, i_{t-1}$.  Given this prefix, we can
    determine for any given sequence of queries $q_{1}, \dots, q_{i_{t-1}}$ or
    $q^*, q_1, \dots, q_{i_{t-1}}$ which queries are in the update set.
    Moreover, if $\cU_{< t}$ is the set of all update queries from the first
    query sequence, and $\cU'_{< t}$ is the set of all update queries from the
    second sequence, then $\cU_{< t} \triangle \cU'_{< t} = q^*$.

    Now consider the distribution of $H_t$.  Each sample in $H_t$ comes from the
    distribution $D_t$, which is either $$ D_{t}(x) \propto \exp\left( -( \eta/
    s) \sum_{u \in \cU_{< t}} u \right) \qquad \textrm{or} \qquad D'_{t}(x)
    \propto \exp\left( -(\eta/s) \frac{1}{s} \sum_{u \in \cU'_{< t}} u \right)
    $$ Given this, it is easy to see that for any $x$ we have $| \ln( D_{t}(x) /
    D'_{t}(x) ) | \leq 2\eta / s := \eps_{0}$.  Notice that once $i_{t-1}$ and
    $H_{t}$ are fixed, $i_{t}$ depends only on the choice of $s_t$ (the number
    of bad rounds to allow before updating the hypothesis), which is independent
    of the query sequence and thus incurs no additional privacy loss.  Thus the
    only privacy loss comes from the $\hat{n}$ samples in each of the $T$
    epochs, and the mechanism is a $\hat{n}T$-fold adaptive composition of
    $(\eps_0, 0)$ differentially private mechanisms.  A standard composition
    analysis (Lemma~\ref{lem:composition}) shows that the components $v$ are
    $(\eps', \delta)$-DP for $\eps' = \eps_0 \sqrt{2 \hat{n} T \log(1/\delta)} +
    2\eps_0^2 T \leq \eps/2$.  This completes the proof of the claim.
  \end{proof}
\fi
  Combining these two claims proves the theorem.
\end{proof}

\SUBSECTION{Handling Arbitrary Low-Sensitivity Queries} \label{sec:onlinelowsens}
We can also modify this mechanism to answer arbitrary $\Delta$-sensitive
queries, albeit with worse accuracy bounds.  As with our offline algorithms, we
modify the algorithm to run the multiplicative weights updates over the set of
databases $\univ^n$ and adjust the parameters.  When we run multiplicative
weights over a support of size $|\univ|^n$ (rather than $|\univ|$), the number
of epochs increases by a factor of $n$, which in turn affects the amount of
noise we have to add to ensure privacy.
\iffull

We will now sketch the argument, ignoring the parameters $\beta$ and $\delta$
for simplicity.  In order to get convergence of the multiplicative weights
distribution, we need to take $T \approx \frac{n \log |\univ|}{\eta^2}$ and in
order to ensure that $H_{t}$ approximates $D_{t}$ sufficiently well, we take
$\hat{n} \approx \sqrt{\frac{\log k}{\eta^2}}$.  Recall that to argue
\emph{analyst privacy}, we viewed the mechanism as being (essentially) the
$\hat{n} T$-fold composition of $\eps_0$-analyst private mechanisms, where
$\eps_0 = \eta/s$.  In order to get analyst privacy, we needed
\begin{align*}
  \frac{\eta}{s} \lesssim \frac{\eps}{ \sqrt{\hat{n}T}} \approx \frac{\eps
  \eta^2}{\sqrt{n \log |\univ|} \log^{1/4} k} \\
  \Longrightarrow s \gtrsim \frac{\sqrt{n \log |\univ|} \log^{1/4} k}{\eps
  \eta}.
\end{align*}
Once we have set $s$ (as a function of the other parameters) to achieve analyst
privacy, we can work on establishing data privacy.  As before, the number of bad
rounds will be
\[
R \approx sT \approx \frac{n^{3/2} \log^{3/2} |\univ| \log^{1/4} k}{\eps \eta^3}.
\]
Given this bound on the number of bad rounds, we need to set
\[
\sigma \approx \frac{\Delta \sqrt{R}}{\eps} \approx \frac{\Delta n^{3/4}
\log^{3/4} |\univ| \log^{1/8} k}{\eps \eta^{3/2}}
\]
to obtain data privacy, and
\[
\tau \approx \sigma \log k \approx \frac{\Delta n^{3/4} \log^{3/4} |\univ|
\log^{9/8} k}{\eps \eta^{3/2}}
\]
to ensure that all the update queries truly have large error on the current
hypothesis $H_{t}$.

The final error bound will come from observing that $\eta$ and $\tau$ are both
lower bounds on the error.  The error is bounded below by $\tau$ because that is
the noise threshold set by the algorithm, and $\tau$ must be larger than $\eta$
or else we cannot argue that multiplicative weights makes progress during update
rounds.  Thus setting $\eta = \tau$ will approximately minimize the error.
\else
We omit this calculation for lack of space.
\fi

The final error bound we obtain (ignoring the parameters $\beta$ and $\delta$)
is
\[
  O\left( \frac{\Delta^{2/5} n^{3/10} \log^{3/10} |\univ| \log^{9/20}
k}{\eps^{2/5}} \right),
\]
which gives a non-trivial error guarantee when $\Delta \ll 1/n^{3/4}$.

\section{Conclusions}
We have shown that it is possible to privately answer many queries while also
preserving the privacy of the data analysts even if multiple analysts may
collude, or if a single analyst may register multiple accounts with the data
administrator. In the one-query-to-many-analyst privacy for linear queries in
the non-interactive setting, we are able to recover the nearly optimal
$\tilde{O}(1/\sqrt{n})$ error bound achievable without promising analyst
privacy. However, it remains unclear whether this bound is achievable for
one-analyst-to-many-analyst privacy, or for non-linear queries, or in the
interactive query release setting.

We have also introduced a novel view of the private query release problem as an
equilibrium computation problem in a two-player zero-sum game. This allows us to
encode different privacy guarantees by picking strategies of the different
players and the neighboring relationship on game matrices (i.e., differing in a
single row for analyst privacy, or differing by $1/n$ in $\ell_\infty$ norm for
data privacy).  We expect that this will be a useful point of view for other
problems. In this direction, it is known how to privately compute equilibria in
certain types of multi-player games \cite{KPRU12}. Is there a useful way to use
this multi-player generalization when solving problems in private data release,
and what does it mean for privacy?

\SUBSECTION{Acknowledgments}
We thank the anonymous reviewers, along with Jake Abernethy and Salil Vadhan,
for their helpful suggestions.

\iffull
\bibliographystyle{alpha}
\else
\bibliographystyle{acm}
\fi
\bibliography{./refs}

\newcommand{\etalchar}[1]{$^{#1}$}
\begin{thebibliography}{MMP{\etalchar{+}}10}

\bibitem[BDMN05]{BDMN05}
A.~Blum, C.~Dwork, F.~McSherry, and K.~Nissim.
\newblock Practical privacy: the sulq framework.
\newblock In {\em PODS}, 2005.

\bibitem[BHK09]{BHK09}
Boaz Barak, Moritz Hardt, and Satyen Kale.
\newblock The {U}niform {H}ardcore {L}emma via {A}pproximate {B}regman
  {P}rojections.
\newblock In {\em SODA}, pages 1193--1200, 2009.

\bibitem[BLR08]{BLR08}
A.~Blum, K.~Ligett, and A.~Roth.
\newblock A learning theory approach to non-interactive database privacy.
\newblock In {\em STOC}, 2008.

\bibitem[DMNS06]{DMNS06}
C.~Dwork, F.~McSherry, K.~Nissim, and A.~Smith.
\newblock {Calibrating noise to sensitivity in private data analysis}.
\newblock In {\em TCC}, 2006.

\bibitem[DN03]{DN03}
Irit Dinur and Kobbi Nissim.
\newblock Revealing information while preserving privacy.
\newblock In {\em PODS}, 2003.

\bibitem[DNR{\etalchar{+}}09]{DNRRV09}
C.~Dwork, M.~Naor, O.~Reingold, G.N. Rothblum, and S.~Vadhan.
\newblock On the complexity of differentially private data release: efficient
  algorithms and hardness results.
\newblock In {\em STOC}, 2009.

\bibitem[DNV12]{DNV12}
Cynthia Dwork, Moni Naor, and Salil~P. Vadhan.
\newblock {T}he {P}rivacy of the {A}nalyst and {T}he {P}ower of the {S}tate.
\newblock In {\em FOCS}, 2012.

\bibitem[DRV10]{DRV10}
Cynthia Dwork, Guy~N. Rothblum, and Salil~P. Vadhan.
\newblock {B}oosting and {D}ifferential {P}rivacy.
\newblock In {\em FOCS}, pages 51--60, 2010.

\bibitem[FS96]{FS}
Y.~Freund and R.E. Schapire.
\newblock Game theory, on-line prediction and boosting.
\newblock In {\em Proceedings of the ninth annual conference on Computational
  learning theory}, pages 325--332. ACM, 1996.

\bibitem[GHRU11]{GHRU11}
Anupam Gupta, Moritz Hardt, Aaron Roth, and Jonathan Ullman.
\newblock Privately releasing conjunctions and the statistical query barrier.
\newblock In {\em STOC}, 2011.

\bibitem[GRU12]{GRU12}
A.~Gupta, A.~Roth, and J.~Ullman.
\newblock Iterative constructions and private data release.
\newblock In {\em TCC}, 2012.

\bibitem[HLM12]{HLM12}
Moritz Hardt, Katrina Ligett, and Frank McSherry.
\newblock A simple and practical algorithm for differentially private data
  release.
\newblock {\em BIPS}, 2012.

\bibitem[HR10]{HR10}
Moritz Hardt and Guy~N. Rothblum.
\newblock {A} {M}ultiplicative {W}eights {M}echanism for {P}rivacy-{P}reserving
  {D}ata {A}nalysis.
\newblock In {\em FOCS}, pages 61--70, 2010.

\bibitem[HRU12]{HRU12}
Justin Hsu, Aaron Roth, and Jonathan Ullman.
\newblock Differential privacy for the analyst via private equilibrium
  computation.
\newblock {\em CoRR}, abs/1211.0877, 2012.

\bibitem[KLN{\etalchar{+}}11]{KLNRS07}
S.P. Kasiviswanathan, H.K. Lee, K.~Nissim, S.~Raskhodnikova, and A.~Smith.
\newblock What can we learn privately?
\newblock {\em SIAM Journal on Computing}, 40(3):793--826, 2011.

\bibitem[KPRU12]{KPRU12}
M.~Kearns, M.M. Pai, A.~Roth, and J.~Ullman.
\newblock Mechanism design in large games: Incentives and privacy.
\newblock {\em arXiv preprint arXiv:1207.4084}, 2012.

\bibitem[MMP{\etalchar{+}}10]{mcgregor}
A~McGregor, I~Mironov, T~Pitassi, O~Reingold, K~Talwar, and S~Vadhan.
\newblock The limits of two-party differential privacy.
\newblock In {\em Foundations of Computer Science (FOCS), 2010 51st Annual IEEE
  Symposium on}, pages 81--90. IEEE, 2010.

\bibitem[MT07]{MT07}
Frank McSherry and Kunal Talwar.
\newblock Mechanism design via differential privacy.
\newblock In {\em FOCS}, 2007.

\bibitem[Rot11]{RothNotes}
Aaron Roth.
\newblock The algorithmic foundations of data privacy, course notes.
\newblock \url{http://www.cis.upenn.edu/~aaroth/courses/privacyF11.html}, 2011.

\bibitem[RR10]{RR10}
Aaron Roth and Tim Roughgarden.
\newblock Interactive privacy via the median mechanism.
\newblock In {\em STOC}, 2010.

\bibitem[RS12]{RakhlinNotes}
A.~Rakhlin and K.~Sridharan.
\newblock {Statistical Learning Theory and Sequential Prediction}.
\newblock
  \url{http://www-stat.wharton.upenn.edu/~rakhlin/courses/stat928/stat928\_notes.pdf},
  2012.

\bibitem[Ull12]{Ullman}
Jonathan Ullman.
\newblock Answering $n^{2+o(1)}$ counting queries with differential privacy is
  hard.
\newblock {\em CoRR}, abs/1207.6945, 2012.

\end{thebibliography}

\iffull
\appendix

\section{Proof of Lemma~\ref{lem:sdtodp}}~\label{sec:app1}
First we restate the lemma.  Consider the following process:
\begin{itemize}
  \item Fix an $(\eps, \delta)$-differentially private mechanism $\cA \from
    \cU^* \to \cR$ and a bit $b \in \bits$.  Let $D_0 = \emptyset$.
  \item For $t = 1, \dots T$
    \begin{itemize}
      \item The (possibly randomized) adversary $\cB(y_1,\dots,y_{t}; r)$
        chooses two distributions $B^0_t, B^1_t$ such that $SD(B^0_t, B^1_t)
        \leq \sigma$.
      \item Choose $x_{t} \getsr B^b_{t}$ and let $D_t = D_{t-1} \cup
        \set{x_{t}}$.
      \item Choose $y_{t} \getsr \cA(D_{t})$.
    \end{itemize}
\end{itemize}
For a fixed mechanism $\cA$ and adversary $\cB$, let $V^0$ be the distribution
on $(y_1, \dots, y_{T})$ when $b = 0$ and $V^1$ be the distribution on $(y_1,
\dots, y_{T})$ when $b = 1$.
\begin{lemma}
  If $\eps \leq 1/2$ and $T \sigma \leq 1/12$, then with probability at least $1
  - T\delta - \delta'$ over $y = (y_1, \dots, y_T) \getsr V^0$,
  \[
    \left| \ln \left( \frac{V^0(y)}{V^1(y)} \right) \right| \leq \eps (T\sigma)
    \sqrt{2T \log(1/\delta')} + 30\eps^2(T\sigma) T.
  \]
\end{lemma}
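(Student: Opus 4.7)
The plan is to prove this via a coupling argument combined with group privacy and the adaptive composition theorem (Lemma~\ref{lem:composition}). The high-level idea is that under an optimal coupling of the two processes, the sequences of inserted rows feeding the composed mechanism differ in only about $T\sigma$ positions, and group privacy translates this into the stated privacy loss bound.

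First, I would fix the randomness of the adversary $\cB$ and construct a maximal coupling of the $(x^0_t, x^1_t)$ pairs. At each step $t$, conditioned on the shared history $y_{<t}$, the adversary produces the same pair $(B^0_t, B^1_t)$ in both worlds; the maximal coupling gives $\Pr[x^0_t \neq x^1_t \mid y_{<t}] \leq SD(B^0_t, B^1_t) \leq \sigma$. In the joint process, the marginals of $(y_1, \ldots, y_T)$ in the two worlds are exactly $V^0$ and $V^1$. Letting $\cA^{\mathrm{comp}}$ denote the $T$-fold adaptive composition of $\cA$ (taking the row insertions as inputs), we have $y \sim \cA^{\mathrm{comp}}(x^b)$ with $x^b = (x^b_1, \ldots, x^b_T)$.

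Next I would bound $K := |\{t : x^0_t \neq x^1_t\}|$. Since $\Pr[x^0_t \neq x^1_t \mid \text{history}] \leq \sigma$ pointwise, $K$ is stochastically dominated by $\mathrm{Bin}(T,\sigma)$; using $T\sigma \leq 1/12$ and a Chernoff/Markov argument, $K \leq T\sigma$ with probability at least $1 - \delta'/2$ after adjusting constants. Condition on this event. The input sequences $x^0, x^1$ to $\cA^{\mathrm{comp}}$ now differ in at most $K$ positions, so by group privacy each round of $\cA$---originally $(\eps,\delta)$-differentially private under single-row changes---is $(K\eps, K e^{(K-1)\eps}\delta)$-differentially private under the $K$-row change, and since $K\eps \leq 1/24$, this is at most $(K\eps, 2K\delta)$-differentially private per round.

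Applying Lemma~\ref{lem:composition} to the $T$-fold composition with per-round parameters $\eps_0 = K\eps$ and $\delta_0 = 2K\delta$, with probability at least $1 - T\delta_0 - \delta'/2$ the privacy loss is at most $\eps_0\sqrt{8T\log(2/\delta')} + 2\eps_0^2 T$. Substituting $K \leq T\sigma$ yields $\eps(T\sigma)\sqrt{8T\log(2/\delta')} + 2\eps^2(T\sigma)^2 T$; using $T\sigma \leq 1/12$ one has $T\delta_0 = 2T(T\sigma)\delta \leq T\delta/6 \leq T\delta$ and $(T\sigma)^2 \leq T\sigma/12$, so the total failure probability is at most $T\delta + \delta'$ and the claimed bound $\eps(T\sigma)\sqrt{2T\log(1/\delta')} + 30\eps^2(T\sigma)T$ follows once we absorb the $\sqrt{8/2}=2$ discrepancy and other constant-factor slack into the generous coefficient $30$.

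The main obstacle is that $K$ is an adaptively determined random variable rather than a constant, so group privacy cannot be applied literally; I would handle this by performing the group-privacy-plus-composition step pointwise on each realization of the coupling random variables, and union-bounding over the low-probability event $\{K > T\sigma\}$. The adversary's adaptivity with respect to the past $y$'s is already accommodated by Lemma~\ref{lem:composition}, so no additional argument is needed there.
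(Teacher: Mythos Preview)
Your coupling instinct is right and matches the paper's first move, but the step ``$K \leq T\sigma$ with probability at least $1-\delta'/2$'' is a genuine gap. Since $T\sigma \leq 1/12 < 1$ and $K$ is a nonnegative integer, the event $\{K \leq T\sigma\}$ is exactly $\{K=0\}$, which has probability $(1-\sigma)^T \approx 1 - T\sigma$, a fixed constant independent of $\delta'$. No Chernoff or Markov adjustment will push this to $1-\delta'/2$ for arbitrary $\delta'$. If instead you bound $K$ by something you \emph{can} control with high probability, say $K \leq c\log(1/\delta')$, then after group privacy and composition the leading term becomes $c\eps\log(1/\delta')\sqrt{T\log(1/\delta')}$, which has no $T\sigma$ factor and is far weaker than the stated bound.

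The paper avoids this by never passing through a tail bound on $K$. It conditions on the realized value $w$ of the number of ``differing'' coin flips (which is a function of the adversary's internal randomness and hence identically distributed under $V^0$ and $V^1$), uses group privacy to get a per-round loss of $w\eps$, applies Azuma as in \cite{DRV10} to bound the cumulative loss by $w\eps\sqrt{2T\log(1/\delta')} + w\eps\min\{e^{w\eps}-1,1\}T$, and then takes the \emph{expectation} over $w\sim\mathrm{Bin}(T,\sigma)$. The factor $T\sigma$ is exactly $\mathbb{E}[w]$; the second term requires a short calculation splitting the sum at $w=1/\eps$ and using $(eT\sigma)^w$ tail decay. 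A minor separate point: your last sentence tries to absorb a multiplicative factor of $2$ in the first (Azuma) term into the additive coefficient $30$ on the second term, which is not a valid manipulation; the two terms are independent and the constants in each must be tracked on their own.
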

\begin{proof}
  Given distributions $B^0, B^1$ such that $SD(B^0, B^1) \leq \sigma$, there
  exist distributions $C^0, C^1, C$ such that $B^0 = \sigma C^0 + (1-\sigma) C$
  and $B^1 = \sigma C^1 + (1-\sigma) C$.  An alternative way to sample from the
  distribution $B^b$ is to flip a coin $c \in \bits$ with bias $\sigma$, and if
  the coin comes up $1$, sample from $C^b$, otherwise sample from $C$.

  Consider a partial transcript $(r, y_1, \dots, y_{t-1})$.  Fixing the
  randomness of the adversary will fix the coins $c_1, \dots, c_{T}$, which
  determine whether or not the adversary samples from $C^b_{j}$ or $C_{j}$ for
  $j\in [T]$.  Let $w = \sum_{j=1}^{T} c_j$.  Fixing the randomness of the
  adversary and $y_{1}, \dots, y_{t-1}$ will also fix the distributions $C_{j}$
  for $j \leq t$ and, in rounds for which $c_j = 0$, will fix the samples $x_j$
  for $j \leq t$.  If we let $D^0_t, D^1_t$ denote the database $D_t$ in the
  case where $b = 0, 1$, respectively, then we have $$|D^0_t - D^1_t| \leq
  \sum_{j=1}^{t} c_{j} \leq \sum_{j=1}^{T} c_{j} = w.$$  Thus,
  \[
    \left|\ln\left(\frac{V^0_t(y_{t} | r, y_1,\dots, y_{t-1})}{V^1_t(y_{t} | r,
    y_1, \dots, y_{t-1})}\right)\right| \leq w \eps,
  \]
  and
  \[
    \ex{\ln\left(\frac{V^0_t(y_{t} | r, y_1,\dots, y_{t-1})}{V^1_t(y_{t} | r, y_1,
    \dots, y_{t-1})}\right)} \leq w\eps\min\left\{e^{w \eps} - 1, 1\right\},
  \]
  where the expectation is taken over $V^0_{t} | r, y_1, \dots, y_{t-1}$.

  Fix $w \in \set{0,\dots,T}$.  Conditioning on any $r$ such that
  $\sum_{t=1}^{T} c_t = w$, we can apply Azuma's inequality as in~\cite{DRV10}
  to obtain
  \[
    D^{T\delta + \delta'}_{\infty}(V^0 | w || V^1 | w) \leq w\eps
    \sqrt{2T\log(1/\delta')} + w \eps \min\left\{ e^{w \eps} - 1, 1\right\} T.
  \]
  Thus,
  \begin{align}
    D^{T\delta + \delta'}_{\infty}(V^0 || V^1)
    &\leq{} \sum_{w = 1}^{T} \prob{w} \left( w\eps \sqrt{2T\log(1/\delta')} + w
    \eps \min\left\{ e^{w \eps} - 1, 1\right\} T \right) \notag \\
    &={} \sum_{w = 1}^{T} \prob{w} w\eps \sqrt{2T\log(1/\delta')} +
    \sum_{w=1}^{T} \prob{w} w \eps \min\left\{ e^{w \eps} - 1, 1\right\} T.
    \label{eq:bigsum}
  \end{align}
  First, we consider the left sum in~\eqref{eq:bigsum}.
  \begin{align*}
    &\sum_{w=1}^{T} \prob{w} w\eps \sqrt{2T\log(1/\delta')} \\
    ={} &\eps \sqrt{2T\log(1/\delta')} \sum_{w = 1}^{T} \binom{T}{w} \sigma^w
    (1-\sigma)^{T-w} w \\
    ={} &\eps \sqrt{2T\log(1/\delta')} (T\sigma) \sum_{w =
    0}^{T-1}\binom{T-1}{w} \sigma^{w} (1-\sigma)^{T-1-w} \tag{$\binom{T}{w}w =
    \binom{T-1}{w-1}T$} \\
    ={} &\eps \sqrt{2T\log(1/\delta')} (T\sigma)
  \end{align*}
  Now, we work on the right sum in~\eqref{eq:bigsum}.
  \begin{align*}
    &\sum_{w = 1}^{T} \prob{w} \left( w \eps \min\left\{ e^{w \eps} - 1, 1\right\} T \right) \\
    ={} &\sum_{w = 1}^{T} \binom{T}{w} \sigma^w (1-\sigma)^{T-w} \left( w \eps
    \min\left\{ e^{w \eps} - 1, 1\right\} T \right) \\
    ={} &\left(4\eps^2T\right) \sum_{w = 1}^{1/\eps} \binom{T}{w}  \sigma^w
    (1-\sigma)^{T-w} w + (\eps T) \sum_{w = 1/\eps}^{T} \binom{T}{w}  \sigma^w
    (1-\sigma)^{T-w} w \\
    ={} &\left(4\eps^2T\right) \sum_{w = 1}^{1/\eps}
    \left(\frac{eT\sigma}{w}\right)^w w^2 + (\eps T) \sum_{w = 1/\eps}^{T}
    \left(\frac{eT\sigma}{w}\right)^w  w \\
    \leq{} &\left(4\eps^2T\right) \sum_{w = 1}^{1/\eps} \left(eT\sigma \right)^w
    + (\eps T) \sum_{w = 1/\eps}^{T} \left(eT \sigma \right)^w  \tag{$w^2/w^w
    \leq 1$ for $w \in \N$}\\
    \leq{} &4\eps^2T (2eT\sigma) + 2(eT\sigma)^{-1/\eps} \eps T \leq 3\eps^2T
    \tag{$eT\sigma \leq 1/4$} \\
    \leq{} &24 \eps^2 (T\sigma) T + 4 \eps^2 (T\sigma) T \leq 30 \eps^2 (T\sigma) T
  \end{align*}
  Combining our bounds for the left and right sums in~\eqref{eq:bigsum}
  completes the proof.
\end{proof}
\fi
\end{document}